\renewcommand{\d}{{\mathrm d}}
\newcommand{\im}{\mathrm{i}}
\newcommand{\e}{\mathrm{e}}
\def\pf{\mathop{\mathrm{pf}}\limits}
\newtheorem{theo}{Theorem}[section]
\newtheorem{lem}[theo]{Lemma}
\newtheorem{rem}[theo]{Remark}
\newtheorem{problem}[theo]{Riemann-Hilbert Problem}
\newtheorem{prop}[theo]{Proposition} 
\newtheorem{cor}[theo]{Corollary} 
\newtheorem{definition}[theo]{Definition}
\begin{document}
\title{On Fredholm Pfaffians and Riemann-Hilbert problems}
\author{Thomas Bothner}
\address{School of Mathematics, University of Bristol, Fry Building, Woodland Road, Bristol, BS8 1UG, United Kingdom}
\email{thomas.bothner@bristol.ac.uk}
\author{Amari Jaconelli}
\address{School of Mathematics, University of Bristol, Fry Building, Woodland Road, Bristol, BS8 1UG, United Kingdom}
\email{amari.jaconelli@bristol.ac.uk}

\date{\today}

\keywords{Fredholm Pfaffians, Riemann-Hilbert problems, asymptotic analysis.}

\dedicatory{Dedicated with pleasure to Percy Deift on his 80th birthday}

\subjclass[2010]{Primary 47B35; Secondary 30E25, 34E05}


\begin{abstract}
It is shown how classes of Fredholm Pfaffians can be computed in terms of canonical, auxiliary Riemann-Hilbert problems as soon as the main kernel in the Pfaffian is either of additive Hankel composition or of truncated Wiener-Hopf type. Akhiezer-Kac asymptotic results for the Fredholm Pfaffians are then derived as natural consequences of the Riemann-Hilbert characterisation.
\end{abstract}

\maketitle

\section{Introduction}

In September $2025$, we celebrated Percy Deift's $80$th birthday. In this paper, we will use two of his favourite tools to derive some interesting results about Fredholm Pfaffians. Namely, we utilise the commutation formula for two bounded linear transformations $A\in\mathcal{L}(X,Y),B\in\mathcal{L}(Y,X)$ between Banach spaces $X,Y$, i.e.
\begin{equation}\label{c1}
	(I_Y-AB)^{-1}=I_Y+A(I_X-BA)^{-1}B,
\end{equation}
in the sense that if $I_X-BA$ is invertible, then so is $I_Y-AB$ and $I_Y+A(I_X-BA)^{-1}B$ gives $(I_Y-AB)^{-1}$. Formula \eqref{c1} is very useful in mathematical physics, cf. \cite{D}, and we will apply its most obvious consequence
\begin{equation}\label{c2}
	\sigma(AB)\setminus\{0\}=\sigma(BA)\setminus\{0\},
\end{equation}
about the spectra of $AB\in\mathcal{L}(Y)$ and $BA\in\mathcal{L}(X)$. In fact, we will study integral operators $AB$ and $BA$ with sufficiently good spectral properties, and for those \eqref{c2} yields equality of their respective, possibly regularised, Fredholm Pfaffians. So, thanks to \eqref{c1}, we will be in a position to compute Fredholm Pfaffians by transforming them to Fredholm determinants of simpler operators. Still, even those simpler operator determinants are mostly challenging to analyse, especially when asymptotic questions are posed. It is at that moment when we make contact with another one of Percy's favourites: Riemann-Hilbert matrix factorisation problems (RHPs). For, assuming that our kernel functions in the simpler determinants posses convenient algebraic features (that is Hankel composition or truncated Wiener-Hopf features), we will be able to compute the determinants in terms of a canonical RHP. What results is an opportune pathway to deriving asymptotics for the original Fredholm Pfaffian via Deift-Zhou nonlinear steepest descent techniques, cf. \cite{DZ}. 
\begin{rem}
What we achieve in the following, by applying Percy's teachings, is an analytic spin on the beautiful random walker analysis of Fredholm Pfaffians by FitzGerald-Tribe-Zaboronski \cite{FTZ} - itself inspired by calculations of Mark Kac \cite{Ka} in the $1950$s. We obtain, using an analytic approach based on commutation and RHPs, our main results in \eqref{c19},\eqref{c20} and in \eqref{c33},\eqref{c34}. Out of those, \eqref{c20} and \eqref{c34} are partially known from \cite{FTZ}, \eqref{c19} and \eqref{c33}, as well as the intermediate results \eqref{c3},\eqref{c7} are novel.
\end{rem}
We begin by introducing our main players:
consider a $2\times 2$ matrix-valued kernel
\begin{equation}\label{a1}
	M(x,y)=\begin{bmatrix}M_{11}(x,y) & M_{12}(x,y)\\ M_{21}(x,y) & M_{22}(x,y)\end{bmatrix},\ \ \ \ \ \ x,y\in\Delta:=\bigcup_{j=1}^m(a_{2j-1},a_{2j})\subset\mathbb{R},\ \ m\in\mathbb{N},
\end{equation}
with $-\infty<a_1<a_2<\ldots<a_{2m-1}<a_{2m}<\infty$ and let $M$ denote the integral operator on $L^2(\Delta)\oplus L^2(\Delta)$ with kernel $M(x,y)$. Assuming the diagonal $M_{11},M_{22}$ are trace class integral operators on $L^2(\Delta)$ and the off-diagonal $M_{12},M_{21}$ are Hilbert-Schmidt class integral operators on $L^2(\Delta)$, we consider the \textit{regularised $2$-determinant}, cf. \cite[Chapter $9$]{S}, of $M$ in the form
\begin{equation}\label{a2}
	D_{M,2}(\Delta):=\det\Big((I-M)\e^{M}\Big)\e^{-\textnormal{tr}(M_{11}+M_{22})},
\end{equation}
with ``$\det$\,'' as the ordinary Fredholm determinant of $(I-M)\e^{M}=I+\textnormal{trace class on}\,L^2(\Delta)\oplus L^2(\Delta)$. Since $M_{jj}\in\textnormal{trace class on}\,L^2(\Delta)$, the exponential on the far right in \eqref{a2} is well-defined and if all four $M_{ij}\in\textnormal{trace class on}\,L^2(\Delta)$, then
\begin{equation*}
	D_{M,2}(\Delta)=\det(I-M)=:D_M(\Delta),
\end{equation*}
so the regularised $2$-determinant \eqref{a2} extends the definition of $D_M(\Delta)$ to the class of block operators of the form \eqref{a1} with trace class diagonal and Hilbert-Schmidt class off-diagonal. Moving ahead, let us further consider a $2\times 2$ matrix-valued skew-symmetric kernel
\begin{equation}\label{a3}
	K(x,y)=\begin{bmatrix}K_{11}(x,y) & K_{12}(x,y)\smallskip\\ K_{21}(x,y) & K_{22}(x,y)\end{bmatrix},\ \ \ x,y\in\Delta,
\end{equation}
that induces an integral operator $K$ on $L^2(\Delta)\oplus L^2(\Delta)$. Skew-symmetry of $K(x,y)$ means
\begin{equation}\label{a4}
	K_{11}(x,y)=-K_{11}(y,x),\ \ \ \ K_{22}(x,y)=-K_{22}(y,x),\ \ \ \ K_{21}(x,y)=-K_{12}(y,x)\ \ \ \textnormal{on}\ \ \Delta\times\Delta.
\end{equation}
Assuming the off-diagonal $K_{12},K_{21}$ are trace class integral operators on $L^2(\Delta)$ and $K_{11},K_{22}$ are Hilbert-Schmidt on the same space, we define the \textit{regularised $2$-Pfaffian}, cf. \cite[(B.21)]{OQR} of $K$ in the form
\begin{equation}\label{a5}
	P_{K,2}(\Delta):=\pf\Big(\e^{-\frac{1}{2}JK}(J+JKJ)\e^{-\frac{1}{2}KJ}\Big)\e^{\frac{1}{2}\textnormal{tr}(K_{21}-K_{12})},
\end{equation}
with ``$\textnormal{pf}$\,'' as the ordinary Fredholm Pfaffian, see \cite[Lemma $8.1$]{R}, \cite[Proposition B.4]{OQR} or \eqref{p1}, of 
\begin{equation*}
	\e^{-\frac{1}{2}JK}(J+JKJ)\e^{-\frac{1}{2}KJ}=J+\textnormal{trace class on}\,L^2(\Delta)\oplus L^2(\Delta).
\end{equation*}
In \eqref{a5}, $J$ denotes the integral operator on $L^2(\Delta)\oplus L^2(\Delta)$ with distributional kernel 
\begin{equation*}
	J(x,y)=\begin{bmatrix}0&\delta_y(x)\\ -\delta_y(x) & 0\end{bmatrix},\ \ \ \ \ \ \ \ \ \int_{\Delta}f(x)\delta_y(x)\d x=f(y),\ \ \ y\in\Delta\cup\partial\Delta,
\end{equation*}
the exponent in the far right of \eqref{a5} is well-defined by assumption, and if all $K_{ij}\in\textnormal{trace class on}\,L^2(\Delta)$, then
\begin{equation*}
	P_{K,2}(\Delta)\stackrel{\eqref{a5}}{=}\textnormal{pf}\Big(\e^{-\frac{1}{2}JK}(J+JKJ)\e^{-\frac{1}{2}KJ}\Big)\e^{\frac{1}{2}\textnormal{tr}(JK)}=\textnormal{pf}(J+JKJ)=\textnormal{pf}(J-K),
\end{equation*}
where \cite[(B.19),(B.20)]{OQR} are used in the second and third equality together with the relation $(JK)^{\ast}=KJ$ for the real adjoint of $JK$. Consequently, the regularised $2$-Pfaffian \eqref{a5} extends the definition of $\textnormal{pf}(J-K)$ to the class of block operators \eqref{a3} with trace class off-diagonal and Hilbert-Schmidt class diagonal. What's more, the square of the regularised $2$-Pfaffian \eqref{a5} is equal to a regularised $2$-determinant \eqref{a2}, namely
\begin{equation}\label{a6}
	\big(P_{K,2}(\Delta)\big)^2=D_{M,2}(\Delta),\ \ \ \ \ \ M(x,y):=-(JK)(x,y)=\begin{bmatrix}-K_{21}(x,y) & -K_{22}(x,y)\smallskip\\ K_{11}(x,y) & K_{12}(x,y)\end{bmatrix},\ \ \ \ x,y\in\Delta,
\end{equation}
which is a direct consequence of \cite[Proposition B.4]{OQR}. In short, $P_{K,2}$ provides a square root of $D_{M,2}$, generalising Cayley's $1847$ result on Pfaffians of skew-symmetric matrices.
\section{Main objectives and first results}
We seek classes of integral operators $K_{ij}$ on $L^2(\Delta)$ for which the regularised $2$-Pfaffian, or $2$-determinant, in \eqref{a6} admits a Riemann-Hilbert characterisation; in the sense that we want the same operator Pfaffian, for those $K_{ij}$, to be computable in terms of a canonical RHP. That this ought to be possible for suitable kernel classes was hinted at in recent years in theoretical physics, cf. \cite{Kra}, in probability theory, cf. \cite{FTZ}, and in integrable systems theory, cf. \cite{Bo}. And prior to those works, several special cases of the general framework that we offer here were also investigated - we will cite those special cases as we progress. For now we begin by singling out two classes of $M$ or $K$ in \eqref{a6} for which commutation \eqref{c2} simplifies the underlying Fredholm Pfaffian.

\subsection*{The symplectic derived class}
Assume $K$ to be of \textit{symplectic derived type}, cf. \cite[Definition $3.9.18$]{AGZ}, which says that $K_{ij}$ in \eqref{a3} are of the form
\begin{equation}\label{z3}
	K_{21}(x,y)=-S(x,y),\ \ \ \ K_{22}(x,y)=\frac{\partial}{\partial y}S(x,y),\ \ \ \ \frac{\partial}{\partial x}K_{11}(x,y)=S(x,y)\ \ \ \textnormal{on}\ \ \Delta\times\Delta,
\end{equation}
in terms of a function $S:\Delta\times\Delta\rightarrow\mathbb{R}$, the \textit{main kernel}, that is continuously differentiable on $\Delta\times\Delta$, that obeys the dominance assumptions
\begin{equation}\label{z4}
	|S(x,y)|\leq g_1(y),\ \ \ \bigg|\frac{\partial}{\partial x}S(y,x)\bigg|\leq g_2(y)\ \ \ \ \ \ \forall\,(x,y)\in \Delta\times \Delta\ \ \ \textnormal{with}\ \ g_k\in L^2(\Delta),
\end{equation}
which is such that $K_{ij}$ satisfy \eqref{a4}, and such that the kernels $K_{ij}(x,y)$ induce trace class integral operators on $L^2(\Delta)$. In turn, the operator $M$ induced by the kernel $M(x,y)$ in \eqref{a6} can be written in block form
\begin{equation}\label{z5}
	M=\begin{bmatrix}S & G\\ H& S^{\ast}\end{bmatrix}:L^2(\Delta)\oplus L^2(\Delta)\rightarrow L^2(\Delta)\oplus L^2(\Delta),
\end{equation}
where $G,H:L^2(\Delta)\rightarrow L^2(\Delta)$ have skew-symmetric, trace class kernels 
\begin{equation}\label{z6}
	G(x,y)=-\frac{\partial}{\partial y}S(x,y),\ \ \ \ \ \ \frac{\partial}{\partial x}H(x,y)=S(x,y)\ \ \ \ \textnormal{on}\ \Delta\times\Delta,
\end{equation}
and $S^{\ast}$ is the real adjoint of $S$. The family of kernels \eqref{z3} is distinguished in that, by assumption $M_{ij}\in\textnormal{trace class on}\,L^2(\Delta)$, the regularised $2$-determinant of $M$ equals its ordinary Fredholm determinant
\begin{equation*}
	D_{M,2}(\Delta)=D_M(\Delta).
\end{equation*}
Furthermore we can algebraically simplify the Fredholm determinant of $M$ via commutation \eqref{c2} and we can relate it to Riemann-Hilbert problems for certain types of main kernels, compare Sections \ref{sec13} and \ref{sec14} below. 
\begin{rem} Our upcoming simplification of \eqref{a6} and connection to RHPs was first put forth in \cite[Result $2.13$]{Kra} for one specific class of $S$, and with $\Delta$ a single semi-infinite interval, see \cite[$(2.32)$]{Kra} and \eqref{c8}. On the other hand, for any bounded single interval $\Delta$ and any symplectic derived kernel \eqref{z3}, \cite[$(3.9.75)$]{AGZ} presents an algebraic simplification of \eqref{a6}, but no subsequent RHP connection. Of course, the first concrete $S$ on a single interval $\Delta$ for which \eqref{a6} was algebraically simplified using \eqref{c2} is due to Tracy and Widom in \cite{TW}. 
\end{rem}
Commutation \eqref{c2} gives the following.

\begin{prop} Let $D_M(\Delta)$ denote the Fredholm determinant of $M$ in \eqref{z5} on $L^2(\Delta)\oplus L^2(\Delta)$, with the aforementioned assumptions placed on $M_{ij}$. Then we have the identity
\begin{equation}\label{c3}
	D_M(\Delta)=D_{2S}(\Delta)\det\big(\delta_{jk}-F_{jk}(\Delta)\big)_{j,k=1}^{2m},\ \ \ \ \ \textnormal{with}\ \ F_{jk}(\Delta):=(-1)^k\big((I-2S^{\ast})^{-1}H\big)(a_k,a_j),
\end{equation}
provided $I-2S$ is invertible on $L^2(\Delta)$ and where $D_{2S}(\Delta)$ is the Fredholm determinant of the trace class operator $2S$ acting on $L^2(\Delta)$.
\end{prop}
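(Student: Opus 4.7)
The plan is to reduce $\det(I-M)$ on $L^2(\Delta)\oplus L^2(\Delta)$ to $\det(I-2S)$ on $L^2(\Delta)$ times a $2m\times 2m$ determinant, using a block Schur complement, integration by parts supplied by the symplectic derived relations \eqref{z6}, and a final application of the commutation identity \eqref{c2}.

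Since $I-S$ is invertible by assumption, so is $I-S^*=(I-S)^*$, and block LU on $I-M=\left[\begin{smallmatrix} I-S & -G\\ -H & I-S^*\end{smallmatrix}\right]$ gives $\det(I-M)=\det(I-S^*)\det\!\bigl((I-S)-G(I-S^*)^{-1}H\bigr)$; real-valuedness of $S$ yields $\det(I-S^*)=\det(I-S)$, so the task reduces to analysing the scalar operator determinant on $L^2(\Delta)$. Using $G(x,y)=-\partial_y S(x,y)$ and $\partial_x H(x,y)=S(x,y)$, integration by parts on each sub-interval $(a_{2j-1},a_{2j})$ produces the kernel identity $(GH)(x,y)=S^2(x,y)+\sum_{k=1}^{2m}(-1)^{k+1}S(x,a_k)H(a_k,y)$, hence as operators $GH=S^2+\mathcal{B}_0$ with $\mathcal{B}_0$ of rank $\le 2m$. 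Inserting the Neumann series $(I-S^*)^{-1}=\sum_{n\ge 0}(S^*)^n$ and iterating the IBP (using $\partial_y H(x,y)=-S^*(x,y)$, which is forced by skew-symmetry of $H$ together with $\partial_x H=S$) delivers the recursion $G(S^*)^n H=S\,G(S^*)^{n-1}H+\mathcal{B}_n$ with analogous rank-$\le 2m$ boundary operators $\mathcal{B}_n$. Summing over $n$ and using $(I-S)^{-1}(I-2S)=(I-S)-S^2(I-S)^{-1}$ provides the factorisation
\[
(I-S)-G(I-S^*)^{-1}H=(I-S)^{-1}\bigl[(I-2S)-\mathcal{B}\bigr],
\]
where $\mathcal{B}$ is a rank-$\le 2m$ operator with range spanned by $\{S(\cdot,a_k)\}_{k=1}^{2m}$.

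Taking determinants, cancelling $\det(I-S)$ against $\det(I-S^*)$, and using $\det(I-2S)=\det(I-2S^*)$ yields
\[
\det(I-M)=\det(I-2S)\det\!\bigl(I-(I-2S)^{-1}\mathcal{B}\bigr).
\]
Since $(I-2S)^{-1}\mathcal{B}$ has rank $\le 2m$, the second factor reduces to a $2m\times 2m$ matrix determinant. A final application of \eqref{c2}, combined with the resolvent identity $(I-2S)^{-1}-I=2S(I-2S)^{-1}$ used to repackage the $(I-S^*)^{-1}H$-parameterisation of $\mathcal{B}$ in terms of $(I-2S^*)^{-1}H$, identifies the resulting matrix entries as $F_{jk}(\Delta)=(-1)^k((I-2S^*)^{-1}H)(a_k,a_j)$.

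The main obstacle is the last identification: the direct IBP naturally produces $\mathcal{B}$ in a form involving $(I-S^*)^{-1}H$ as the boundary-kernel factor, while the proposition singles out $(I-2S^*)^{-1}H$. Since $\det(I-F)$ is invariant under the choice of rank decomposition for the same underlying finite-rank operator, matching the two forms is an algebraic manipulation of the associated Gram matrices rather than a substantive analytic step, although the combinatorics of signs $(-1)^k$ must be tracked carefully through the sum over sub-interval boundaries. A secondary technical point is justifying the IBP for $L^2$-domain functions, which follows from the assumed $C^1$-regularity of $S$ on $\Delta\times\Delta$ together with the trace class hypotheses on the $M_{ij}$; these also guarantee that the boundary evaluations $((I-2S^*)^{-1}H)(a_k,a_j)$ defining $F_{jk}$ are well-defined.
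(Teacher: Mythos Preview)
Your approach is genuinely different from the paper's. The paper does \emph{not} use a Schur complement; instead it builds an auxiliary trace class operator $N$ on $L^2(\Delta)\oplus L^2(\Delta)$ (see \eqref{z8}) and proves, via the intertwining relations \eqref{z9} and \eqref{z13}, that $\sigma(M)\setminus\{0\}=\sigma(N)\setminus\{0\}$ with matching algebraic multiplicities. This yields $D_M(\Delta)=D_N(\Delta)$ directly, after which elementary block row/column operations on $I-N$ collapse the $2\times2$ block determinant to $\det\bigl(I-2S^\ast-\sum_k(-1)^kH(\delta_{a_k}\otimes\delta_{a_k})\bigr)$ on a single copy of $L^2(\Delta)$. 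Factoring out $I-2S^\ast$ then gives \eqref{c3} in one step. (The paper also sketches, in the Remark following the proof, a shorter commutation route via the factorisation $M=\bigl[\begin{smallmatrix}D&0\\0&I\end{smallmatrix}\bigr]\bigl[\begin{smallmatrix}H&S^\ast\\H&S^\ast\end{smallmatrix}\bigr]$; this too avoids any Schur complement.)

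There are two genuine gaps in your argument. First, you invoke invertibility of $I-S$ (hence of $I-S^\ast$) to set up the Schur complement and later to form $(I-S)^{-1}$, but the proposition only assumes $I-2S$ invertible; $I-S$ need not be (take $S$ with an eigenvalue $1$). The paper's spectral route needs only the stated hypothesis. Second, your computation of $G(I-S^\ast)^{-1}H$ rests on the Neumann series $(I-S^\ast)^{-1}=\sum_{n\ge0}(S^\ast)^n$, which requires $\|S\|<1$ or an analytic continuation argument you do not supply. The recursion $G(S^\ast)^nH=S\,G(S^\ast)^{n-1}H+\mathcal{B}_n$ is in fact correct (it follows from $GS^\ast=SG-SBS^\ast$ with $B=\sum_k(-1)^k\delta_{a_k}\otimes\delta_{a_k}$, and one can sum to $\mathcal{B}=-SB(I-S^\ast)^{-1}H$), so your route would ultimately produce $D_M(\Delta)=\det\bigl(I-2S+SB(I-S^\ast)^{-1}H\bigr)$. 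But this is \emph{not} the same finite-rank perturbation as the paper's $\det\bigl(I-2S+BH\bigr)$, and reconciling the resulting $2m\times2m$ matrices with the specific entries $F_{jk}=(-1)^k\bigl((I-2S^\ast)^{-1}H\bigr)(a_k,a_j)$ is more than the bookkeeping you suggest: you would need an additional identity relating $(I-S^\ast)^{-1}H$ and $(I-2S^\ast)^{-1}H$ at the boundary points, which you have not provided.
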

One can surely view \eqref{c3} as an algebraic simplification of the initial $D_M(\Delta)$, as the right-hand side in \eqref{c3} involves a finite-size determinant and a Fredholm determinant on $L^2(\Delta)$ instead of one on $L^2(\Delta)\oplus L^2(\Delta)$. Still, any further simplification of \eqref{c3} appears to be out of reach unless more assumptions are placed on the main kernel $S(x,y)$, precisely as we will do it in Sections \ref{sec13} and \ref{sec14}. For now, we discuss our second class of $M$ or $K$ in \eqref{a6} to which commutation is applicable.

\subsection*{The orthogonal derived class}
Assume $K$ to be of \textit{orthogonal derived type}, which says that $K_{ij}$ are of the form
\begin{align}
	K_{21}(x,y)=-S(x,y),\ \ \ \ \ K_{22}(x,y)=&\,\frac{\partial}{\partial y}S(x,y),\nonumber\\
	K_{11}(x,y)=&\,H(x,y)-\epsilon(x-y)\ \ \ \ \textnormal{where}\ \ \frac{\partial}{\partial x}H(x,y)=S(x,y),\label{c4}
\end{align}
in terms of a function $S:\Delta\times\Delta\rightarrow\mathbb{R}$, the \textit{main kernel}, that is continuously differentiable on $\Delta\times\Delta$, that obeys assumptions \eqref{z4}, which is such that $K_{ij}$ satisfy \eqref{a4}, and such that the kernels $K_{21}(x,y),K_{22}(x,y)$ and $H(x,y)$ induce trace class integral operators on $L^2(\Delta)$. In \eqref{c4},
\begin{equation*}
	\epsilon(x):=\frac{1}{2}\textnormal{sgn}(x)=\begin{cases}+\frac{1}{2},&x>0\smallskip\\ -\frac{1}{2},&x<0\end{cases},\ \ \ \ \ \ \ \ \epsilon(0):=0.
\end{equation*}
What results is the following block composition form for $M$ in \eqref{a6},
\begin{equation}\label{c5}
	M=\begin{bmatrix}S&G\\ H-\epsilon & S^{\ast}\end{bmatrix}:L^2(\Delta)\oplus L^2(\Delta)\rightarrow L^2(\Delta)\oplus L^2(\Delta),
\end{equation}
where $\epsilon:L^2(\Delta)\rightarrow L^2(\Delta)$ has the Hilbert-Schmidt kernel $\epsilon(x-y)$ and where $G,H:L^2(\Delta)\rightarrow L^2(\Delta)$ have skew-symmetric kernels
\begin{equation}\label{c6}
	G(x,y)=-\frac{\partial}{\partial y}S(x,y),\ \ \ \ \ \ \ \frac{\partial}{\partial x}H(x,y)=S(x,y)\ \ \ \ \ \textnormal{on}\ \Delta\times\Delta.
\end{equation}
Also, by assumption, $M_{jj}\in\textnormal{trace class on}\,L^2(\Delta)$ and $M_{12},M_{21}\in\textnormal{Hilbert-Schmidt class on}\,L^2(\Delta)$; i.e. the regularised $2$-determinant $D_{M,2}(\Delta)$ of $M$ is well-defined and we now set out to simplify the same algebraically. 
\begin{rem} Special cases of our forthcoming algebraic simplification \eqref{c7} have previously appeared in \cite[$(2.34)$]{Kra} for a specific class of $S$, with $\Delta$ a single semi-infinite interval and in \cite[$(3.9.74)$]{AGZ} for any orthogonal derived kernel \eqref{c4} on a bounded single interval $\Delta$. The first concrete $S$ on a single interval $\Delta$ for which \eqref{c7} was computed appeared in \cite{TW}.
\end{rem}
Commutation \eqref{c2} gives the following.
\begin{prop} Let $D_{M,2}(\Delta)$ denote the regularised $2$-determinant of $M$ in \eqref{c5} on $L^2(\Delta)\oplus L^2(\Delta)$, with the aforementioned assumptions placed on $M_{ij}$. Then we have the identity
\begin{equation}\label{c7}
	D_{M,2}(\Delta)=D_S(\Delta)\det\big(\delta_{jk}-G_{jk}(\Delta)\big)_{j,k=1}^{2m},
\end{equation}
with
\begin{equation*}
	G_{jk}(\Delta):=(-1)^k\big((I-S^{\ast})^{-1}H\big)(a_j,a_k)-\frac{1}{2}\sum_{\ell=1}^{2m}\sigma_{\ell}(k)\big((I-S^{\ast})^{-1}H\big)(a_j,a_{\ell}),
\end{equation*}
provided $I-S$ is invertible on $L^2(\Delta)$ and where $D_S(\Delta)$ is the Fredholm determinant of the trace class operator $S$ on $L^2(\Delta)$. The sign coefficients $\sigma_j(k)\in\{\pm 1\}$ are defined in \eqref{z555}.
\end{prop}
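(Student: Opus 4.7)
My plan is to combine a block LU factorisation with the commutation identity \eqref{c2}, organised in three steps. First, with $L := (I-S)^{-1}$ available by hypothesis, I would factor
\begin{equation*}
I - M = \begin{bmatrix} I & 0 \\ -(H-\epsilon)L & I \end{bmatrix}\begin{bmatrix} I-S & -G \\ 0 & \mathcal{S} \end{bmatrix}, \qquad \mathcal{S} := (I - S^*) - (H-\epsilon)LG.
\end{equation*}
The lower factor is $I - X$ with $X$ strictly lower triangular, nilpotent and Hilbert-Schmidt, hence has $\det_2 = 1$. The upper factor is $I - Y$ with $Y$ trace class, because $\epsilon L G$ is a Hilbert-Schmidt$\times$trace-class composition and $S, S^*, H, G$ are all trace class by hypothesis. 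Its regularised $2$-determinant equals $D_S(\Delta)\det(\mathcal{S})\exp(2\,\textnormal{tr}(S) + \textnormal{tr}((H-\epsilon)LG))$ by block triangularity. Assembling the product via $\det_2(AB) = \det_2(A)\det_2(B)\exp(-\textnormal{tr}((A-I)(B-I)))$, the $\textnormal{tr}((H-\epsilon)LG)$ cancels and the remaining $\exp(2\,\textnormal{tr}(S))$ is absorbed by the prefactor $\exp(-\textnormal{tr}(M_{11}+M_{22}))$ in \eqref{a2}, leaving
\begin{equation*}
D_{M,2}(\Delta) = D_S(\Delta)\det(\mathcal{S}).
\end{equation*}

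Second, I would reduce $\det(\mathcal{S})$ to a finite-dimensional determinant by exploiting an algebraic identity specific to the orthogonal derived structure. Differentiating under the integral sign and using $\partial_x H(x,y) = S(x,y)$, $\partial_x \epsilon(x-y) = \delta(x-y)$ together with $(I-S)L = I$, one obtains
\begin{equation*}
\partial_x \Phi(x,y) = -\delta(x-y), \qquad \Phi := (H-\epsilon)L.
\end{equation*}
Consequently $\Phi(x,y) + \epsilon(x-y)$ is locally constant in $x$ on each connected component of $\Delta$, and $\Phi$ admits the piecewise representation $\Phi(x,y) = -\epsilon(x-y) + C_j(y)$ on the $j$th interval, where $C_j(y)$ is determined by the boundary values of $\Phi + \epsilon$ at the endpoints $\{a_k\}_{k=1}^{2m}$. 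Combining this with the companion identity $\partial_x \tilde\Phi(x,y) = -S(x,y)$ for $\tilde\Phi := (H-\epsilon)(L-I)$ should express $\mathcal{S} - I$ as an effectively finite-rank perturbation of $I$ whose structure is supported on the $2m$ endpoints and whose ingredients are evaluations of $(I-S^*)^{-1}H$ at endpoint pairs.

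Third, I would factor this perturbation as $\mathcal{S} - I = UV$ with $U\colon \mathbb{C}^{2m} \to L^2(\Delta)$ and $V\colon L^2(\Delta) \to \mathbb{C}^{2m}$, and apply commutation \eqref{c2} to obtain $\det(\mathcal{S}) = \det_{\mathbb{C}^{2m}}(I - VU)$. An entrywise match would identify the $2m \times 2m$ matrix $VU$ with $(G_{jk}(\Delta))$: the leading $(-1)^k((I-S^*)^{-1}H)(a_j, a_k)$ term emerges from the $H$-piece of $(H-\epsilon)L$, while the correction $-\tfrac{1}{2}\sum_\ell \sigma_\ell(k)((I-S^*)^{-1}H)(a_j, a_\ell)$ encodes the values $\epsilon(a_\ell - a_k) = \tfrac{1}{2}\textnormal{sgn}(a_\ell - a_k)$ of the $\epsilon$ kernel between pairs of endpoints. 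The main obstacle is the careful distributional bookkeeping around the $\epsilon$ kernel: its Hilbert-Schmidt (not trace class) character forces the $\det_2$ multiplicativity corrections of Step 1, the jump $\Phi(x,x^+) - \Phi(x,x^-) = 1$ must be reconciled with $\partial_x\Phi = -\delta(x-y)$ when extracting the finite-rank contribution in Step 2, and pinning down the signs $\sigma_\ell(k)$ of \eqref{z555} requires tracking on which side of each endpoint the interior of $\Delta$ lies.
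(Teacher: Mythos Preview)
Your Step 1 is correct and clean: the block LU factorisation together with the $\det_2$ multiplicativity formula does yield $D_{M,2}(\Delta)=D_S(\Delta)\det(\mathcal{S})$ with $\mathcal{S}=(I-S^*)-(H-\epsilon)LG$, and the trace bookkeeping checks out. The identity $\partial_x[(H-\epsilon)L](x,y)=-\delta(x-y)$ in Step 2 is also correct, and from it one can indeed show $\mathcal{S}-I$ is finite rank: writing $(H-\epsilon)L=-\epsilon+C$ with $C$ piecewise constant in $x$, one gets $\mathcal{S}=I-S^*+\epsilon G-CG$; integration by parts gives $\epsilon G=S^*+\text{(boundary terms)}$, so the $S^*$ cancels.

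The gap is Step 3. The range of $\mathcal{S}-I$ on $L^2(\Delta)$ lies in the space of \emph{piecewise constant} functions on $\Delta$, which has dimension $m$, not $2m$. The point evaluations $\delta_{a_k}$ that populate the target matrix $(G_{jk})$ are not bounded functionals on $L^2(\Delta)$, so they cannot arise from a factorisation $\mathcal{S}-I=UV$ with $V\colon L^2\to\mathbb{C}^{2m}$ in the way you suggest. Your Schur complement will produce \emph{some} finite determinant equal in value to $\det(\delta_{jk}-G_{jk})$, but the ``entrywise match'' you anticipate does not materialise; establishing the specific $2m\times 2m$ form would require a separate argument you have not supplied.

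This is precisely why the paper proceeds differently. It approximates $\epsilon$ by smooth kernels $\eta_n$, factors $M_n=\bigl[\begin{smallmatrix}D&0\\0&I\end{smallmatrix}\bigr]\bigl[\begin{smallmatrix}H&S^*\\H-\eta_n&S^*\end{smallmatrix}\bigr]$, and applies commutation \eqref{c2} to land on $W^{1,2}(\Delta)$---where $\delta_{a_k}$ \emph{is} a bounded functional. There the integration-by-parts identities $HD=S^*+\sum_k(-1)^kH(\delta_{a_k}\otimes\delta_{a_k})$ and $S^*\epsilon D=S^*+\tfrac12\sum_k\bigl(\sum_j\sigma_j(k)H\delta_{a_j}\bigr)\otimes\delta_{a_k}$ produce the finite-rank perturbation directly in the form $\sum_k f_k\otimes\delta_{a_k}$, so the $2m\times 2m$ determinant with entries $G_{jk}(\Delta)$ falls out immediately; the limit $n\to\infty$ is controlled by $\|S^*(\eta_n-\epsilon)D\|_{\textnormal{tr}}\to 0$. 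The passage through $W^{1,2}$ is not incidental: it is what makes the endpoint structure---and hence the precise form of $(G_{jk})$---visible.
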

Similar to \eqref{c3}, any further simplification of \eqref{c7} is seemingly hard to achieve unless additional assumptions are placed on $S$. We will do that momentarily.
\section{Hankel composition type and second results}\label{sec13} Both \eqref{c3} and \eqref{c7} can be further simplified, subsequently connected to a canonical RHP, and thus subjected to an asymptotic analysis, provided $\Delta=(t,\infty)\subset\mathbb{R}$ with $t\in\mathbb{R}$ and that the main kernel $S(x,y)$ in \eqref{z3},\eqref{c4} is of \textit{additive Hankel composition} type. What that means is laid out below; first for the symplectic derived class \eqref{z5}, then for the orthogonal derived one \eqref{c5}.
\subsection*{In the symplectic derived class}
Consider in \eqref{z5} the rule
\begin{equation}\label{c8}
	S(x,y):=\frac{1}{2}\bigg(\int_0^{\infty}\phi(x+u)\phi(u+y)\d u-\frac{1}{2}\phi(x)\int_y^{\infty}\phi(z)\d z\bigg),\ \ \ \ \ x,y\in\Delta=(t,\infty),
\end{equation}
with $\phi:\mathbb{R}\rightarrow\mathbb{R}$ continuously differentiable, both $\phi$ and its derivative $D\phi$ decaying exponentially fast at $+\infty$ and $D\phi$ obeying the dominance assumption
\begin{equation}\label{c9}
	|D\phi(x+y)|\leq g_3(x)\ \ \ \ \ \forall\,(x,y)\in(0,\infty)\times\mathbb{R}\ \ \ \ \textnormal{with}\ \ \ \ g_3\in L^2(0,\infty).
\end{equation}
In addition, let $H(x,y)$ in \eqref{z6} be of the form
\begin{equation}\label{c10}
	H(x,y):=-\int_x^{\infty}S(z,y)\d z,\ \ \ \ \ x,y\in\Delta=(t,\infty).
\end{equation}
If $S,G,H:L^2(t,\infty)\rightarrow L^2(t,\infty)$ denote the integral operators with kernels \eqref{c8},\eqref{z6} and \eqref{c10}, then Proposition \ref{pz3} establishes that all three are trace class and that $G(x,y),H(x,y)$ are skew-symmetric. So, \eqref{c8},\eqref{z6} and \eqref{c10} are an admissible choice in \eqref{z5} and we can simplify \eqref{c3} further.
\begin{cor}\label{beta4} Let $D_M(t)=\lim_{a_2\rightarrow\infty}D_M((t,a_2)),t\in\mathbb{R}$ denote the Fredholm determinant of $M$ in \eqref{z5} with $S,G,H$ as in \eqref{c8},\eqref{z6} and \eqref{c10}, with the aforementioned assumptions placed on $\phi$. Then
\begin{equation}\label{c11}
	D_M(t)=D_Q(t)\left[1+\frac{1}{2}\int_t^{\infty}\big((I-Q)^{-1}\phi\big)(x)\Phi(x)\d x\right],\ \ \ \ \ \Phi(x):=\int_x^{\infty}\phi(z)\d z,
\end{equation}
provided $I-2S$ and $I-Q$ are invertible on $L^2(t,\infty)$. Here, $Q:L^2(t,\infty)\rightarrow L^2(t,\infty)$ is the integral operator with trace class kernel
\begin{equation}\label{c12}
	Q(x,y):=\int_0^{\infty}\phi(x+u)\phi(u+y)\d u,\ \ \ \ \ x,y\in(t,\infty),
\end{equation}
and $D_Q(t)$ is the Fredholm determinant of $Q$ acting on $L^2(t,\infty)$.
\end{cor}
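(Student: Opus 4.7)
The plan is to apply the proposition giving \eqref{c3} on the truncated interval $\Delta=(t,a_2)$ with finite $a_2$, compute both resulting factors, and pass to the limit $a_2\to\infty$. Specifically, \eqref{c3} supplies
\begin{equation*}
D_M((t,a_2))=D_{2S}((t,a_2))\cdot\det\bigl(\delta_{jk}-F_{jk}(t,a_2)\bigr)_{j,k=1}^{2}.
\end{equation*}

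For the bulk factor, the key is the rank-one splitting $2S=Q-\tfrac12\phi\otimes\Phi$, where $(\phi\otimes\Phi)(x,y):=\phi(x)\Phi(y)$. Writing $I-2S=(I-Q)\bigl[I+\tfrac12(I-Q)^{-1}\phi\otimes\Phi\bigr]$ and applying $\det(I+u\otimes v)=1+\langle v,u\rangle$ yields
\begin{equation*}
D_{2S}((t,a_2))=D_Q((t,a_2))\left[1+\tfrac12\int_t^{a_2}\bigl((I-Q)^{-1}\phi\bigr)(x)\Phi(x)\,dx\right].
\end{equation*}
Letting $a_2\to\infty$, trace-class continuity of Fredholm determinants together with the exponential decay of $\phi$ (hence of $\Phi$) produces the bracket on the right-hand side of \eqref{c11}, multiplied by $D_Q(t)$.

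For the boundary correction $\det(\delta_{jk}-F_{jk}(t,a_2))$, the target is convergence to $1$. Three of the four entries involve the kernel of $U:=(I-2S^\ast)^{-1}H$ evaluated at the point $a_2$ in at least one slot and hence vanish by exponential decay of $\phi$ propagating through $H(x,y)=-\int_x^\infty S(z,y)dz$ and through the resolvent. The main obstacle is the remaining entry $F_{11}=-U(t,t)$, whose vanishing requires structural input from the Hankel composition form of $S$. The useful seed identity is $H(t,t)=0$, following from skew-symmetry of $H$ or equivalently from the integration-by-parts identity
\begin{equation*}
\int_0^\infty\Phi(x+u)\phi(u+y)du+\int_0^\infty\Phi(y+u)\phi(u+x)du=\Phi(x)\Phi(y).
\end{equation*}
Promoting this to $U(t,t)=0$ is where the Hankel structure enters decisively: one exploits the Sherman-Morrison decomposition of $(I-2S^\ast)^{-1}$ associated with $2S^\ast=Q-\tfrac12\Phi\otimes\phi$, the push-through identity $(I-AA^\ast)^{-1}A=A(I-A^\ast A)^{-1}$ applied to the Hankel operator $A(x,u)=\phi(x+u)$, and the representation $H=-\tfrac12\widetilde AA^\ast+\tfrac14\Phi\otimes\Phi$ with $\widetilde A(x,u)=\Phi(x+u)$. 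Combined, these ingredients reduce the evaluation $U(t,t)$ to an expression on $L^2(0,\infty)$ that vanishes by a further endpoint integration-by-parts, whereupon \eqref{c11} follows.
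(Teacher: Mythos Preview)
Your handling of the bulk factor $D_{2S}=D_Q\cdot[\text{bracket}]$ via the rank-one splitting $2S=Q-\tfrac12\,\phi\otimes\Phi$ is exactly how the paper finishes. The difference lies in showing that the boundary determinant tends to $1$, i.e.\ that $\tau(t):=\big((I-2S^\ast)^{-1}H\big)(t,t)=0$.

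The paper does not compute $\tau(t)$ directly. Instead it uses the general intertwining identity $(I-2S^\ast)H=H\bigl(I-2S-2(\delta_t\otimes\delta_t)H\bigr)$, which is a specialisation of \eqref{z7} and holds for \emph{any} symplectic-derived kernel, not just the Hankel composition case. Applying Sherman--Morrison to the right-hand side and invoking skew-symmetry $H^\ast=-H$ yields the closed algebraic relation
\[
\tau(t)=-\tau(t)+\frac{2\tau(t)^2}{1+2\tau(t)},
\]
forcing $\tau(t)\in\{0,-1\}$ for every $t$. Continuity in $t$ together with $\tau(+\infty)=0$ then selects $\tau\equiv 0$. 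No detailed Hankel structure is needed beyond the decay that justifies the limit $a_2\to\infty$.

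Your direct route---Sherman--Morrison on $2S^\ast=Q-\tfrac12\,\Phi\otimes\phi$, the push-through identity for the Hankel factor $A$, and the representation $H=-\tfrac12\,\widetilde AA^\ast+\tfrac14\,\Phi\otimes\Phi$---is plausible in principle since $\tau(t)$ does vanish, but the final step (``reduce to an expression on $L^2(0,\infty)$ that vanishes by a further endpoint integration-by-parts'') is asserted rather than carried out. Several resolvents and rank-one corrections are in play, and your sketch does not make clear which terms cancel or why the residual is identically zero. The paper's algebraic-constraint argument sidesteps this entire computation.
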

Result \eqref{c11} was first derived in theoretical physics, see \cite[$(2.32)$]{Kra}, for the class \eqref{c8}. Tracy and Widom knew \eqref{c11} for the Airy function choice of $\phi$ since the mid $1990$s, see \cite{TW}. One can derive \eqref{c11}, for the Airy function choice, from \cite[Lemma $3.9.41$]{AGZ}, too. We obtain \eqref{c11} directly from \eqref{c3} by an application of the Sherman-Morrison identity. 

\subsection*{In the orthogonal derived class}
Consider, in \eqref{c5}, the rule
\begin{equation}\label{c13}
	S(x,y):=\int_0^{\infty}\phi(x+u)\phi(u+y)\d u+\frac{1}{2}\phi(x)\bigg(1-\int_y^{\infty}\phi(z)\d z\bigg),\ \ \ \ \ x,y\in\Delta=(t,\infty),
\end{equation}
with $\phi:\mathbb{R}\rightarrow\mathbb{R}$ continuously differentiable, both $\phi$ and and its derivative $D\phi$ decaying exponentially fast at $+\infty$, and $D\phi$ obeying the dominance assumption \eqref{c9}. In addition, let $H(x,y)$ in \eqref{c6} be given by
\begin{equation}\label{c14}
	H(x,y):=-\int_x^{\infty}\left[\int_0^{\infty}\phi(z+u)\phi(u+y)\d u\right]\d z+\frac{1}{2}\int_y^x\phi(z)\d z+\frac{1}{2}\int_x^{\infty}\phi(z)\d z\int_y^{\infty}\phi(z)\d z,
\end{equation}
for $x,y\in\Delta=(t,\infty)$. What appears in \eqref{c13}, different from \eqref{c8}, is the map
\begin{equation*}
	(t,\infty)\ni x\mapsto 1-\int_x^{\infty}\phi(z)\d z,
\end{equation*}
which is in $L^{\infty}(t,\infty)$, but \textit{not} in $L^2(t,\infty)$. Also, $\epsilon:L^2(\Delta)\rightarrow L^2(\Delta)$ in \eqref{c5} is no longer Hilbert-Schmidt once $\Delta=(t,\infty)$. So, we first have to make sense of the regularised $2$-determinant of $M$; this cannot be achieved by simply taking $m=1,a_1=t$ and letting $a_2\rightarrow\infty$ in \eqref{c7} (as we did it in Corollary \ref{beta4} in the context of \eqref{c3}). Rather, compare \cite[Section $4.1.3$]{DG}, we first choose any bounded $\rho:\mathbb{R}\rightarrow(0,\infty)$ such that $\rho\in L^2(\mathbb{R})$ and $\frac{1}{\rho}$ is polynomially bounded. After that, instead of \eqref{c5}, we work with
\begin{equation}\label{c15}
	M=\begin{bmatrix}\rho^{-1}S\rho & \rho^{-1}G\rho^{-1}\\ \rho H\rho-\rho\epsilon\rho & \rho S^{\ast}\rho^{-1}\end{bmatrix}:L^2(t,\infty)\oplus L^2(t,\infty)\rightarrow L^2(t,\infty)\oplus L^2(t,\infty),
\end{equation}
where $\rho,\rho^{-1}$ act via multiplication by $\rho(x),\frac{1}{\rho(x)}$. Here, $\rho\epsilon\rho:L^2(t,\infty)\rightarrow L^2(t,\infty)$ has Hilbert-Schmidt kernel $\rho(x)\epsilon(x-y)\rho(y)$ (fortunately!) and the kernels
\begin{equation*}
	\frac{1}{\rho(x)}S(x,y)\rho(y),\ \ \ \ \frac{1}{\rho(x)}G(x,y)\frac{1}{\rho(y)},\ \ \ \ \rho(x)H(x,y)\rho(y),
\end{equation*}
in terms of \eqref{c13},\eqref{c6} and \eqref{c14}, induce trace class operators $\rho^{-1}S\rho,\rho^{-1}G\rho^{-1},\rho H\rho:L^2(t,\infty)\rightarrow L^2(t,\infty)$. This is shown in Proposition \ref{conjkernel} below and it makes the regularised $2$-determinant $D_{M,2}(t)$ of $M$ in \eqref{c15} on $L^2(t,\infty)\oplus L^2(t,\infty)$ well-defined. What's more, we can simplify $D_{M,2}(t)$  as follows, utilising \eqref{c7}:
\begin{cor} Let $D_{M,2}(t),t\in\mathbb{R}$ denote the regularised $2$-determinant of $M$ in \eqref{c15} with $S,G,H$ as in \eqref{c13},\eqref{c6},\eqref{c14}, with the aforementioned assumptions place on $\phi$. Then
\begin{equation}\label{c16}
	D_{M,2}(t)=D_Q(t)\left[1-\int_t^{\infty}\big((I-Q)^{-1}\phi\big)(x)\big(1-\Phi(x)\big)\d x\right],\ \ \ \ \Phi(x)=\int_x^{\infty}\phi(z)\d z,
\end{equation}
provided $I-S$ is invertible on $L^2(t,\infty)$ and $\|Q\|<1$ in operator norm on $L^2(t,\infty)$. Here, $D_Q(t)$ is the Fredholm determinant of $Q$, with kernel \eqref{c12}, acting on $L^2(t,\infty)$.
\end{cor}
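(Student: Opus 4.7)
The plan is to mirror the proof of Corollary~\ref{beta4}: apply Proposition~\eqref{c7} on the bounded window $\Delta=(t,a_2)$, algebraically reduce the resulting $2\times 2$ finite-size determinant using the Sherman--Morrison identity, and then pass to the limit $a_2\to\infty$. The first task is to eliminate the $\rho$-regularisation from the intermediate computation. On any bounded $\Delta$, all blocks of $M$ are already well-defined without conjugation since $\epsilon$ is Hilbert--Schmidt on $L^2(\Delta)$, and the operator in \eqref{c15} is similar to the one in \eqref{c5} via the block-diagonal $\textnormal{diag}(\rho^{-1},\rho)$. This similarity preserves $(I-M)\e^{M}$ up to conjugation (hence the Fredholm determinant), and $\textnormal{tr}(M_{11}+M_{22})=2\,\textnormal{tr}(S)$ in either presentation. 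So the two forms define the same $D_{M,2}((t,a_2))$, and the Hilbert--Schmidt approximation used to extend $D_{M,2}$ to $\Delta=(t,\infty)$ via \eqref{c15} gives $D_{M,2}(t)=\lim_{a_2\to\infty}D_{M,2}((t,a_2))$.

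Applied with $m=1$, $a_1=t$, Proposition~\eqref{c7} yields
\begin{equation*}
	D_{M,2}((t,a_2))=D_S((t,a_2))\,\det\bigl(\delta_{jk}-G_{jk}((t,a_2))\bigr)_{j,k=1}^{2},
\end{equation*}
and each $G_{jk}$ is a prescribed linear combination of the four boundary values of $R:=(I-S^{\ast})^{-1}H$. I would compute these by appealing to the explicit form \eqref{c14}: the exponential decay of $\phi$ yields $H(x,y)\to\frac{1}{2}\Phi(y)$ as $x\to\infty$ and $H(x,y)\to -\frac{1}{2}\Phi(x)$ as $y\to\infty$, with the $Q$-double integral and the $\Phi(x)\Phi(y)$ contribution vanishing at either boundary. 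Coupled with the dominance hypothesis \eqref{c9}, these propagate through $(I-S^{\ast})^{-1}$ to give explicit limiting values of $R$ at the four boundary points. The next step is Sherman--Morrison: \eqref{c13} exhibits $S$ as the rank-one perturbation $S=Q+\frac{1}{2}\phi\otimes(1-\Phi)$ of $Q$, and this decomposition expresses $(I-S^{\ast})^{-1}$ and the ratio $D_S/D_Q$ through $(I-Q^{\ast})^{-1}$ together with scalar corrections built from inner products involving $\phi$, $\Phi$, and $(I-Q)^{-1}\phi$. Expanding $\det(I_2-G)$, the multiple cross terms collapse into the single factor $1-\int_t^{\infty}((I-Q)^{-1}\phi)(x)(1-\Phi(x))\,\d x$, delivering \eqref{c16}.

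The main technical difficulty I anticipate is handling the non-$L^2$ function $1-\Phi$. This function is bounded but not square-integrable on $(t,\infty)$, which is precisely the reason for the $\rho$-regularisation in \eqref{c15} and the fact that $S$ itself fails to be trace class on $L^2(t,\infty)$. The Sherman--Morrison step must therefore be organised so that $1-\Phi$ is paired exclusively with exponentially decaying quantities such as $\phi$ or $(I-Q)^{-1}\phi$, producing finite scalar inner products and a limiting expression that makes sense in spite of the absence of $L^2$-control. Working first on the bounded interval $(t,a_2)$, where every pairing is unambiguous, and only then sending $a_2\to\infty$ with careful bookkeeping of the non-vanishing boundary limits of $H$, is what allows the argument to run cleanly.
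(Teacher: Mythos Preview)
Your overall strategy matches the paper's: remove the $\rho$-conjugation on the bounded window, apply Proposition~\eqref{c7} with $m=1$, exploit the rank-one splitting $S=Q+\tfrac12\phi\otimes(1-\Phi)$, and send $a_2\to\infty$. The organisational difference is that the paper does \emph{not} use Sherman--Morrison on $(I-S^{\ast})^{-1}$ inside the $2\times 2$ determinant. Instead, starting from the intermediate form \eqref{z61} it factors out $I-Q$ from the full finite-rank perturbation $I-S^{\ast}+\tfrac12(H\delta_t+H\delta_{a_2})\otimes(\delta_t-\delta_{a_2})$ in one stroke, producing a $3\times 3$ determinant \eqref{z62} whose entries are inner products against $(I-Q)^{-1}\phi$, $(I-Q)^{-1}\delta_t$, $(I-Q)^{-1}\delta_{a_2}$. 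This sidesteps the nested application of Sherman--Morrison and keeps the bookkeeping with $1-\Phi$ confined to a single row. Your route through the $2\times2$ with Sherman--Morrison is viable but produces the same scalar ingredients dressed up differently.

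The point you are underestimating is the final ``collapse''. After the limit $a_2\to\infty$ the paper's $3\times3$ (and equally your Sherman--Morrison expansion) depends on four a priori distinct quantities $u,p,v,r$ built from $(I-Q)^{-1}\phi$, $(I-Q)^{-1}\Phi$, and the resolvent kernel; the reduction to the single factor $1+2u-2p=1-\int_t^\infty((I-Q)^{-1}\phi)(1-\Phi)$ requires the Hankel-specific identities $p=v$ and $r=u$ in \eqref{z633}, which encode the additive composition structure of $Q$ and are quoted from \cite[Lemma~4.1]{BB}. These are not consequences of generic operator algebra and will be needed in your approach as well; you should flag them rather than asserting that the cross terms simply collapse.
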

A non-rigorous derivation of \eqref{c16} can be found in \cite[$(2.34)$]{Kra}, for the class \eqref{c13}. The special case of \eqref{c16} for $\phi$ the Airy function is due to \cite{TW} and can also be distilled from \cite[Lemma $3.9.41$]{AGZ}. We obtain \eqref{c16} through \eqref{c7}, via a limit argument,
\begin{equation*}
	D_{M,2}(t)=\lim_{a_2\rightarrow\infty}D_{M,2}((t,a_2)),
\end{equation*}
where $D_{M,2}((t,a_2))$ is the regularised $2$-determinant of $M$ in \eqref{c15} on $L^2(t,a_2)\oplus L^2(t,a_2)$ with $a_2>t$ finite. Expectedly, our derivation shows that $D_{M,2}(t)$ is independent of any concrete choice of $\rho$ in \eqref{c15}, and hence $\rho$ is suppressed from our notations.
\subsection*{RHP and asymptotic results}
The Hankel composition structure \eqref{c12} that appears in both \eqref{c11} and \eqref{c16} makes it possible to access the factors in the right hand sides of these equations through a canonical RHP. This was originally observed in \cite[Section $5.2$]{Kra} and further worked out in \cite{Bo}. To formulate the relevant RHP we require the Sobolev space
\begin{equation*}
	W^{1,1}(\mathbb{R}):=\big\{f:\,\,f,Df\in L^1(\mathbb{R})\big\},
\end{equation*}
with $D$ as weak derivative and $f$ real-valued.
\begin{problem}[Hankel RHP]\label{HankelRHP} Let $t\in\mathbb{R}$ and $\phi\in W^{1,1}(\mathbb{R})\cap L^{\infty}(\mathbb{R})$. Find $X(z)=X(z;t,\phi)\in\mathbb{C}^{2\times 2}$ so that
\begin{enumerate}
	\item[(1)] $z\mapsto X(z)$ is analytic for $z\in\mathbb{C}\setminus\mathbb{R}$.
	\item[(2)] $z\mapsto X(z)$ admits continuous pointwise limits $X_{\pm}(z):=\lim_{\epsilon\downarrow 0}X(z\pm\im\epsilon),z\in\mathbb{R}$ that satisfy
	\begin{equation*}
		X_+(z)=X_-(z)\begin{bmatrix}1-|r(z)|^2 & -\bar{r}(z)\e^{-\im tz}\smallskip\\ r(z)\e^{\im tz} & 1\end{bmatrix};\ \ \ \ \ \ \ r(z):=-\im\int_{-\infty}^{\infty}\phi(y)\e^{-\im zy}\d y,\ z\in\mathbb{R}.
	\end{equation*}
	\item[(3)] As $z\rightarrow\infty$ in $\mathbb{C}\setminus\mathbb{R}$,
	\begin{equation*}
		X(z)=\mathbb{I}+\frac{1}{z}X_1+o\big(z^{-1}\big);\ \ \ \ \ \ \ X_1=\big[X_1^{jk}(t,\phi)\big]_{j,k=1}^2.
	\end{equation*}
\end{enumerate}
\end{problem}
Provided $I-Q$ is invertible on $L^2(t,\infty)$, RHP \ref{HankelRHP} is uniquely solvable by \cite[Theorem $2.7$]{Bo} and we have
\begin{equation}\label{c17}
	\frac{\d}{\d t}\ln D_Q(t)=\im X_1^{11}(t,\phi),\hspace{1cm}X_1^{21}(t,\phi)=X_1^{12}(t,\phi)=\big((I-Q)^{-1}\phi\big)(t)=:q(t).
\end{equation}
Consequently, the first factor in the right hand sides of \eqref{c11} and \eqref{c16} admits a RHP characterisation, and the same goes for the second factors, too. Indeed, by \cite[Theorem $2.12$]{Bo}, for any $t\in\mathbb{R}$,
\begin{align*}
	1+\frac{1}{2}\int_t^{\infty}\big((I-Q)^{-1}\phi\big)(x)\Phi(x)\d x=\cosh^2\bigg(\frac{\omega(t)}{2}\bigg),\
	1-\int_t^{\infty}\big((I-Q)^{-1}\phi\big)(x)\big(1-\Phi(x)\big)\d x=\e^{-\omega(t)},
\end{align*}
in terms of $\omega(t):=\int_t^{\infty}q(s)\d s$. The RHP formulation above is convenient for asymptotic studies, especially for $D_M(t),D_{M,2}(t)$ in \eqref{c11},\eqref{c16} when $t\rightarrow-\infty$. The regime $t\rightarrow+\infty$ is much easier and can be dealt with by trace norm estimates and Neumann series arguments. We will not address this regime in the following. Instead, using a Deift-Zhou nonlinear steepest descent analysis \cite{DZ}, we obtain our first two asymptotic results.
\begin{theo}\label{Kactheo0} Suppose $\phi\in W^{1,1}(\mathbb{R})\cap L^{\infty}(\mathbb{R})$ satisfies
\begin{equation}\label{c18}
	|\phi(x)|\leq\e^{-a|x|},\ \ \ \ \ |D\phi(x)|\leq\e^{-a|x|}\ \ \ \ \ \forall\,x\in\mathbb{R},
\end{equation}
with $a\geq 2+\epsilon$ for some $\epsilon>0$. If $\|Q\|<1$ in operator norm on $L^2(t,\infty)$, then $D_M(t)$ in \eqref{c11} is given asymptotically as $t\rightarrow-\infty$ by
\begin{align}
	\ln D_M(t)=ts(0)+\int_0^{\infty}xs(x)s(-x)\d x-\frac{1}{2\pi}&\,\int_{-\infty}^{\infty}\Im\bigg\{\frac{r'(\lambda)}{r(\lambda)}\bigg\}\ln\big(1-|r(\lambda)|^2\big)\d\lambda\nonumber\\
	&+2\ln\Bigg(\frac{1}{2}\sqrt[4]{\frac{1+\im r(0)}{1-\im r(0)}}+\frac{1}{2}\sqrt[4]{\frac{1-\im r(0)}{1+\im r(0)}}\Bigg)+\mathcal{O}\big(t^{-\infty}\big),\label{c19}
\end{align}
and $D_{M,2}(t)$ in \eqref{c16} by
\begin{align}
	\ln D_{M,2}(t)=ts(0)+\int_0^{\infty}xs(x)s(-x)\d x-\frac{1}{2\pi}&\,\int_{-\infty}^{\infty}\Im\bigg\{\frac{r'(\lambda)}{r(\lambda)}\bigg\}\ln\big(1-|r(\lambda)|^2\big)\d\lambda\nonumber\\
	&+\frac{1}{2}\ln\bigg(\frac{1-\im r(0)}{1+\im r(0)}\bigg)+\mathcal{O}\big(t^{-\infty}\big),\label{c20}
\end{align}
where throughout, for $x,\lambda\in\mathbb{R}$,
\begin{equation*}
	s(x)=-\frac{1}{2\pi}\int_{-\infty}^{\infty}\ln\big(1-|r(\lambda)|^2\big)\e^{\im x\lambda}\d\lambda,\ \ \ \ \ \ \ r(\lambda)=-\im\int_{-\infty}^{\infty}\phi(y)\e^{-\im\lambda y}\d y.
\end{equation*}
\end{theo}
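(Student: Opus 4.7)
The route is to combine the factorisations \eqref{c11}, \eqref{c16} with the closed forms quoted from \cite[Theorem~2.12]{Bo}, namely
$$D_M(t)=D_Q(t)\cosh^2\!\bigl(\omega(t)/2\bigr),\qquad D_{M,2}(t)=D_Q(t)\,\e^{-\omega(t)},\qquad \omega(t):=\int_t^{\infty}q(s)\,\d s,$$
and to feed both $\partial_t\ln D_Q(t)=\im X_1^{11}(t,\phi)$ and $q(t)=X_1^{21}(t,\phi)$, which are available from \eqref{c17}, into a Deift-Zhou nonlinear steepest descent analysis of RHP \ref{HankelRHP} as $t\to-\infty$. The integration constant for $\ln D_Q$ will be pinned by the fact $\ln D_Q(t)\to 0$ as $t\to+\infty$, which is reachable through Neumann-series trace-norm bounds as remarked just before the theorem.

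For the steepest descent, the exponential decay \eqref{c18} with $a\geq 2+\epsilon$ forces $r(\lambda)$ to extend analytically to the strip $\{|\Im\lambda|<a\}$. One first conjugates $X$ by $\delta(z)^{\sigma_3}$, where
$$\delta(z):=\exp\!\bigg(\frac{1}{2\pi\im}\int_{-\infty}^{\infty}\frac{\ln(1-|r(\lambda)|^2)}{\lambda-z}\,\d\lambda\bigg),$$
so that the jump on $\mathbb{R}$ factors into upper- and lower-triangular pieces bearing $\e^{\pm\im tz}$. For $t\to-\infty$ these pieces decay in the lower and upper strips, and opening lenses into $\{0<\pm\Im z<a\}$ replaces the jump by one of size $\mathcal{O}(\e^{-c|t|})$ uniformly on compact subsets, for some $c>0$. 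The small-norm theory then delivers $X(z)=\delta(z)^{\sigma_3}\bigl(\mathbb{I}+\mathcal{O}(t^{-\infty})\bigr)$, with the rapid error made possible precisely by the strip width $a\geq 2+\epsilon$.

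Reading off $z^{-1}$ coefficients now yields the Akhiezer-Kac ingredients. The large-$z$ expansion of $\delta$ supplies, after integration in $t$ from $+\infty$, the first three terms of \eqref{c19}, \eqref{c20}: the $ts(0)$ from the leading moment of $\ln(1-|r|^2)$, the quadratic $\int_0^{\infty}xs(x)s(-x)\,\d x$ from the Szegő-type constant at the next order, and the $-\frac{1}{2\pi}\int\Im\{r'/r\}\ln(1-|r|^2)\,\d\lambda$ from the logarithmic derivative of $\delta$. In parallel, $q(s)$ inherits integrability at $-\infty$ from the small-norm correction together with the boundary value $\delta(0)$, with
$$\omega(-\infty)=\tfrac{1}{2}\ln\!\bigl((1+\im r(0))/(1-\im r(0))\bigr),$$
and substituting into $\cosh^2(\omega(-\infty)/2)$ and $\e^{-\omega(-\infty)}$ reproduces exactly the constants that distinguish \eqref{c19} from \eqref{c20}.

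The main obstacle will be the careful bookkeeping of the additive constants: extracting the $t$-linear and logarithmic terms from $\delta$ is routine, but matching the $t$-independent constant in $\ln D_Q(t)$ against its $t\to+\infty$ normalisation requires that Szegő-type cancellations be tracked through the steepest descent to sufficient order, and the closed-form evaluation of $\omega(-\infty)$ hinges on recognising the precise contribution of $\delta(0)$ to $X_1^{21}$ modulo a rapidly decaying remainder. Once these two identifications are in place, the remaining estimates are standard Deift-Zhou, granted the analyticity width $a\geq 2+\epsilon$.
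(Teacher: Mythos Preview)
Your overall architecture matches the paper's: factorise via \eqref{c11}, \eqref{c16} into $D_Q(t)\cosh^2(\omega(t)/2)$ and $D_Q(t)\e^{-\omega(t)}$, then asymptotically resolve RHP~\ref{HankelRHP} by $g$-function conjugation and lens opening to a small-norm problem at $t\to-\infty$. The leading term $ts(0)$ indeed drops out of $\partial_t\ln D_Q=\im X_1^{11}$ exactly as you say.

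Where your sketch and the paper diverge is in the two constants. For the $t$-independent part $\varpi$ of $\ln D_Q$, the paper does \emph{not} integrate in $t$ from $+\infty$ or try to read the constant from higher-order steepest-descent corrections. Instead it introduces a deformation $\phi\mapsto\sqrt{\gamma}\phi$, $\gamma\in[0,1]$, and quotes \cite[Theorem~2.16]{Bo} to produce \eqref{R11} directly. Your route is possible in principle but requires uniform control across all of $t\in\mathbb{R}$; the paper sidesteps this entirely.

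For $\omega(-\infty)$ there is a genuine gap in your proposal. Knowing that $q(t)=X_1^{21}(t)$ is $\mathcal{O}(t^{-\infty})$ at both ends does not by itself give you the total integral $\int_{-\infty}^{\infty}q(s)\,\d s$, and your suggestion that it comes from ``the contribution of $\delta(0)$ to $X_1^{21}$'' does not identify a mechanism: $X_1^{21}$ is a coefficient at $z=\infty$, not a value at $z=0$. The paper's device is the Zakharov--Shabat system \eqref{R12} satisfied by $W(z;t)=X(z;t)\e^{-\frac{\im}{2}tz\sigma_3}$. Evaluating at $z=0$ gives an explicit first-order linear ODE in $t$ for $V(t)=X_-(0;t)$ whose propagator is $\bigl[\begin{smallmatrix}\cosh\nu & \im\sinh\nu\\ -\im\sinh\nu & \cosh\nu\end{smallmatrix}\bigr]$ with $\nu(t,t_0)=\int_{t_0}^t q(s)\,\d s$. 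Since $V(+\infty)$ and $V(-\infty)$ are each computable from the respective small-norm resolutions (the $t\to+\infty$ one being the direct lens opening without $g$-function), the product $V(+\infty)V(-\infty)^{-1}$ hands you $\cosh P$ and $\sinh P$ with $P=\omega(-\infty)$, from which $\e^{-P}=\sqrt{(1-\im r(0))/(1+\im r(0))}$. It is this ODE-propagation step, not any direct reading of $\delta(0)$, that pins down the constants separating \eqref{c19} from \eqref{c20}.
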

Estimates \eqref{c19} and \eqref{c20} constitute Akhiezer-Kac type asymptotic expansions for $D_M(t)$ and $D_{M,2}(t)$. We derive them through a nonlinear steepest descent analysis of RHP \ref{HankelRHP}, for which \eqref{c18} is a convenient assumption. But \eqref{c18} is by no means necessary to obtain the leading terms in \eqref{c19},\eqref{c20}, as first discovered in \cite{FTZ} for \eqref{c20}.
\begin{rem} In order to compare \eqref{c20} with \cite[Theorem $3$]{FTZ} one first has to compare \eqref{a3},\eqref{c5},\eqref{c23} to \cite[$(10)$]{FTZ}. What results is $\rho(x)=\phi(x)$ for $\rho$ in \cite[$(10)$]{FTZ} which needs to obey $\int_{-\infty}^{\infty}\rho(x)\d x=1$. That, in turn, would mean $r(0)=-\im$ which is inadmissible in Theorem \ref{Kactheo} because of \eqref{c18}, compare the discussion immediately below \eqref{c35}. In words, there is no direct identification between $\phi$ in Theorem \ref{Kactheo} and $(p,\rho)$ in \cite{FTZ}. This explains why the fourth term in \eqref{c20} is different from the non-integral terms in \cite[Theorem $3$]{FTZ}, for we are working with a class of $\phi=\rho$ that is inadmissible in \cite[Theorem $3$]{FTZ}. Still, our first three terms in \eqref{c20} match  exactly with the sum and integral terms in \cite[Theorem $3$(i)]{FTZ}. This is to be expected by \cite[$(96)$]{FTZ} and was previously observed in \cite[Remark $2.18$]{Bo}.

\end{rem}

\begin{rem} From \eqref{c19} and \eqref{c20}, together with \eqref{a6}, we easily deduce the $t\rightarrow-\infty$ expansions for the corresponding Pfaffians $P_K(t)=P_K((t,\infty))$ and $P_{K,2}(t)=P_{K,2}((t,\infty))$.
\end{rem}

\section{Truncated Wiener-Hopf type and third results}\label{sec14} There exists at least one other class for which \eqref{c3} and \eqref{c7} can be further simplified and subsequently connected to a RHP. Namely, it works for $\Delta=(-t,t)\subset\mathbb{R}$ with $t>0$ and for a main kernel $S(x,y)$ in \eqref{z3},\eqref{c4} that is of \textit{truncated Wiener-Hopf} type. In detail, we consider the following two cases.
\subsection*{In the symplectic derived class}
Back in \eqref{z5}, choose
\begin{equation}\label{c21}
	S(x,y):=\frac{1}{4\pi}\int_{-\infty}^{\infty}\phi(u)\e^{\im u(x-y)}\d u,\ \ \ \ \ x,y\in\Delta=(-t,t),
\end{equation}
where $\phi:\mathbb{R}\rightarrow\mathbb{R}$ is an even function such that $\phi\in L^1(\mathbb{R})\cap L^{\infty}(\mathbb{R})$ and $\mathbb{R}\ni x\mapsto x\phi(x)\in L^1(\mathbb{R})\cap L^{\infty}(\mathbb{R})$. In addition, we take $H(x,y)$ in \eqref{z6} as
\begin{equation}\label{c22}
	H(x,y):=\int_y^xS(z,y)\d z,\ \ \ \ \ \ x,y\in\Delta=(-t,t).
\end{equation}
With $S,G,H:L^2(-t,t)\rightarrow L^2(-t,t)$ as integral operators induced by the kernels \eqref{c21},\eqref{z6},\eqref{c22}, Proposition \ref{pz4} ensures that they are trace class and that $G(x,y),H(x,y)$ are skew-symmetric kernels. Thus, \eqref{c21},\eqref{z6} and \eqref{c22} are admissible in \eqref{c5}, and \eqref{c3} simplifies as follows.
\begin{cor}\label{beta5} Let $D_M(t)=D_M((-t,t)),t>0$ denote the Fredholm determinant of $M$ in \eqref{z5} with $S,G,H$ as in \eqref{c21},\eqref{z6},\eqref{c22}, with the aforementioned assumptions placed on $\phi$. Then
\begin{align}
	D_M(t)=D_Q(t)&\left[1+\frac{1}{2}\int_{-t}^t\big((I-Q)^{-1}Q\big)(x,t)\d x\right]\nonumber\\
	&\hspace{0.5cm}\times\left[1-\frac{1}{2}\int_{-t}^tQ(x,t)\d x-\frac{1}{2}\int_{-t}^t\left(\int_{-x}^xQ(z,t)\d z\right)\big((I-Q)^{-1}Q\big)(x,t)\d x\right],\label{c23}
\end{align}
provided $I-2S$ is invertible on $L^2(-t,t)$. Here, $Q:L^2(-t,t)\rightarrow L^2(-t,t)$ is the integral operator with trace class kernel
\begin{equation}\label{c24}
	Q(x,y):=\frac{1}{2\pi}\int_{-\infty}^{\infty}\phi(u)\e^{\im u(x-y)}\d u\stackrel{\eqref{c21}}{=}2S(x,y),\ \ \ \ \ \ x,y\in(-t,t),
\end{equation}
and $D_Q(t)$ is the Fredholm determinant of $Q$ acting on $L^2(-t,t)$.
\end{cor}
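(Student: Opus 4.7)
The plan is to apply Proposition \eqref{c3} with $m=1$ and $\Delta=(-t,t)$, which reduces $D_M(t)$ to $D_Q(t)$ times a $2\times 2$ determinant, and then to exploit the translation invariance $Q(x,y)=f(x-y)$ (with $f$ even, since $\phi$ is even) to factor this determinant into exactly the two bracketed terms of \eqref{c23}. Because $\phi$ is real and even, $Q$ is symmetric with $Q(-x,-y)=Q(x,y)$; the resolvent equation then forces $R(-x,-y)=R(x,y)$ for $R:=(I-Q)^{-1}Q$, while a change of variables in \eqref{c22} yields $H(-x,-y)=-H(x,y)$, and propagating through $H':=(I-Q)^{-1}H$ gives $H'(-x,-y)=-H'(x,y)$. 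Setting $a:=H'(t,t)$ and $b:=H'(t,-t)$, the entries $F_{jk}=(-1)^k H'(a_k,a_j)$ of \eqref{c3} collapse to
\[
F=\begin{bmatrix}a & b\\ b & a\end{bmatrix},\qquad \det(I-F)=(1-a-b)(1-a+b),
\]
so the task becomes identifying $1-a+b$ with the first bracket in \eqref{c23} and $1-a-b$ with the second.

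For the first bracket, I would differentiate $H'$ in its second slot. The translation invariance $\partial_y Q(z,y)=-\partial_z Q(z,y)$, together with the Leibniz rule and $H(y,y)=0$, gives the clean identity $\partial_y H(x,y)=-S(x,y)$; propagating through the resolvent and invoking $I+R=(I-Q)^{-1}$ upgrades this to $\partial_y H'(x,y)=-(I-Q)^{-1}S(x,y)=-\tfrac{1}{2}R(x,y)$. Integrating in $y$ from $t$ down to $-t$ at $x=t$ yields $b-a=\tfrac{1}{2}\int_{-t}^{t}R(t,y)\,\d y$, and the symmetry $R(t,y)=R(y,t)=((I-Q)^{-1}Q)(y,t)$ makes this equal to $\alpha/2$ with $\alpha:=\int_{-t}^{t}((I-Q)^{-1}Q)(x,t)\,\d x$. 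Hence $1-a+b=1+\alpha/2$, which is exactly the first bracket in \eqref{c23}.

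For the second bracket, I would compute $a+b$ directly from the Neumann expansion $H'=H+RH$. The parity relation $Q(z,-t)=Q(-z,t)$ together with a change of variables $z\mapsto-z$ gives $H(t,t)+H(t,-t)=\tfrac{1}{2}\int_{-t}^{t}Q(z,t)\,\d z$ and, for every $w\in(-t,t)$, $H(w,t)+H(w,-t)=\tfrac{1}{2}\int_{-w}^{w}Q(z,t)\,\d z$. Summing the two contributions and invoking $R(t,w)=R(w,t)$ produces
\[
a+b=\tfrac{1}{2}\int_{-t}^{t}Q(x,t)\,\d x+\tfrac{1}{2}\int_{-t}^{t}\left(\int_{-x}^{x}Q(z,t)\,\d z\right)\bigl((I-Q)^{-1}Q\bigr)(x,t)\,\d x,
\]
so $1-a-b$ coincides with the second bracket in \eqref{c23}. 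Multiplying the two identifications delivers the claimed factorisation. The main obstacle is keeping the various parity substitutions straight, both in the initial collapse of $F$ to the symmetric form above and in the computation of $a+b$; once those reductions are secure, the two identities for $b-a$ and $a+b$ are short computations that pair naturally with the two brackets of the corollary.
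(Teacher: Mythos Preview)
Your proof is correct and shares the paper's overall architecture: apply \eqref{c3} with $m=1$, use the parity relations $Q(-x,-y)=Q(x,y)$ and $H(-x,-y)=-H(x,y)$ to collapse the $2\times2$ determinant into $(1-a-b)(1-a+b)$, and then identify the two factors separately. Your computation of $a+b$ via $H'=H+RH$ and the identity $H(w,t)+H(w,-t)=\tfrac12\int_{-w}^{w}Q(z,t)\,\d z$ is essentially the paper's argument for $F_{22}+F_{21}$.

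The one genuine difference is in the first bracket. You obtain $b-a$ by the differential identity $\partial_y H'(x,y)=-\tfrac12 R(x,y)$, which you get from $\partial_y H(x,y)=-S(x,y)$ (a clean consequence of translation invariance) and propagation through the resolvent. The paper instead writes $F_{22}-F_{21}=\langle H\delta_t,(I-Q)^{-1}(\delta_t+\delta_{-t})\rangle$, observes that $(S1)(x)=-2H(x,t)+\text{(odd in }x\text{)}$, and then uses parity of $(I-Q)^{-1}(\delta_t+\delta_{-t})$ together with the commutation of $(I-Q)^{-1}$ and $S$ to reach the same integral. Your route is shorter and avoids introducing the constant function $1$; the paper's route has the minor advantage of staying entirely within bilinear-form manipulations of the type already used for the second bracket. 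Either way the two identities are equivalent and the argument is complete.
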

A special case of \eqref{c23}, namely $\phi(x)=1$ for $x\in[-1,1]$ and $\phi(x)=0$ for $x\notin[-1,1]$, was known from \cite[$(3.9.62),(3.9.82)$]{AGZ}. We obtain \eqref{c23} for the full class of truncated Wiener-Hopf operators \eqref{c21}, a seemingly novel result, from \eqref{c3}.

\subsection*{In the orthogonal derived class}

Instead of \eqref{c21}, take
\begin{equation}\label{c25}
	S(x,y):=\frac{1}{2\pi}\int_{-\infty}^{\infty}\phi(u)\e^{\im u(x-y)}\d u\stackrel{\eqref{c24}}{=}Q(x,y),\ \ \ \ \ x,y\in\Delta=(-t,t),
\end{equation}
where $\phi:\mathbb{R}\rightarrow\mathbb{R}$ is even so that $\phi\in L^1(\mathbb{R})\cap L^{\infty}(\mathbb{R})$ and $x\phi(x)\in L^1(\mathbb{R})\cap L^{\infty}(\mathbb{R})$. The integral operators $S,G,H:L^2(-t,t)\rightarrow L^2(-t,t)$ with kernels \eqref{c25},\eqref{z6} and \eqref{c22} are trace class, compare Proposition \ref{pz4}, and $G(x,y),H(x,y)$ are skew-symmetric kernels. So, the regularised $2$-determinant $D_{M,2}(t)=D_{M,2}((-t,t))$ of $M$ in \eqref{c15} is well-defined without needing any further steps, and the same simplifies as follows.
\begin{cor}\label{beta6} Let $D_{M,2}(t)=D_{M,2}((-t,t)),t>0$ denote the regularised $2$-determinant of $M$ in \eqref{c5} with $S,G,H$ as in \eqref{c25},\eqref{z6},\eqref{c22}, with the aforementioned assumptions placed on $\phi$. Then
\begin{equation}\label{c26}
	D_{M,2}(t)=D_Q(t)\left[1-\int_{-t}^tQ(x,t)\d x-\int_{-t}^t\left(\int_{-x}^xQ(z,t)\d z\right)\big((I-Q)^{-1}Q\big)(x,t)\d x\right],
\end{equation}
provided $I-Q$ is invertible on $L^2(-t,t)$. Here, $D_Q(t)$ is the Fredholm determinant of $Q$, with kernel \eqref{c24}, acting on $L^2(-t,t)$.
\end{cor}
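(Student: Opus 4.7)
The plan is to specialize Proposition \eqref{c7} to $m=1$ with $a_1=-t$ and $a_2=t$, so that the regularised $2$-determinant $D_{M,2}(t)$ factors as $D_S(t)\cdot\det(\delta_{jk}-G_{jk})_{j,k=1}^{2}$. By \eqref{c25} the main kernel coincides with the Wiener-Hopf kernel, $S(x,y)=Q(x,y)$, hence the first factor is already $D_Q(t)$. Moreover $Q$ is symmetric on $(-t,t)\times(-t,t)$ because $\phi$ is even, so $S^{\ast}=Q$ and the resolvent appearing in the entries of $G_{jk}$ reduces to $(I-Q)^{-1}$. Invertibility of $I-Q$ is given by hypothesis, so Proposition \eqref{c7} applies.

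The second step is to evaluate the four boundary quantities $H(a_j,a_k)$ from \eqref{c22}. Directly from the definition one gets $H(-t,-t)=H(t,t)=0$, $H(-t,t)=-\int_{-t}^{t}Q(z,t)\,\d z$, and $H(t,-t)=\int_{-t}^{t}Q(z,-t)\,\d z$; the latter coincides with $\int_{-t}^{t}Q(z,t)\,\d z$ after the substitution $z\mapsto -z$ and using $\phi$ even. To compute the corner values $R_{jk}:=((I-Q)^{-1}H)(a_j,a_k)$ I would apply the resolvent identity $(I-Q)^{-1}H=H+(I-Q)^{-1}QH$, insert $H(z,a_k)=\int_{a_k}^{z}Q(s,a_k)\,\d s$, and use Fubini to swap the $s$- and $z$-integrations. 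Combined with the reflection symmetry $\bigl((I-Q)^{-1}Q\bigr)(x,y)=\bigl((I-Q)^{-1}Q\bigr)(-x,-y)$ inherited from $\phi$ even, this rearrangement will produce the inner expression $\int_{-x}^{x}Q(z,t)\,\d z$ that appears in \eqref{c26}.

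The final step is to insert these $R_{jk}$'s into the definition of $G_{jk}$ together with the sign coefficients $\sigma_{\ell}(k)\in\{\pm 1\}$ from \eqref{z555}, and then expand the $2\times 2$ determinant as $\det(I_2-G)=1-G_{11}-G_{22}+G_{11}G_{22}-G_{12}G_{21}$. The main obstacle is the algebraic consolidation of the resulting six terms: the companion formula \eqref{c23} in the symplectic class involves a product of two distinct brackets, whereas \eqref{c26} presents only a single bracket of the form $1-A-B$, with $A=\int_{-t}^{t}Q(x,t)\,\d x$ and $B=\int_{-t}^{t}\bigl(\int_{-x}^{x}Q(z,t)\,\d z\bigr)\bigl((I-Q)^{-1}Q\bigr)(x,t)\,\d x$. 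I therefore need to exhibit a cancellation of the quadratic corner terms $G_{11}G_{22}-G_{12}G_{21}$ against a portion of $G_{11}+G_{22}$, which I expect to carry out using the reflection identity above together with the symmetry $Q(x,-t)=Q(-x,t)$. Once this cancellation is verified, the bracket collapses to $1-A-B$ and \eqref{c26} follows at once.
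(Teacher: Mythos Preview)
Your overall plan---specialise \eqref{c7} to $m=1$, use $S=Q$ and $S^\ast=Q$, exploit the reflection symmetry of the Wiener--Hopf kernel---is exactly what the paper does. But you have misidentified the mechanism that collapses the $2\times 2$ determinant to a single bracket, and this is the step you flag as the ``main obstacle''.

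For $m=1$ the sign coefficients in \eqref{z555} are $\sigma_1(1)=\sigma_1(2)=-1$ and $\sigma_2(1)=\sigma_2(2)=+1$. Plugging these into the definition of $G_{jk}$ in \eqref{c7} gives, with $R_{jk}:=\big((I-Q)^{-1}H\big)(a_j,a_k)$,
\[
G_{j1}=-\tfrac12\big(R_{j1}+R_{j2}\big),\qquad G_{j2}=\tfrac12\big(R_{j1}+R_{j2}\big),
\]
so $G_{j2}=-G_{j1}$ for $j=1,2$. Equivalently, in the paper's notation $f_2=-f_1$. The matrix $G$ therefore has rank at most one, and the quadratic term $G_{11}G_{22}-G_{12}G_{21}$ vanishes \emph{identically}; it does not cancel against any linear term as you anticipate. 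The determinant is simply $1-\mathrm{tr}\,G=1-G_{11}-G_{22}$.

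The paper packages this slightly differently: using the parity symmetries $G_{11}=G_{22}$, $G_{12}=G_{21}$ it first writes $\det(I-G)=\prod_{a=\pm1}\big(1-\langle g_2,f_2+af_1\rangle\big)$, and then $f_2=-f_1$ makes the $a=+1$ factor equal to $1$. Either way, the only nontrivial computation is the single inner product $\langle(I-Q)^{-1}\delta_t,\,H(\delta_t+\delta_{-t})\rangle$, which the paper evaluates by writing $(I-Q)^{-1}=I+(I-Q)^{-1}Q$ and using $H(x,t)+H(x,-t)=\tfrac12\int_{-x}^{x}Q(z,-t)\,\d z$; this is essentially your Step~2, and it delivers the bracket in \eqref{c26} directly.
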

A derivation of \eqref{c26}, for the aforementioned indicator function $\phi(x)=1$ for $x\in[-1,1]$ and $\phi(x)=0$ for $x\notin[-1,1]$, can be found in \cite[$(3.9.62),(3.9.81)$]{AGZ}. The general case above, in the family \eqref{c25}, is seemingly novel and will follow from \eqref{c7}.

\subsection*{RHP and asymptotic results}
We exploit the translation-invariant Wiener-Hopf structure \eqref{c21},\eqref{c25} in the derivations of \eqref{c23} and \eqref{c26}, and the same makes it also possible to access all factors in the right hand sides of \eqref{c23},\eqref{c26} through a canonical RHP. The relevant matrix factorisation problem appeared first in \cite{Ko}, to the best of our knowledge, and it reads as follows.
\begin{problem}[Wiener-Hopf RHP]\label{WHRHP} Let $t>0$ and $\phi\in L^1(\mathbb{R})\cap L^{\infty}(\mathbb{R})$ be an even real-valued function such that $x\phi(x)\in L^1(\mathbb{R})\cap L^{\infty}(\mathbb{R})$. Find $X(z)=X(z;t,\phi)\in\mathbb{C}^{2\times 2}$ so that
\begin{enumerate}
	\item[(1)] $z\mapsto X(z)$ is analytic for $z\in\mathbb{C}\setminus\mathbb{R}$.
	\item[(2)] $z\mapsto X(z)$ admits pointwise limits $X_{\pm}(z):=\lim_{\epsilon\downarrow 0}X(z\pm\im\epsilon),z\in\mathbb{R}$ almost everywhere that satisfy
	\begin{equation*}
		X_+(z)=X_-(z)\begin{bmatrix}1-\phi(z) & \phi(z)\e^{2\im tz}\smallskip\\ -\phi(z)\e^{-2\im tz} & 1+\phi(z)\end{bmatrix}.
	\end{equation*}
	\item[(3)] As $z\rightarrow\infty$ in $\mathbb{C}\setminus\mathbb{R}$,
	\begin{equation*}
		X(z)=\mathbb{I}+\frac{1}{z}X_1+o\big(z^{-1}\big);\ \ \ \ \ \ \ X_1=\big[X_1^{jk}(t,\phi)\big]_{j,k=1}^2.
	\end{equation*}
\end{enumerate}
\end{problem}
It is known that, provided $I-Q$ is invertible on $L^2(-t,t)$, RHP \ref{WHRHP} is uniquely solvable and
\begin{equation}\label{c27}
	\frac{\d}{\d t}\ln D_Q(t)=-2\im X_1^{11}(t,\phi),\hspace{1cm}X_1^{21}(t,\phi)=-X_1^{12}(t,\phi)=-\im\big((I-Q)^{-1}Q\big)(-t,t)=:\frac{q(t)}{2\im}.
\end{equation}
In turn, by \eqref{c27}, the first factor in the right hand side of \eqref{c23},\eqref{c26} can be accessed via RHP \ref{WHRHP}.
\begin{rem} The first identity in \eqref{c27} follows from \cite[Lemma $1$]{Ko}. Indeed, by using the integral identity
\begin{equation*}
	\frac{1}{\pi}\int_{-\infty}^{\infty}\sin(a\xi)\e^{-\im\xi x}\frac{\d\xi}{\xi}=\begin{cases}1,&|x|<a\\ 0,&|x|>a\end{cases},\ \ \ \ \ \ a>0,\ \ x\in\mathbb{R},
\end{equation*}
see for example \cite[$17.23$]{GR}, one finds on $L^2(-t,t)$ the factorisation
\begin{equation}\label{c28}
	Q=\mathcal{F}MVM^{-1}\mathcal{F}^{-1}
\end{equation}
for $Q:L^2(-t,t)\rightarrow L^2(-t,t)$ with kernel \eqref{c24}, where $\mathcal{F}:L^2(\mathbb{R})\rightarrow L^2(\mathbb{R})$ and $\mathcal{F}^{-1}:L^2(\mathbb{R})\rightarrow L^2(\mathbb{R})$ are the bounded linear Fourier and inverse Fourier operators given by
\begin{equation*}
	(\mathcal{F}f)(x):=\int_{-\infty}^{\infty}f(y)\e^{-\im xy}\d y,\ \ \ \ \ (\mathcal{F}^{-1}f)(y):=\frac{1}{2\pi}\int_{-\infty}^{\infty}f(x)\e^{\im xy}\d x,
\end{equation*}
where $M:L^2(\mathbb{R})\rightarrow L^2(\mathbb{R})$ is multiplication $(Mf)(x):=\sqrt{\phi(x)}f(x)$, with some fixed branch for the square root, and $V:L^2(\mathbb{R})\rightarrow L^2(\mathbb{R})$ has kernel
\begin{equation}\label{c29}
	V(x,y):=\sqrt{\phi(x)}\,\frac{\sin t(x-y)}{\pi(x-y)}\sqrt{\phi(y)}.
\end{equation}
Note that $VM^{-1}:L^2(\mathbb{R})\rightarrow L^2(\mathbb{R})$ is a bounded linear operator, because $\phi\in L^1(\mathbb{R})$, and $V$ is trace class, for
\begin{equation*}
	V(x,y)=\frac{t}{2\pi}\int_{-1}^1\sqrt{\phi(x)}\e^{\im t(x-y)s}\sqrt{\phi(y)}\d s
\end{equation*}
can be realised as a Hilbert-Schmidt composition kernel. Consequently, using commutation \eqref{c2}, $D_Q(t)$ is equal to the Fredholm determinant of $V$ on $L^2(\mathbb{R})$. But $V(x,y)$ in \eqref{c29} is an integrable kernel, and so RHP \ref{WHRHP}, as well as the first identity in \eqref{c27}, follow from \cite{IIKS}. The second identity in \eqref{c27} also follows from \eqref{c28} and \cite{IIKS}.
\end{rem}
To access the second and third factors in \eqref{c23} and \eqref{c26}, we make use of $q(t)$ introduced in \eqref{c27}.
\begin{lem} Let $t>0$ and assume $I-Q$ is invertible on $L^2(-t,t)$, with the kernel of $Q$ being written in \eqref{c24} with the aforementioned assumptions placed on $\phi$. Then, with $\omega(t):=\int_0^tq(s)\d s$, we have
\begin{align}\label{c30}
	1+\frac{1}{2}&\,\int_{-t}^t\big((I-Q)^{-1}Q\big)(x,t)\d x\nonumber\\
	&\times\left[1-\frac{1}{2}\int_{-t}^tQ(x,t)\d x-\frac{1}{2}\int_{-t}^t\bigg(\int_{-x}^xQ(z,t)\d z\bigg)\big((I-Q)^{-1}Q\big)(x,t)\d x\right]=\cosh^2\bigg(\frac{\omega(t)}{2}\bigg),
\end{align}
followed by
\begin{align}\label{c31}
	1-\int_{-t}^tQ(x,t)\d x-\int_{-t}^t\bigg(\int_{-x}^xQ(z,t)\d z\bigg)\big((I-Q)^{-1}Q\big)(x,t)\d x=\e^{-\omega(t)}.
\end{align}
\end{lem}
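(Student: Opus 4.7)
The plan is to reduce \eqref{c30}--\eqref{c31} to two clean exponential identities for the right-endpoint values of two resolvent-type functions, and then verify those by differentiating in $t$ and using the only input available from the Wiener--Hopf RHP, namely $q(t)=2R(-t,t)$ from \eqref{c27} (with $R:=(I-Q)^{-1}Q$ on $L^2(-t,t)$).

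Let $\mathbf{1}$ denote the constant function $1$ on $(-t,t)$, let $F(x,t):=\int_{-x}^x Q(z,t)\,\d z$, and set
\begin{equation*}
	f(t):=\bigl((I-Q_t)^{-1}\mathbf{1}\bigr)(t),\qquad k(t):=\bigl((I-Q_t)^{-1}F(\cdot,t)\bigr)(t),
\end{equation*}
interpreted as continuous extensions to the right endpoint $x=t$. Using $R=(I-Q_t)^{-1}-I$ together with $R(x,y)=R(y,x)$ one checks
\begin{equation*}
	\int_{-t}^t R(x,t)\,\d x=f(t)-1,\qquad \int_{-t}^t F(x,t)R(x,t)\,\d x=k(t)-F(t,t).
\end{equation*}
Since $F(t,t)$ coincides with the first integral in \eqref{c31}, the LHS of \eqref{c31} telescopes to $1-k(t)$, and the LHS of \eqref{c30} reduces to $\tfrac14\bigl(1+f(t)\bigr)\bigl(2-k(t)\bigr)$. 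Because $\cosh^2(\omega/2)=\tfrac14(1+\e^\omega)(1+\e^{-\omega})$, it will suffice to prove
\begin{equation*}
	f(t)=\e^{\omega(t)}\qquad\text{and}\qquad 1-k(t)=\e^{-\omega(t)}.
\end{equation*}

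To verify these, I will use the parity invariance of $Q$: it renders $u(\cdot;t):=(I-Q_t)^{-1}\mathbf{1}$ even in the spatial variable, and the oddness of $F(\cdot,t)$ renders $(I-Q_t)^{-1}F(\cdot,t)$ odd, so $u(\pm t;t)=f(t)$ and the $k$-analogue $\kappa(\pm t;t)=\pm k(t)$ hold. Writing $f'(t)=u_x(t;t)+u_t(t;t)$ via the chain rule, differentiating $(I-Q_t)u=\mathbf{1}$ in the \emph{domain} parameter $t$ produces, on the RHS, the inhomogeneity $f(t)\bigl[Q(\cdot,t)+Q(\cdot,-t)\bigr]$, whereas differentiating in the \emph{spatial} variable $x$ and integrating by parts in $y$ (to pass $\partial_x$ onto $\partial_y$ in $\hat Q(x-y)$) yields the complementary $f(t)\bigl[Q(\cdot,-t)-Q(\cdot,t)\bigr]$. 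Inverting $(I-Q_t)$ converts $Q(\cdot,\pm t)$ into $R(\cdot,\pm t)$, so with $p(t):=R(t,t)$ and $R(t,-t)=q(t)/2$ we obtain
\begin{equation*}
	u_t(t;t)=f(t)\bigl[p(t)+q(t)/2\bigr],\qquad u_x(t;t)=f(t)\bigl[q(t)/2-p(t)\bigr],
\end{equation*}
whose sum is $f'(t)=q(t)f(t)$. The identical scheme applied to $(I-Q_t)\kappa=F(\cdot,t)$, using $\kappa(-t;t)=-k(t)$, $\partial_x F(x,t)=\hat Q(x-t)+\hat Q(x+t)$ and $\partial_t F(x,t)=\hat Q(x+t)-\hat Q(x-t)$, gives $\kappa_x(t;t)=[1-k(t)][p(t)+q(t)/2]$ and $\kappa_t(t;t)=[k(t)-1][p(t)-q(t)/2]$, whence $k'(t)=q(t)[1-k(t)]$. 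The boundary data $f(0^+)=1$, $k(0^+)=0$ (forced by $\|Q_t\|\to 0$ and $F(\cdot,t)\to 0$ as $t\downarrow 0$) and $\omega(0)=0$ then integrate the ODEs to the claimed exponentials, closing the reduction.

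The main obstacle is that both $u_x$ and $u_t$ individually depend on the diagonal resolvent value $p(t)=R(t,t)$, which is \emph{not} encoded in the RHP asymptotic data $X_1$; the argument hinges on the clean cancellation of $p(t)$ between these two pieces (and analogously in the $k$-computation), which is forced by the parity relations together with the evenness of $\hat Q$. The delicate accounting is therefore keeping straight the signs $\pm\hat Q(x\pm t)$ generated by the integrations by parts, and verifying that the boundary evaluations $u(\pm t;t)$, $\kappa(\pm t;t)$ make sense; continuity of these solutions up to the boundary follows from $\hat Q\in C(\mathbb{R})\cap L^\infty(\mathbb{R})$, which is a consequence of the standing hypotheses $\phi,\,x\phi(x)\in L^1(\mathbb{R})\cap L^\infty(\mathbb{R})$.
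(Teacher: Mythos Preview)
Your argument is correct and is essentially the approach the paper has in mind: the paper's own proof simply defers to the Tracy--Widom computations in \cite[pp.~731--736]{TW}, observing that those only require $(\partial_x+\partial_y)Q(x,y)=0$ and $Q(x,y)=Q(y,x)=Q(-x,-y)$, and you have reproduced precisely that resolvent-ODE machinery in self-contained form. The reduction to $f(t)=\e^{\omega(t)}$ and $1-k(t)=\e^{-\omega(t)}$, the parity bookkeeping, and the cancellation of the diagonal value $p(t)=R(t,t)$ between the spatial and parametric derivatives are all exactly as in Tracy--Widom; one small point worth making explicit is that the chain-rule step $f'(t)=u_x(t;t)+u_t(t;t)$ needs $C^1$-regularity of $u$ up to the moving boundary, which follows from $\hat Q\in C^1(\mathbb{R})$ (a consequence of $x\phi(x)\in L^1(\mathbb{R})$), not merely the continuity you invoke at the end.
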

Identities \eqref{c30},\eqref{c31} are analogues of the identities immediately below \eqref{c17}, for the Wiener-Hopf kernel \eqref{c24}. We will derive them in Section \ref{cool}. Once done, \eqref{c30},\eqref{c31} and \eqref{c27}, through a nonlinear steepest descent resolution of RHP \ref{WHRHP}, yield our next results.
\begin{theo}\label{Kactheo} Suppose $\phi\in L^1(\mathbb{R})\cap L^{\infty}(\mathbb{R})$ is an even function with $x\phi\in L^1(\mathbb{R})\cap L^{\infty}(\mathbb{R})$, with $|\phi(x)|<1$ for $x\in\mathbb{R}$ and 
such that
\begin{equation}\label{c32}
	z\mapsto \phi(z)\,\ \textnormal{is analytic in some neighbourhood $U$ of}\,\,\mathbb{R}\,\,\textnormal{with}\,\lim_{\substack{|z|\rightarrow\infty\\ z\in U}}\phi(z)=0.
\end{equation}
If $\|Q\|<1$ in operator norm on $L^2(-t,t)$, then, asymptotically as $t\rightarrow+\infty$, $D_M(t)$ in \eqref{c23} is given by
\begin{equation}\label{c33}
	\ln D_M(t)=-ts(0)+\frac{1}{4}\int_0^{\infty}xs(x)s(-x)\d x+2\ln\bigg(\frac{1}{2}\sqrt[4]{1-\phi(0)}+\frac{1}{2\sqrt[4]{1-\phi(0)}}\bigg)+\mathcal{O}\big(t^{-\infty}\big),
\end{equation}
and $D_{M,2}(t)$ in \eqref{c26} by
\begin{equation}\label{c34}
	\ln D_{M,2}(t)=-ts(0)+\frac{1}{4}\int_0^{\infty}xs(x)s(-x)\d x+\frac{1}{2}\ln\big(1-\phi(0)\big)+\mathcal{O}\big(t^{-\infty}\big),
\end{equation}
where throughout, for $x\in\mathbb{R}$,
\begin{equation*}
	s(x):=-\frac{1}{\pi}\int_{-\infty}^{\infty}\ln\big(1-\phi(\lambda)\big)\e^{\im x\lambda}\d\lambda.
\end{equation*}
\end{theo}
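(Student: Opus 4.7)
The plan is to combine the algebraic identities already established in the paper with a Deift-Zhou nonlinear steepest descent analysis of RHP \ref{WHRHP}. First, multiplying out \eqref{c23} via \eqref{c30}, and \eqref{c26} via \eqref{c31}, collapses the bracketed factors into elementary transcendentals:
\begin{equation*}
	\ln D_M(t) = \ln D_Q(t) + 2\ln\cosh\!\left(\tfrac{1}{2}\omega(t)\right), \qquad \ln D_{M,2}(t) = \ln D_Q(t) - \omega(t).
\end{equation*}
This reduces the theorem to two separate asymptotic tasks as $t\to+\infty$: an Akhiezer-Kac expansion for $\ln D_Q(t)$, and the determination of $\omega(\infty) = \lim_{t\to\infty}\int_0^t q(s)\,\d s$. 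By \eqref{c27}, both quantities are encoded in the $z\to\infty$ expansion of the solution to RHP \ref{WHRHP} through $\frac{\d}{\d t}\ln D_Q(t) = -2\im X_1^{11}(t,\phi)$ and $q(t) = 2\im X_1^{21}(t,\phi)$.

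Second, I would carry out the steepest descent. The central object is the Szegő-type scalar factor
\begin{equation*}
	\delta(z) := \exp\!\left(\frac{1}{2\pi\im}\int_{-\infty}^{\infty}\frac{\ln(1-\phi(\lambda))}{\lambda - z}\,\d\lambda\right),\qquad z\in\mathbb{C}\setminus\mathbb{R},
\end{equation*}
well defined because $|\phi|<1$ and $\phi\in L^1(\mathbb{R})$, satisfying $\delta_+=\delta_-(1-\phi)$ on $\mathbb{R}$ and $\delta(\infty)=1$. Setting $Y(z) := X(z)\delta(z)^{-\sigma_3}$ recasts the jump of $Y$ on $\mathbb{R}$ as a product of upper- and lower-triangular factors whose off-diagonals are $\pm\phi/(1-\phi)$ multiplied by $e^{\pm 2\im tz}$. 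Hypothesis \eqref{c32} is exactly what lets me open lenses: the two triangular pieces are pushed onto contours $\mathbb{R}\pm\im\varepsilon$, where $e^{\pm 2\im tz}$ becomes uniformly exponentially damped as $t\to+\infty$. The resulting small-norm RHP then delivers
\begin{equation*}
	X_1(t,\phi) = \delta_{(1)}\,\sigma_3 + \mathcal{O}(t^{-\infty}),\qquad \delta_{(1)} = -\frac{1}{2\pi\im}\int_{\mathbb{R}}\ln(1-\phi(\lambda))\,\d\lambda = -\frac{\im}{2}\,s(0),
\end{equation*}
so $-2\im X_1^{11}(t,\phi) = -s(0) + \mathcal{O}(t^{-\infty})$ and $q(t) = 2\im X_1^{21}(t,\phi) = \mathcal{O}(t^{-\infty})$.

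Integrating in $t$ yields $\ln D_Q(t) = -ts(0) + C_Q + \mathcal{O}(t^{-\infty})$ and $\omega(t) = \omega(\infty) + \mathcal{O}(t^{-\infty})$ for finite constants $C_Q$ and $\omega(\infty)$. The explicit value $\omega(\infty) = -\tfrac{1}{2}\ln(1-\phi(0))$ I would extract from the $z\mapsto -z$ symmetry of RHP \ref{WHRHP} forced by the evenness of $\phi$, which localises the calculation at the fixed point $z=0$ to the boundary values $\delta_\pm(0) = (1-\phi(0))^{\mp 1/2}$. For the Szegő constant $C_Q$ I plan to run a coupling deformation $\phi\mapsto\tau\phi$, $\tau\in[0,1]$: differentiating the RHP solution in $\tau$, integrating in $t$, and integrating back in $\tau$ produces the familiar bilinear form $C_Q = \tfrac{1}{4}\int_0^\infty xs(x)s(-x)\,\d x$. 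Substituting the two constants into the log-cosh and linear terms and combining with the expansion of $\ln D_Q$ yields \eqref{c33} and \eqref{c34}.

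The main obstacle I anticipate is the rigorous identification of $C_Q$ purely within the RHP framework: under \eqref{c32} every correction from the opened lenses is $\mathcal{O}(t^{-\infty})$, so the constant is invisible to any local fixed-$\phi$ analysis and must be captured via the parameter-deformation just outlined (or alternatively imported from Widom's continuous Szegő-Akhiezer-Kac theorem and cross-checked with the RHP subleading structure). The evaluation of $\omega(\infty)$ is comparatively soft once the $z=0$ symmetry is exploited, but becomes equally delicate if one tries to bypass symmetry and read it off the large-$t$ behaviour of $X_1^{21}$ alone.
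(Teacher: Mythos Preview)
Your overall architecture matches the paper's proof closely: the algebraic reduction via \eqref{c30},\eqref{c31} to $\ln D_Q(t)$ and $\omega(t)$, the $g$-function (your $\delta=\e^{-g}$) conjugation, the lens opening under \eqref{c32} to a small-norm problem, and the $\gamma$-deformation $\phi\mapsto\gamma\phi$ to capture the Szeg\H{o} constant $C_Q=\tfrac{1}{4}\int_0^\infty xs(x)s(-x)\,\d x$ are all exactly what the paper does (see \eqref{R21}--\eqref{R28}).

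The one place where your plan diverges, and where it is genuinely incomplete, is the evaluation of $\omega(\infty)$. The $z\mapsto -z$ symmetry does pin down $\delta_{\pm}(0)=(1-\phi(0))^{\pm 1/2}$, but that by itself only tells you the \emph{value} $X_-(0;t,\phi)$ in the limit $t\to+\infty$; it does not produce the total $t$-integral $\int_0^\infty q(s)\,\d s$ of the off-diagonal residue at $z=\infty$. The paper bridges this gap with the Zakharov-Shabat $t$-flow \eqref{R29} for $W(z;t)=X(z;t)\e^{\im tz\sigma_3}$, evaluated at $z=0$: the resulting first-order ODE for $V(t)=X_-(0;t)$ has propagator $\bigl[\begin{smallmatrix}\cosh\nu&\sinh\nu\\ \sinh\nu&\cosh\nu\end{smallmatrix}\bigr]$ with $\nu(t,t_0)=2\im\int_{t_0}^t X_1^{21}(s)\,\d s$. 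One then needs \emph{two} endpoint evaluations---$V(+\infty)$ from the steepest descent (this is where your $\delta_{\pm}(0)$ enters) and $V(0^+)$ from a small-$t$ Neumann expansion, see \eqref{Rz}---and the comparison $V(+\infty)V(0^+)^{-1}$ yields $\cosh\nu,\sinh\nu$ and hence $\omega(\infty)=-\tfrac12\ln(1-\phi(0))$. Your proposal omits both the Lax-pair ODE and the small-$t$ endpoint; without them the symmetry at $z=0$ is not enough to recover the total integral.
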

We derive the Akhiezer-Kac expansions \eqref{c33} and \eqref{c34} through a nonlinear steepest descent analysis of RHP \ref{WHRHP}, for which \eqref{c32} will prove a convenient assumption. As was the case for \eqref{c18}, \eqref{c32} is not a necessary assumption for the leading terms in \eqref{c33} and \eqref{c34} to hold, cf. \cite{FTZ} for \eqref{c34}.
\begin{rem} One has to be careful when comparing \eqref{c34} to \cite[Theorem $1$]{FTZ}. First, comparing \cite[$(8)$]{FTZ} with \eqref{a3},\eqref{c5},\eqref{c25} and \eqref{c22}, we relate $\rho$ in \cite[$(9)$]{FTZ} to our $\phi$ via
\begin{equation*}
	\rho(z)=\frac{1}{2\pi}\int_{-\infty}^{\infty}\phi(u)\e^{\im uz}\d u,\ \ \ z\in\mathbb{R}.
\end{equation*}
Thus, for reasonably well-behaved $\phi$, resp. $\rho$,
\begin{equation*}
	\phi(x)=\int_{-\infty}^{\infty}\rho(y)\e^{-\im yx}\d y,\ \ \ x\in\mathbb{R}.
\end{equation*}
However, $\int_{-\infty}^{\infty}\rho(x)\d x=1$ is crucial to the workings in \cite{FTZ}, so we would necessarily want $\phi(0)=1$, which in turn is inadmissible in Theorem \ref{Kactheo}. Moreover, no choice of $p\in[0,1]$ in the Fredholm Paffian in \cite[Theorem $1$]{FTZ} would remedy this obstable. So, in brief, there is no direct identification between $\phi$ in Theorem \ref{Kactheo} and $(p,\rho)$ in \cite{FTZ}, and that is the reason why the third term in \eqref{c34} is different from the first term in \cite[$(14)$]{FTZ}. Conversely, our first and second term in \eqref{c34} match exactly with the integral terms in \cite[$(13),(14)$]{FTZ} upon the identification
\begin{equation*}
	\phi(z)=4p(1-p)\hat{\rho}(z),\ \ \ \ p=\frac{1}{2}\Big(1-\sqrt{1-\phi(0)}\Big)\in\Big(0,\frac{1}{2}\Big).
\end{equation*}
This particular matching can be anticipated from \cite[$(46)$]{FTZ}, after recalling \eqref{a6} and the fact that \cite{FTZ} has their operators acting on $L^2(0,L)$.
\end{rem}
\begin{rem} Via substitution in \eqref{a6}, \eqref{c33} and \eqref{c34} yield $t\rightarrow+\infty$ expansions for the Pfaffians $P_K(t)=P_K((-t,t))$ and $P_{K,2}(t)=P_{K,2}((-t,t))$, respectively.
\end{rem}
\begin{rem} The leading two terms in \eqref{c32} and \eqref{c34} are exactly as in the classical Akhiezer-Kac asymptotic expansion of $D_Q(t)$, as $t\rightarrow+\infty$, cf. \cite{Ka,A}. This the justification for our wording surrounding asymptotics in the abstract and throughout.

\end{rem}
\begin{rem} The Riemann-Hilbert problem \ref{WHRHP} that underwrites \eqref{c29} is known in integrable systems theory not only from \cite{Ko} but much earlier from \cite[$(1.4)$]{IIKS}, albeit for only one concrete $\phi$. Kernels of the form \eqref{c29} have since been coined finite-temperature sine kernels due to their appearances in non-interacting fermionic systems. Fredholm determinants of such finite-temperature sine kernels enjoy rich integrable features, cf. \cite{CT}, and asymptotic studies for concrete $\phi$ have been undertaken in \cite{KBI,X}. In fact, the two common leading terms in \eqref{c32},\eqref{c33}, which originate from the common $D_Q(t)$ in \eqref{c23},\eqref{c26}, match exactly with \cite[$(2.1),(2.2)$]{X}, once we specialise our $\phi$ to \cite[$(1.5)$]{X}.

\end{rem}
In the remainder of this paper, we first use commutation \eqref{c2} to prove \eqref{c3} and \eqref{c7} in Section \ref{sec5}. This is achieved by deriving operator identities involving the entries of \eqref{z5} and \eqref{c5}, which in turn yields the desired factors $A,B$ needed for \eqref{c2}. All steps pertaining to Hankel type kernels, in particular the derivation of \eqref{c11},\eqref{c16} and \eqref{c19},\eqref{c20} are contained in Section \ref{sec6}. These steps are a mixture of purely algebraic manipulations and nonlinear steepest descent techniques applied to RHP \ref{HankelRHP}. In a way, the derivation of \eqref{c23},\eqref{c26} and \eqref{c30},\eqref{c31},\eqref{c33},\eqref{c34} is easier when compared to their analogues for Hankel type kernels; in particular, no further regularisation of the form \eqref{c15} is needed. All relevant steps are demonstrated in Section \ref{cool}. We cconclude the paper with a discussion on a third type of main kernels to which \eqref{c2} is applicable, see Section \ref{notknow}. Appendix \ref{appA} lists the Sherman-Morrison identity utilised in the derivation of \eqref{c11}, as well as a very short primer on Fredholm Pfaffians.

\section{Proof of \eqref{c3} and \eqref{c7} via commutation}\label{sec5}
To utilise \eqref{c2}, we must first ensure that our operators \eqref{z5} and \eqref{c5} have good spectral properties.
\subsection*{The symplectic derived class}
Via exploitation of the compactness of $M$ in \eqref{z5} on $L^2(\Delta)\oplus L^2(\Delta)$, we identify
every non-zero element in the spectrum of $M$ with a non-zero element in the spectrum of a simpler, trace class integral operator $N$ on $L^2(\Delta)\oplus L^2(\Delta)$. That this is possible follows from two sets of operator identities, see \eqref{z7} and \eqref{z12} below.
\begin{lem} We have on $L^2(\Delta)$, with $G,H,S:L^2(\Delta)\rightarrow L^2(\Delta)$ defined by their trace class kernels in \eqref{z3},\eqref{z5},
\begin{equation}\label{z7}
	S^{\ast}H=HS-\sum_{k=1}^{2m}(-1)^kH(\delta_{a_k}\otimes\delta_{a_k})H,\ \ \ \ \ \ \ HG=(S^{\ast})^2+\sum_{k=1}^{2m}(-1)^kH(\delta_{a_k}\otimes\delta_{a_k})S^{\ast}.
\end{equation}
Here $\alpha\otimes\beta$ is the integral operator with separable kernel $\alpha(x)\beta(y)$ when $\alpha,\beta\in L^2(\Delta)$ and 
\begin{equation*}
	\int_{\Delta}f(x)\delta_a(x)\d x=f(a),\ \ \ \ a\in\partial\Delta.
\end{equation*}
\end{lem}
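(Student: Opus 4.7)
The two identities amount to equalities of integral kernels on $\Delta\times\Delta$, so the plan is to verify them at the kernel level and then invoke the trace class assumptions in \eqref{z3}--\eqref{z4} to promote the equalities to bona fide operator identities on $L^2(\Delta)$. The continuous differentiability of $S$ together with the dominance bounds \eqref{z4} ensure that $H(x,y)$ extends continuously to $\overline{\Delta}\times\overline{\Delta}$, so the boundary traces $H(a_k,\cdot)$ and $S^{\ast}(a_k,\cdot)$ are well-defined $L^2(\Delta)$ functions and the distributional compositions $H(\delta_{a_k}\otimes\delta_{a_k})H$ and $H(\delta_{a_k}\otimes\delta_{a_k})S^{\ast}$ make sense as bounded operators on $L^2(\Delta)$ with separable kernels $H(x,a_k)H(a_k,y)$ and $H(x,a_k)S^{\ast}(a_k,y)$, respectively.

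For the first identity, I would compute
\begin{equation*}
	(S^{\ast}H)(x,y)=\int_{\Delta}S(z,x)H(z,y)\,\d z=\int_{\Delta}\frac{\partial}{\partial z}H(z,x)\,H(z,y)\,\d z,
\end{equation*}
using $\partial_z H(z,x)=S(z,x)$. Integrating by parts on each subinterval $(a_{2j-1},a_{2j})$ and summing produces the boundary contribution $\sum_{k=1}^{2m}(-1)^k H(a_k,x)H(a_k,y)$, with the sign $(-1)^k$ distinguishing upper from lower endpoints, minus $\int_{\Delta}H(z,x)S(z,y)\,\d z$. Using skew-symmetry $H(z,x)=-H(x,z)$ the latter integral equals $-(HS)(x,y)$, while the boundary term rewrites, again via skew-symmetry, as $-\sum_{k=1}^{2m}(-1)^k H(x,a_k)H(a_k,y)=-\sum_{k=1}^{2m}(-1)^k\bigl(H(\delta_{a_k}\otimes\delta_{a_k})H\bigr)(x,y)$. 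Rearranging gives the first equality in \eqref{z7}.

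For the second identity, I would begin with
\begin{equation*}
	(HG)(x,y)=-\int_{\Delta}H(x,z)\frac{\partial}{\partial y}S(z,y)\,\d z=-\frac{\partial}{\partial y}\int_{\Delta}H(x,z)\frac{\partial}{\partial z}H(z,y)\,\d z,
\end{equation*}
using $G(z,y)=-\partial_y S(z,y)$ and $S(z,y)=\partial_z H(z,y)$. Integration by parts in $z$ on each subinterval, combined with $\partial_z H(x,z)=-S(z,x)$ (from skew-symmetry), yields
\begin{equation*}
	\int_{\Delta}H(x,z)\frac{\partial}{\partial z}H(z,y)\,\d z=\sum_{k=1}^{2m}(-1)^k H(x,a_k)H(a_k,y)+(S^{\ast}H)(x,y).
\end{equation*}
Applying $-\partial_y$ and invoking $-\partial_y H(z,y)=S(y,z)=S^{\ast}(z,y)$, the interior term becomes $((S^{\ast})^2)(x,y)$, while each boundary term yields $H(x,a_k)S^{\ast}(a_k,y)=\bigl(H(\delta_{a_k}\otimes\delta_{a_k})S^{\ast}\bigr)(x,y)$, producing the second equality in \eqref{z7}.

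The computation itself is elementary; the main obstacle is the careful verification that the boundary evaluations are legitimate and that the differentiations under the integral signs are justified under the hypotheses \eqref{z4}, together with correct bookkeeping of the alternating $(-1)^k$ pattern arising from lower versus upper endpoints of the $m$ component intervals of $\Delta$.
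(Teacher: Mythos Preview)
Your proposal is correct and follows essentially the same approach as the paper: direct computation of the kernels via integration by parts, using the relations $\partial_x H(x,y)=S(x,y)$ and $G(x,y)=-\partial_y S(x,y)$ together with skew-symmetry of $H$ and $G$. For the second identity the paper is slightly more direct---it applies skew-symmetry of both $H$ and $G$ to rewrite the integrand as $H(z,x)G(y,z)=-H(z,x)\partial_z S(y,z)$ and then integrates by parts once in $z$, whereas you pull the $y$-derivative outside, integrate by parts in $z$, and then differentiate back in $y$---but this is a minor tactical variation rather than a different route.
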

\begin{proof} We perform a direct calculation of the underlying kernels, using integration by parts and the general identity $A(\alpha\otimes\beta)B=(A\alpha)\otimes(B^{\ast}\beta)$, valid for any admissible integral operators $A,B$. First,
\begin{align*}
	(S^{\ast}H)(x,y)=&\,\int_{\Delta}S(z,x)H(z,y)\d z\stackrel{\eqref{z6}}{=}H(z,x)H(z,y)\Big|_{\partial\Delta}-\int_{\Delta}H(z,x)S(z,y)\d z\\
	=&\,(HS)(x,y)+\sum_{k=1}^{2m}(-1)^kH(a_k,x)H(a_k,y)=(HS)(x,y)-\sum_{k=1}^{2m}(-1)^k\Big(H(\delta_{a_k}\otimes\delta_{a_k})H\Big)(x,y),
\end{align*}
and second,
\begin{align*}
	(HG)(x,y)=&\,\int_{\Delta}H(x,z)G(z,y)\d z=\int_{\Delta}H(z,x)G(y,z)\d z\stackrel{\eqref{z6}}{=}-H(z,x)S(y,z)\Big|_{\partial\Delta}+\int_{\Delta}S(z,x)S(y,z)\d z\\
	=&\,(S^{\ast})^2(x,y)-\sum_{k=1}^{2m}(-1)^kH(a_k,x)S(y,a_k)=(S^{\ast})^2(x,y)+\sum_{k=1}^{2m}(-1)^k\Big(H(\delta_{a_k}\otimes\delta_{a_k})S^{\ast}\Big)(x,y).
\end{align*}
The proof of \eqref{z7} is complete.
\end{proof}
What results from \eqref{z7} for the trace class integral operator
\begin{equation}\label{z8}
	N:=\sigma_3\begin{bmatrix}S^{\ast} & -S^{\ast}-\sum_{k=1}^{2m}(-1)^kH(\delta_{a_k}\otimes\delta_{a_k})\smallskip\\
	S^{\ast} & -S^{\ast}-\sum_{k=1}^{2m}(-1)^kH(\delta_{a_k}\otimes\delta_{a_k})\end{bmatrix}:L^2(\Delta)\oplus L^2(\Delta)\rightarrow L^2(\Delta)\oplus L^2(\Delta),
\end{equation}
with $\sigma_3:=\bigl[\begin{smallmatrix}1&0\\ 0 & -1\end{smallmatrix}\bigr]$, is the following useful fact regarding the spectra $\sigma(M)$ and $\sigma(N)$ of $M$ and $N$. 
\begin{prop}\label{pz1} We have that $\sigma(M)\setminus\{0\}\subset\sigma(N)$.
\end{prop}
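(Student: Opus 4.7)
The plan is to combine the compactness of $M$ on $L^2(\Delta)\oplus L^2(\Delta)$ (which holds because every $M_{ij}$ is trace class) with the commutation formula \eqref{c2}, applied to a minimalist factorisation of $N$. Specifically, I would split $N=AB$ with $A\in\mathcal{L}\big(L^2(\Delta),L^2(\Delta)\oplus L^2(\Delta)\big)$ and $B\in\mathcal{L}\big(L^2(\Delta)\oplus L^2(\Delta),L^2(\Delta)\big)$ defined by
\begin{equation*}
	Af:=(f,-f),\qquad B(g,h):=S^{\ast}g-(S^{\ast}+\Sigma)h,\qquad\Sigma:=\sum_{k=1}^{2m}(-1)^kH(\delta_{a_k}\otimes\delta_{a_k}).
\end{equation*}
A direct check yields $AB=N$ and $BA=2S^{\ast}+\Sigma$, so \eqref{c2} gives $\sigma(N)\setminus\{0\}=\sigma(2S^{\ast}+\Sigma)\setminus\{0\}$. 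It therefore suffices to establish the embedding $\sigma(M)\setminus\{0\}\subset\sigma(2S^{\ast}+\Sigma)$.

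Fix $\lambda\in\sigma(M)\setminus\{0\}$; by compactness $\lambda$ is an eigenvalue, with eigenvector $v=(v_1,v_2)\neq 0$ satisfying $Sv_1+Gv_2=\lambda v_1$ and $Hv_1+S^{\ast}v_2=\lambda v_2$. Applying $H$ to the first equation, substituting the \eqref{z7} identities $HS=S^{\ast}H+\Sigma H$ and $HG=(S^{\ast})^2+\Sigma S^{\ast}$, and eliminating $Hv_1=(\lambda-S^{\ast})v_2$ via the second equation, the $S^{\ast}H$ and $(S^{\ast})^2$ contributions cancel pairwise and one is left with
\begin{equation*}
	\lambda\big[(2S^{\ast}+\Sigma)v_2-\lambda v_2\big]=0.
\end{equation*}
Provided $v_2\neq 0$, dividing by $\lambda$ gives $(2S^{\ast}+\Sigma)v_2=\lambda v_2$, placing $\lambda$ in $\sigma(2S^{\ast}+\Sigma)$ and hence in $\sigma(N)$ by the commutation step above.

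The main obstacle is ruling out the degenerate scenario $v_2=0$. In that case the eigenvector equation collapses to $Sv_1=\lambda v_1$ and $Hv_1=0$ with $v_1\neq 0$, and my strategy is to exploit the differential link $\partial_xH(x,y)=S(x,y)$ from \eqref{z6}. The continuity of $H$ on $\Delta\times\Delta$ together with the dominance $|S(x,y)|\leq g_1(y)$, $g_1\in L^2(\Delta)$, from \eqref{z4} makes $(Hv_1)(x)$ continuous on each connected component of $\Delta$, so the $L^2$ identity $Hv_1=0$ is in fact a pointwise vanishing on $\Delta$. The same dominance bound then legitimises differentiation under the integral (Leibniz's rule), yielding $(Sv_1)(x)=\partial_x(Hv_1)(x)=0$ on each component; combined with $Sv_1=\lambda v_1$ and $\lambda\neq 0$, this forces $v_1=0$, contradicting $v\neq 0$. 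Excluding this degenerate case closes the proof.
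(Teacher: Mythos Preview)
Your proof is correct and takes a genuinely different route from the paper's own argument. The paper proceeds via an explicit intertwining: it verifies from \eqref{z7} the identity
\[
\begin{bmatrix}H & S^{\ast}\\ -H&-S^{\ast}\end{bmatrix}M=N\begin{bmatrix}H & S^{\ast}\\ -H& -S^{\ast}\end{bmatrix}
\]
on $L^2(\Delta)\oplus L^2(\Delta)$, and then pushes an $M$-eigenvector $(f_1,f_2)$ forward through this intertwiner to obtain an $N$-eigenvector $(g_1,g_2)=(Hf_1+S^{\ast}f_2,-Hf_1-S^{\ast}f_2)$; the bulk of the work is checking $(g_1,g_2)\neq 0$. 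You instead first collapse $N$ to the scalar operator $2S^{\ast}+\Sigma$ on $L^2(\Delta)$ via the factorisation $N=AB$ and \eqref{c2}, and only then relate $M$-eigenvectors to $(2S^{\ast}+\Sigma)$-eigenvectors by direct algebraic manipulation of the eigenvalue system using \eqref{z7}. The two arguments are secretly close: since $Hf_1+S^{\ast}f_2=\lambda f_2$, the paper's $g_1$ is just $\lambda v_2$ in your notation, and the degenerate cases ($g_1=0$ versus $v_2=0$) are handled by both of you through the same differentiation-under-the-integral step based on $\partial_x H=S$ and \eqref{z4}. What your route buys is a cleaner use of the commutation formula \eqref{c2} itself and a direct landing on the scalar operator $2S^{\ast}+\Sigma$, which is precisely what appears when one expands $D_M(\Delta)$ in the proof of \eqref{c3}; the paper acknowledges this alternative style in the remark following that proof. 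A minor cosmetic point: the phrase ``continuity of $H$ on $\Delta\times\Delta$'' is not something assumed in the setup, but what you actually use, Lipschitz continuity of $x\mapsto(Hv_1)(x)$ on each component, follows immediately from $\partial_xH=S$ and the bound $|S(x,y)|\le g_1(y)$, so the argument stands.
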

\begin{proof}
Recall that $M$ and $N$ are compact operators on $L^2(\Delta)\oplus L^2(\Delta)$, so $\sigma(M)\setminus\{0\}$, resp. $\sigma(N)\setminus\{0\}$, are at most discrete subsets of $\mathbb{C}\setminus\{0\}$ consisting of eigenvalues $\{\lambda_j(M)\}_{j=1}^{\infty}$ of $M$, resp. eigenvalues $\{\lambda_j(N)\}_{j=1}^{\infty}$ of $N$, by the Riesz-Schauder theorem. Now notice that \eqref{z7} yields the operator composition identity
\begin{equation}\label{z9}
	\begin{bmatrix}H & S^{\ast}\\ -H&-S^{\ast}\end{bmatrix}M=N\begin{bmatrix}H & S^{\ast}\\ -H& -S^{\ast}\end{bmatrix}\ \ \ \ \textnormal{on}\ \ \ L^2(\Delta)\oplus L^2(\Delta).
\end{equation}
Hence, if $\lambda\in\sigma(M)\setminus\{0\}$ then $M[f_1,f_2]^{\top}=\lambda[f_1,f_2]^{\top}$ with $f_k\in L^2(J)$ not both zero a.e., and so by \eqref{z9},
\begin{equation}\label{z10}
	N\begin{bmatrix}g_1\\ g_2\end{bmatrix}=\lambda\begin{bmatrix}g_1\\ g_2\end{bmatrix},\ \ \ \ \ \ \ \ \begin{bmatrix}g_1\\ g_2\end{bmatrix}:=\begin{bmatrix}H&S^{\ast}\\ -H&-S^{\ast}\end{bmatrix}\begin{bmatrix}f_1\\ f_2\end{bmatrix}\in L^2(\Delta)\oplus L^2(\Delta).
\end{equation}
Here $g_k\neq 0\in L^2(\Delta)$, for if we had $g_1=0\in L^2(\Delta)$, then $Hf_1+S^{\ast}f_2=0\in L^2(\Delta)$ by \eqref{z10} while at the same time from the eigenvector-value equation of $M$, 
\begin{equation}\label{z11}
	Sf_1+Gf_2=\lambda f_1\ \ \ \ \textnormal{and}\ \ \ \ Hf_1+S^{\ast}f_2=\lambda f_2.
\end{equation}
So, since $\lambda\neq 0$, necessarily $f_2=0\in L^2(\Delta)$ and hence, again by \eqref{z11}, $Hf_1=0\in L^2(\Delta)$. In turn, by \eqref{z6} and the dominance assumption \eqref{z4}, after differentiation, $Sf_1=0\in L^2(\Delta)$. Consequently, by \eqref{z11} $\lambda f_1=0$ which yields, with $\lambda\neq 0$, that $f_1=0\in L^2(\Delta)$. This would contradict that $f_k$ cannot both be zero a.e., and thus $g_1\neq 0\in L^2(\Delta)$. Likewise, if we had $g_2=0\in L^2(\Delta)$, then by \eqref{z10} $g_2=-g_1$ and so we can use the same argument as before. In summary, $\lambda\in\sigma(N)$ if $\lambda\in\sigma(M)\setminus\{0\}$, and so indeed $\sigma(M)\setminus\{0\}\subset\sigma(N)$, as claimed.
\end{proof}
Our second set of operator identities involves the Sobolev space $W^{1,2}(\Delta):=\big\{f:\,f,Df\in L^2(\Delta)\big\}$ with $D$ as weak derivative. This space is natural in view of the mapping properties of $S^{\ast}$ that transpire from the dominance assumption \eqref{z4}.
 \begin{lem} We have on $W^{1,2}(\Delta)$, with $G,H,S:L^2(\Delta)\rightarrow L^2(\Delta)$ defined by their trace class kernels in \eqref{z3},\eqref{z5},
 \begin{equation}\label{z12}
 	DS^{\ast}=G,\ \ \ \ \ SD=DS^{\ast}+\sum_{k=1}^{2m}(-1)^kDH(\delta_{a_k}\otimes\delta_{a_k}),\ \ \ \ \ HD=S^{\ast}+\sum_{k=1}^{2m}(-1)^kH(\delta_{a_k}\otimes\delta_{a_k}).
\end{equation}
 \end{lem}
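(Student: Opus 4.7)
The plan is to derive the three identities in \eqref{z12} by direct kernel-level computations, combining the defining relations \eqref{z6}, the skew-symmetries of $G(x,y)$ and $H(x,y)$ that descend from \eqref{a4}, and integration by parts on $\Delta=\bigcup_{j=1}^m(a_{2j-1},a_{2j})$. Throughout, the dominance assumption \eqref{z4} justifies both differentiation under the integral sign and the integration by parts, and the boundary contributions assemble into sums of the form $\sum_{k=1}^{2m}(-1)^k(\cdot)|_{y=a_k}$, with the $(-1)^k$ pattern coming from the alternation between odd-indexed lower endpoints $a_{2j-1}$ and even-indexed upper endpoints $a_{2j}$.

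For $DS^{\ast}=G$, I would differentiate $(S^{\ast}f)(x)=\int_{\Delta}S(y,x)f(y)\,\d y$ under the integral to obtain $\int_{\Delta}\partial_x S(y,x)f(y)\,\d y$, and then recognise $\partial_x S(y,x)=G(x,y)$. This kernel identity comes from the skew-symmetry of $K_{22}$ in \eqref{a4}, which reads $\partial_y S(x,y)=-\partial_x S(y,x)$, combined with $G(x,y)=-\partial_y S(x,y)$ from \eqref{z6}. For the second identity $SD=DS^{\ast}+\sum_k(-1)^k DH(\delta_{a_k}\otimes\delta_{a_k})$, I would integrate $(SDf)(x)=\int_{\Delta}S(x,y)Df(y)\,\d y$ by parts to produce the boundary piece $\sum_{k=1}^{2m}(-1)^k S(x,a_k)f(a_k)$ together with the interior piece $\int_{\Delta}G(x,y)f(y)\,\d y=(DS^{\ast}f)(x)$ by the first identity. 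Using $S(x,a_k)=\partial_x H(x,a_k)=(DH)(x,a_k)$ from \eqref{z6} and $\big((\delta_{a_k}\otimes\delta_{a_k})f\big)(y)=\delta_{a_k}(y)f(a_k)$ rewrites the boundary piece in the desired form $\sum_k(-1)^k\big(DH(\delta_{a_k}\otimes\delta_{a_k})f\big)(x)$.

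The third identity $HD=S^{\ast}+\sum_k(-1)^k H(\delta_{a_k}\otimes\delta_{a_k})$ proceeds by the same template applied to $(HDf)(x)=\int_{\Delta}H(x,y)Df(y)\,\d y$, producing the boundary piece $\sum_k(-1)^k H(x,a_k)f(a_k)=\sum_k(-1)^k\big(H(\delta_{a_k}\otimes\delta_{a_k})f\big)(x)$ together with $-\int_{\Delta}\partial_y H(x,y)f(y)\,\d y$. The only substantive step is to identify $-\partial_y H(x,y)=S(y,x)=S^{\ast}(x,y)$, which I would derive by differentiating the skew-symmetry $H(x,y)+H(y,x)=0$ in its first slot to get $(\partial_1 H)(x,y)+(\partial_2 H)(y,x)=0$, and then invoking $(\partial_1 H)(x,y)=S(x,y)$ from \eqref{z6} before relabelling $x\leftrightarrow y$. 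The main technical care concerns ensuring that the pointwise values $f(a_k)$ make sense for $f\in W^{1,2}(\Delta)$; this follows because one-dimensional Sobolev embedding gives a continuous representative of $f$ on the closure of each component $(a_{2j-1},a_{2j})$, so the trace evaluations are unambiguous. This is the mildest obstacle in the proof, and once it is in place the three kernel computations slot together cleanly.
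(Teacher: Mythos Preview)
Your proposal is correct and follows essentially the same approach as the paper: direct kernel computations using differentiation under the integral (justified by \eqref{z4}), integration by parts on each component of $\Delta$ to produce the alternating boundary sums, and the skew-symmetries of $G$ and $H$ to identify the interior pieces. The only addition you make beyond the paper's own proof is the explicit remark on Sobolev embedding to justify the pointwise evaluations $f(a_k)$, which is a welcome clarification.
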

 \begin{proof} By direct computation, using the dominated convergence theorem and \eqref{z4}, we have
 \begin{equation*}
 	(DS^{\ast}f)(x)=\frac{\d}{\d x}\int_{\Delta}S(y,x)f(y)\d y\stackrel{\eqref{z4}}{=}\int_{\Delta}\frac{\partial}{\partial x}S(y,x)f(y)\d y\stackrel{\eqref{z6}}{=}-\int_{\Delta}G(y,x)f(y)\d y=(Gf)(x),\ \ f\in L^2(\Delta),
\end{equation*}
followed by
\begin{equation*}
	\big(DH(\delta_a\otimes\delta_a)f\big)(x)=\frac{\partial}{\partial x}H(x,a)f(a)\stackrel{\eqref{z6}}{=}S(x,a)f(a),\ \ \ \ \ \ \ f\in L^2(\Delta),\ \ a\in\partial\Delta.
\end{equation*}
Therefore, for any $f\in W^{1,2}(\Delta)$,
\begin{align*}
	(SDf)(x)=&\,\int_{\Delta}S(x,y)(Df)(y)\d y\\
	=&\,S(x,y)f(y)\Big|_{\partial\Delta}-\int_{\Delta}\bigg[\frac{\partial}{\partial y}S(x,y)\bigg]f(y)\d y
	\stackrel{\eqref{z6}}{=}(Gf)(x)+\sum_{k=1}^{2m}(-1)^kS(x,a_k)f(a_k)\\
	=&\,(Gf)(x)+\sum_{k=1}^{2m}(-1)^k\big(DH(\delta_{a_k}\otimes\delta_{a_k})f\big)(x)=(DS^{\ast}f)(x)+\sum_{k=1}^{2m}(-1)^k\big(DH(\delta_{a_k}\otimes\delta_{a_k})f\big)(x).
\end{align*}
Lastly, with $f\in W^{1,2}(\Delta)$,
\begin{align*}
	(HDf)(x)=&\,\int_{\Delta}H(x,y)(Df)(y)\d y\\
	=&\,H(x,y)f(y)\Big|_{\partial\Delta}+\int_{\Delta}\bigg[\frac{\partial}{\partial y}H(y,x)\bigg]f(y)\d y\stackrel{\eqref{z6}}{=}(S^{\ast}f)(x)+\sum_{k=1}^{2m}(-1)^kH(x,a_k)f(a_k)\\
	=&\,(S^{\ast}f)(x)+\sum_{k=1}^{2m}(-1)^k\big(H(\delta_{a_k}\otimes\delta_{a_k})f\big)(x).
\end{align*}
This concludes our proof of \eqref{z12}.
 \end{proof}
 What results from \eqref{z12} is the following companion to Proposition \ref{pz1}.
 \begin{prop}\label{pz2} We have that $\sigma(N)\setminus\{0\}\subset\sigma(M)$.
 \end{prop}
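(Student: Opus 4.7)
My plan is to mirror the proof of Proposition \ref{pz1}, swapping the roles of the identities \eqref{z7} and \eqref{z12}. Writing $T := \sum_{k=1}^{2m}(-1)^k H(\delta_{a_k}\otimes \delta_{a_k})$ for the finite-rank correction that appears in \eqref{z8}, I would first combine the three identities in \eqref{z12} to establish the intertwining relation
$$M\begin{bmatrix}Df\\ f\end{bmatrix} = \begin{bmatrix}D(2S^{\ast}+T)f\\ (2S^{\ast}+T)f\end{bmatrix},\qquad f\in W^{1,2}(\Delta).$$
This is a direct check: the first component equals $SDf + Gf = (DS^{\ast} + DT)f + DS^{\ast} f = D(2S^{\ast}+T)f$ using $G=DS^{\ast}$ and $SD = DS^{\ast}+DT$, while the second component equals $HDf + S^{\ast}f = (S^{\ast}+T)f + S^{\ast}f = (2S^{\ast}+T)f$ using $HD = S^{\ast}+T$.

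Next, I would apply the commutation formula \eqref{c2} to the evident factorization
$$N = AB,\qquad A := \sigma_3\begin{bmatrix}I\\ I\end{bmatrix},\quad B := \begin{bmatrix}S^{\ast} & -(S^{\ast}+T)\end{bmatrix},$$
of $N$ in \eqref{z8}. Since $BA = S^{\ast} + (S^{\ast}+T) = 2S^{\ast}+T$ acts on $L^2(\Delta)$, \eqref{c2} yields the spectral reduction $\sigma(N)\setminus\{0\} = \sigma(2S^{\ast}+T)\setminus\{0\}$. Hence any $\lambda\in \sigma(N)\setminus\{0\}$ produces a nonzero eigenvector $g$ of $2S^{\ast}+T$ at eigenvalue $\lambda$; this $g$ will serve, in the final step, as the scalar seed from which the matrix eigenvector of $M$ is built.

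To conclude, I would bootstrap the regularity of $g$: from $\lambda g = 2S^{\ast}g + Tg$ with $\lambda\neq 0$, the dominance assumption \eqref{z4} places $S^{\ast}g$ in $W^{1,2}(\Delta)$, while $Tg$ is a finite linear combination of the smooth functions $x\mapsto H(x,a_k)$; the Sobolev embedding $W^{1,2}(\Delta)\hookrightarrow C(\overline{\Delta})$ on the bounded set $\Delta$ makes the boundary evaluations $g(a_k)$ meaningful, so the identification $g\in W^{1,2}(\Delta)$ is self-consistent. The intertwining relation from Step~1 then gives
$$M\begin{bmatrix}Dg\\ g\end{bmatrix} = \lambda\begin{bmatrix}Dg\\ g\end{bmatrix},$$
and since $g\neq 0$ the vector $[Dg,g]^{\top}$ is nonzero in $L^2(\Delta)\oplus L^2(\Delta)$. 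This exhibits $\lambda$ as an eigenvalue of $M$, establishing $\sigma(N)\setminus\{0\}\subset\sigma(M)$.

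The step I expect to be the main obstacle is the final regularity bootstrap: the commutation formula \eqref{c2} operates at the $L^2(\Delta)$ level, while the intertwining relation from Step~1 is only valid on $W^{1,2}(\Delta)$, and the boundary-value operator $T$ sits awkwardly between the two. Once this regularity gap is bridged, the remainder of the argument is a purely algebraic verification that runs exactly parallel to the proof of Proposition \ref{pz1}.
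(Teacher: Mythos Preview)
Your approach is correct and closely related to, but packaged differently from, the paper's. The paper establishes the $2\times 2$ intertwining \eqref{z13},
\[
\begin{bmatrix}0&-D\\I&0\end{bmatrix}N = M\begin{bmatrix}0&-D\\I&0\end{bmatrix}\ \ \ \textnormal{on}\ \ W^{1,2}(\Delta)\oplus W^{1,2}(\Delta),
\]
and works directly with an eigenvector $[f_1,f_2]^\top$ of $N$; the candidate eigenvector of $M$ is $[-Df_2,f_1]^\top$, and a substantial portion of the argument is spent showing this vector is nonzero. Your explicit commutation step $N=AB$, $BA=2S^\ast+T$ is precisely what the paper does implicitly: the row structure of $N$ in \eqref{z8} forces $f_2=-f_1$ for any eigenvector at $\lambda\neq 0$, so the paper's $[-Df_2,f_1]^\top$ is your $[Dg,g]^\top$ with $g=f_1$. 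The payoff of making the scalar reduction explicit is that nonvanishing of $[Dg,g]^\top$ becomes trivial (second component $g\neq 0$), bypassing the paper's longer case analysis. The regularity bootstrap you flag is handled identically in both arguments, via $S^\ast(L^2(\Delta))\subset W^{1,2}(\Delta)$ and the fact that $Tg$ is a finite combination of the smooth functions $x\mapsto H(x,a_k)$. One caveat: the paper reuses the full $2\times 2$ intertwining \eqref{z13} in the subsequent Corollary to build injections between spectral projections and match algebraic multiplicities; your scalar route still suffices there, since $\textnormal{Ran}(N)\subset\{[h,-h]^\top:h\in L^2(\Delta)\}$ forces every generalized eigenvector of $N$ at $\lambda\neq 0$ into the antidiagonal form as well.
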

\begin{proof} From \eqref{z12}, we distill the operator composition formula
 \begin{equation}\label{z13}
 	\begin{bmatrix}0&-D\\ I & 0\end{bmatrix}N=M\begin{bmatrix}0 & -D\\ I&0\end{bmatrix}\ \ \ \ \ \textnormal{on}\ \ \ W^{1,2}(\Delta)\oplus W^{1,2}(\Delta),
\end{equation}
with $I$ acting as identity on $W^{1,2}(\Delta)$. Next, by the dominated convergence theorem and \eqref{z4},
\begin{equation*}
	S^{\ast}\big(L^2(\Delta)\big)\subset W^{1,2}(\Delta),\ \ \ \ \ \ H(\delta_a\otimes\delta_a)\big(L^2(\Delta)\big)\subset W^{1,2}(\Delta),\ \ \ \ \ a\in\partial\Delta.
\end{equation*}
So, if $N[f_1,f_2]^{\top}=\lambda[f_1,f_2]^{\top}$, with $f_k\in L^2(\Delta)$ not both zero a.e., then necessarily $f_k\in W^{1,2}(\Delta)$, and we can thus define
\begin{equation}\label{z14}
	\begin{bmatrix}g_1\\ g_2\end{bmatrix}:=\begin{bmatrix}-Df_2\\ f_1\end{bmatrix}\in L^2(\Delta)\oplus L^2(\Delta).
\end{equation}
Observe that \eqref{z14} solves
\begin{equation}\label{z15}
	M\begin{bmatrix}g_1\\ g_2\end{bmatrix}=\lambda\begin{bmatrix}g_1\\ g_2\end{bmatrix},
\end{equation}
because of \eqref{z13} and the eigenvector-value equation of $N$. More further deduce that, $g_k\neq 0\in L^2(J)$, for if we had $g_1=0\in L^2(\Delta)$, then $Df_2=0\in L^2(\Delta)$ by \eqref{z14}. At the same time, however, from \eqref{z15},
\begin{equation*}
	Sg_1+Gg_2=\lambda g_1\ \ \ \ \textnormal{and}\ \ \ \ Hg_1+S^{\ast}g_2=\lambda g_2.
\end{equation*}
This yields $Gg_2=0\in L^2(\Delta)$ and $S^{\ast}g_2=\lambda g_2$, equivalently $S^{\ast}f_1=\lambda f_1$ by \eqref{z14}. In turn $0=Gg_2=DS^{\ast}g_2=\lambda Dg_2=\lambda Df_1$, by \eqref{z12} in the second equality and \eqref{z14} in the last. So, with $\lambda\neq 0$, $Df_1=0\in L^2(\Delta)$, and thus both $f_k$ are constant a.e. on each connected component of $\Delta$. But this is impossible in light of $S^{\ast}f_1=\lambda f_1$ with $\lambda\neq 0$, unless $f_1=0\in L^2(\Delta)$. But $N[f_1,f_2]^{\top}=\lambda[f_1,f_2]^{\top}$ yields $f_1=-f_2$, compare \eqref{z8}, and so both $f_k$ would be zero a.e.. Consequently, $g_1\neq 0\in L^2(\Delta)$. Likewise, if we had $g_2=0\in L^2(\Delta)$, then at once $g_2=f_1=0\in L^2(\Delta)$ by \eqref{z14}. Since again $f_1=-f_2$, both $f_k$ would vanish a.e. on $\Delta$. In summary, $\lambda\in\sigma(M)$ if $\lambda\in\sigma(N)\setminus\{0\}$, and so indeed $\sigma(N)\setminus\{0\}\subset\sigma(M)$, as claimed.
\end{proof}
The combination of Proposition \ref{pz1} and \ref{pz2} implies the following result; the key step in establishing \eqref{c3}.
\begin{cor} We have
\begin{equation}\label{z16}
	D_M(\Delta)=\prod_{j=1}^{\infty}\big(1-\lambda_j(M)\big)=\prod_{j=1}^{\infty}\big(1-\lambda_j(N)\big),
\end{equation}
where $D_M(\Delta)$ is the Fredholm determinant on $L^2(\Delta)\oplus L^2(\Delta)$ of the operator $M$ in \eqref{z5} and $\{\lambda_j(M)\}_{j=1}^{\infty}$, resp. $\{\lambda_j(N)\}_{j=1}^{\infty}$, is a listing of the nonzero eigenvalues, taking algebraic multiplicities into account, of the operators $M$ and $N$, compare \eqref{z8}.
 \end{cor}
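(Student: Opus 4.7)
The plan is to establish \eqref{z16} in two stages: the first equality is an immediate application of Lidskii's trace theorem, and the second upgrades Propositions \ref{pz1}, \ref{pz2} from spectral set-equality to equality of algebraic multiplicities via a nilpotency argument on generalized eigenspaces.

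First, $M$ in \eqref{z5} is trace class on $L^2(\Delta)\oplus L^2(\Delta)$ because, by the symplectic derived assumptions, all four block entries $S,G,H,S^{\ast}$ are trace class on $L^2(\Delta)$. Hence $D_M(\Delta)=\det(I-M)$ is the honest Fredholm determinant of a trace class perturbation of the identity, and Lidskii's theorem (cf.\ \cite[Chapter 3]{S}) yields $D_M(\Delta)=\prod_j(1-\lambda_j(M))$, the product being taken over nonzero eigenvalues of $M$ repeated according to algebraic multiplicity. The same reasoning applies to $N$ in \eqref{z8}, whose block entries are finite linear combinations of the trace class $S^{\ast}$ and the rank-one operators $H(\delta_{a_k}\otimes\delta_{a_k})$: $N$ is trace class, and the product $\prod_j(1-\lambda_j(N))$ is itself $\det(I-N)$ by a second application of Lidskii.

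For the second equality, Propositions \ref{pz1}, \ref{pz2} already supply $\sigma(M)\setminus\{0\}=\sigma(N)\setminus\{0\}$ as sets, so only the algebraic multiplicities remain to be matched. Fix $\lambda\neq 0$ in this common spectrum and, exploiting compactness, consider the finite-dimensional generalized eigenspaces $V_{\lambda}(M):=\bigcup_{n\geq 1}\ker(M-\lambda I)^n$ and $V_{\lambda}(N):=\bigcup_{n\geq 1}\ker(N-\lambda I)^n$. The intertwining $CM=NC$ in \eqref{z9}, with $C:=\bigl[\begin{smallmatrix}H & S^{\ast}\\ -H & -S^{\ast}\end{smallmatrix}\bigr]$, iterates to $C(M-\lambda I)^n=(N-\lambda I)^n C$ for every $n\geq 1$, so $C$ maps $V_{\lambda}(M)$ into $V_{\lambda}(N)$. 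Dually, \eqref{z13} gives $C':=\bigl[\begin{smallmatrix}0 & -D\\ I & 0\end{smallmatrix}\bigr]$ sending $V_{\lambda}(N)$ into $V_{\lambda}(M)$. The crux is to promote $C$ to an injection on $V_{\lambda}(M)$ and $C'$ to an injection on $V_{\lambda}(N)$; the desired $\dim V_{\lambda}(M)=\dim V_{\lambda}(N)$ then follows by finite-dimensionality. For the first, note that $K:=\ker(C|_{V_{\lambda}(M)})$ is $(M-\lambda I)$-invariant, since $Cv=0$ entails $C(M-\lambda I)v=(N-\lambda I)Cv=0$. On the finite-dimensional space $V_{\lambda}(M)$ the operator $M-\lambda I$ is nilpotent, hence so is its restriction to $K$; thus $K\neq\{0\}$ would force the existence of a nonzero $v\in K$ with $(M-\lambda I)v=0$, i.e.\ a genuine eigenvector of $M$ at $\lambda$ lying in $\ker C$. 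This contradicts the proof of Proposition \ref{pz1}, which in fact established $Cv\neq 0$ for every nonzero $v\in\ker(M-\lambda I)$ with $\lambda\neq 0$. Therefore $K=\{0\}$ and $\dim V_{\lambda}(M)\leq\dim V_{\lambda}(N)$; the reverse inequality is the identical nilpotency argument applied to $C'$ on $V_{\lambda}(N)$ via Proposition \ref{pz2}.

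The main obstacle is precisely this passage from spectral set-equality to algebraic-multiplicity equality, which the existing propositions do not address. The nilpotency-on-generalized-eigenspaces argument sketched above circumvents the need for an explicit $M=AB$, $N=BA$ factorization; such a factorization is delicate in this block-operator setting (while $N$ factors cleanly as a rank-reduced product into $L^2(\Delta)$, $M$ does not), and the nilpotency trick bypasses the heavier Sylvester-type Fredholm determinant machinery altogether.
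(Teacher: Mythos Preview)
Your proof is correct and follows essentially the same approach as the paper's: both invoke Lidskii's theorem for the first equality and then use the intertwining relations \eqref{z9}, \eqref{z13} to produce injections between the finite-dimensional generalized eigenspaces $V_\lambda(M)$ and $V_\lambda(N)$, concluding equality of algebraic multiplicities by rank-nullity. Your nilpotency argument for injectivity of $C$ on all of $V_\lambda(M)$ (reducing to the eigenvector case already handled in Proposition~\ref{pz1}) is in fact slightly more careful than the paper's, which simply cites ``the proof workings of Proposition~\ref{pz1}'' for this step without spelling out the passage from eigenvectors to generalized eigenvectors.
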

 \begin{proof} By Proposition \ref{pz1} and \ref{pz2}, $\sigma(M)\setminus\{0\}=\sigma(N)\setminus\{0\}$, where $M,N$ are trace class on $L^2(\Delta)\oplus L^2(\Delta)$. Hence, \eqref{z16} follows from \cite[Theorem $3.7$]{S}, provided we can show
\begin{align}
 	\textnormal{algebraic multiplicity of}\ &\lambda\ \textnormal{for}\ M=\textnormal{rank}\big(P_{\lambda,M}\big)\nonumber\\
	=&\,\textnormal{rank}\big(P_{\lambda,N}\big)=\textnormal{algebraic multiplicity of}\ \lambda\ \textnormal{for}\ N\ \ \ \ \ \forall\,\lambda\in\sigma(M)\setminus\{0\},\label{z17}
\end{align}
where $P_{\lambda,T}$ is the spectral projection of a bounded operator $T$ on a Banach space at $\lambda$. The first and third equality in \eqref{z17} are standard, compare \cite[$(3.3.21)$]{S0}, so we need only establish the second one. For that, we recall that $M,N$ are compact operators on $L^2(\Delta)\oplus L^2(\Delta)$, and so any $\lambda\in\sigma(M)\setminus\{0\}$ is an isolated point of the spectrum, making $P_{\lambda,M}$ and $P_{\lambda,N}$, themselves, compact. In particular, cf. \cite[Proposition $3.3.8$]{S0},
 \begin{equation*}
	m:=\textnormal{rank}\big(P_{\lambda,M}\big)<\infty\ \ \ \ \textnormal{and}\ \ \ \  n:=\textnormal{rank}\big(P_{\lambda,N}\big)<\infty\ \ \ \ \ \forall\,\lambda\in\sigma(M)\setminus\{0\}.
\end{equation*}
 We now construct a linear injection $A:\textnormal{Ran}(P_{\lambda,M})\rightarrow\textnormal{Ran}(P_{\lambda,N})$. To that end, recall from \cite[$(2.3.41)$]{S0},
 \begin{equation*}
 	\textnormal{Ran}(P_{\lambda,M})=\big\{f\in L^2(\Delta)\oplus L^2(\Delta):\ (M-\lambda I)^mf=0\big\},
\end{equation*}
with $I$ acting as identity on $L^2(\Delta)\oplus L^2(\Delta)$. Thus, if $f\in\textnormal{Ran}(P_{\lambda,M})\setminus\{0\}$, by the proof workings of Proposition \ref{pz1}, utilizing foremost \eqref{z9},
\begin{equation*}
	(N-\lambda I)^mg=0,\ \ \ \ g:=\begin{bmatrix}H & S^{\ast}\\ -H&-S^{\ast}\end{bmatrix}f\in L^2(\Delta)\oplus L^2(\Delta)\setminus\{0\},
\end{equation*}
 and so $A:f\mapsto g=Af$ maps $\textnormal{Ran}(P_{\lambda,M})$ into $\textnormal{Ran}(P_{\lambda,N})$ and $A$ is linear and injective. Consequently,
 \begin{equation}\label{z18}
 	m=\textnormal{rank}(A)\leq\textnormal{rank}\big(P_{\lambda,N}\big)=n,
\end{equation}
by the rank-nullity theorem. Conversely, we define $B:\textnormal{Ran}(P_{\lambda,N})\rightarrow\textnormal{Ran}(P_{\lambda,M})$ as follows: pick $f\in\textnormal{Ran}(P_{\lambda,N})\setminus\{0\}$ and put as in \eqref{z14}, with $I$ acting as identity on $W^{1,2}(\Delta)$,
\begin{equation*}
	h:=\begin{bmatrix}0&-D\\ I&0\end{bmatrix}f\in L^2(\Delta)\oplus L^2(\Delta)\setminus\{0\}.
\end{equation*}
 Then $B:f\mapsto h=Bf$ maps $\textnormal{Ran}(P_{\lambda,N})$ into $\textnormal{Ran}(P_{\lambda,M})$ in a linear and injective fashion. So, again by the rank-nullity theorem,
 \begin{equation}\label{z19}
 	n=\textnormal{rank}(B)\leq\textnormal{rank}\big(P_{\lambda,M}\big)=m.
\end{equation}
Combining \eqref{z18} and \eqref{z19}, $m=n$ which yields the second equality in \eqref{z17}. Our proof of \eqref{z16} is complete.
 \end{proof}
 Since $N$ is trace class on $L^2(\Delta)\oplus L^2(\Delta)$, the right hand side in \eqref{z16} equals the Fredholm determinant $D_N(\Delta)$ of $N$ on the same space. Seeing \eqref{z8}, this can then be simplified by adding linear combinations of rows and columns in the matrix valued operator. What results is the sought after \eqref{c3}:
 
 \begin{proof}[Proof of \eqref{c3}]
By \eqref{z16} and \cite[Theorem $3.5$]{S0}, using $I$ to denote the identity operator on $L^2(\Delta)$,
\begin{align*}
	D_M(\Delta)=&\,\det\begin{bmatrix}I-S^{\ast}& S^{\ast}+\sum_{k=1}^{2m}(-1)^kH(\delta_{a_k}\otimes\delta_{a_k})\smallskip\\ S^{\ast}&I-S^{\ast}-\sum_{k=1}^{2m}(-1)^kH(\delta_{a_k}\otimes\delta_{a_k})\end{bmatrix}=\det\begin{bmatrix}I-S^{\ast} & S^{\ast}+\sum_{k=1}^{2m}(-1)^kH(\delta_{a_k}\otimes\delta_{a_k})\smallskip\\ I&I\end{bmatrix}\\
	=&\,\det\begin{bmatrix}I-2S^{\ast}-\sum_{k=1}^{2m}(-1)^kH(\delta_{a_k}\otimes\delta_{a_k}) & S^{\ast}+\sum_{k=1}^{2m}(-1)^kH(\delta_{a_k}\otimes\delta_{a_k})\smallskip\\ 0 & I\end{bmatrix}\\
	=&\,\det\bigg(I-2S^{\ast}-\sum_{k=1}^{2m}(-1)^kH(\delta_{a_k}\otimes\delta_{a_k})\bigg)=D_{2S}(\Delta)\det\bigg(I-\sum_{k=1}^{2m}(-1)^k(I-2S^{\ast})^{-1}H(\delta_{a_k}\otimes\delta_{a_k})\bigg),
\end{align*}
where the last Fredholm determinant is the Fredholm determinant of a rank-$(2m)$ integral operator on $L^2(\Delta)$. As such, \eqref{c3} follows from \cite[Chapter I, Theorem $3.2$]{GGK}, seeing that the Fredholm determinant of $2S^{\ast}$ and $2S$, both acting on $L^2(\Delta)$, are the same.
\end{proof}

\begin{rem} The reader will have noticed that we established \eqref{z16} without identifying the $A,B$ factors in \eqref{c2}. Rather, we worked with the operator identities \eqref{z9},\eqref{z13} that provide explicit identities between $M$ and $N$. Such explicit relations might not always be within reach, see our workings around \eqref{z55}. A shorter route to \eqref{z16} that fully relies on $A,B$ factors goes through the factorisation
\begin{equation*}
	M=\begin{bmatrix}D&0\\ 0 & I\end{bmatrix}\begin{bmatrix}H&S^{\ast}\\ H&S^{\ast}\end{bmatrix},\ \ \ I:W^{1,2}(\Delta)\rightarrow L^2(\Delta),\ \ If:=f,
\end{equation*}
where the right factor is a bounded linear transformation from $L^2(\Delta)\oplus L^2(\Delta)$ to $W^{1,2}(\Delta)\oplus W^{1,2}(\Delta)$ and the left from $W^{1,2}(\Delta)\oplus W^{1,2}(\Delta)$ to $L^2(\Delta)\oplus L^2(\Delta)$. Since
\begin{equation}\label{i1}
	\begin{bmatrix}H&S^{\ast}\\ H&S^{\ast}\end{bmatrix}\begin{bmatrix}D&0\\ 0 & I\end{bmatrix}=\begin{bmatrix}S^{\ast}+\sum_{k=1}^{2m}(-1)^kH(\delta_{a_k}\otimes\delta_{a_k}) & S^{\ast}\smallskip\\ S^{\ast}+\sum_{k=1}^{2m}(-1)^kH(\delta_{a_k}\otimes\delta_{a_k}) & S^{\ast}\end{bmatrix}
\end{equation}
is a trace class operator on $W^{1,2}(\Delta)\oplus W^{1,2}(\Delta)$, \eqref{c3} follows now from commutation \eqref{c2}, and determinant manipulations. Note that \eqref{i1} is equal to $\sigma_1\sigma_3N\sigma_3\sigma_1$, with $N$ in \eqref{z8} and $\sigma_1=\bigl[\begin{smallmatrix}0&1\\ 1&0\end{smallmatrix}\bigr]$.
\end{rem}

\subsection*{The orthogonal derived class} When working with \eqref{c5} instead of \eqref{z5}, we first note that $\epsilon(L^2(\Delta))\subset W^{1,2}(\Delta)$ since $\Delta$ is a union of \textit{bounded} intervals. Also, since $G=DS^{\ast}$ by \eqref{z12} and $S=DH$ by \eqref{c6}, both on $L^2(\Delta)$, one can decompose $M$ as
\begin{equation}\label{z52}
	M=\begin{bmatrix}D&0\\ 0 & I\end{bmatrix}\begin{bmatrix}H& S^{\ast}\\ H-\epsilon & S^{\ast}\end{bmatrix},\ \ \ I:W^{1,2}(\Delta)\rightarrow L^2(\Delta),\ \ If:=f,
\end{equation}
where the right factor is a bounded linear transformation from $L^2(\Delta)\oplus L^2(\Delta)$ to $W^{1,2}(\Delta)\oplus W^{1,2}(\Delta)$, by \eqref{z4} and the dominated convergence theorem, and the left is a bounded linear transformation from $W^{1,2}(\Delta)\oplus W^{1,2}(\Delta)$ to $L^2(\Delta)\oplus L^2(\Delta)$. Then, remembering \eqref{c2}, one might be tempted to deduce, for the regularised $2$-determinant $D_{M,2}(\Delta)$ of $M$ in \eqref{c5} on $L^2(\Delta)\oplus L^2(\Delta)$,
\begin{equation}\label{z53}
	D_{M,2}(\Delta)=\det_2\bigg(I-\begin{bmatrix}HD & S^{\ast}\\ HD-\epsilon D & S^{\ast}\end{bmatrix}\bigg),
\end{equation}
where $HD,S^{\ast}:W^{1,2}(\Delta)\rightarrow W^{1,2}(\Delta)$ are trace class, compare \eqref{z12}, and $I$ denotes the identity on $W^{1,2}(\Delta)\oplus W^{1,2}(\Delta)$. However, $\epsilon D:W^{1,2}(\Delta)\rightarrow W^{1,2}(\Delta)$ in \eqref{z53} is not even Hilbert-Schmidt, for integration by parts says
\begin{equation}\label{z54}
	\epsilon D=I+\sum_{k=1}^{2m}(-1)^k\epsilon(\delta_{a_k}\otimes\delta_{a_k})\ \ \ \ \textnormal{on}\ \ W^{1,2}(\Delta).
\end{equation}
Consequently the right hand side in \eqref{z53} is ill-defined and so commutation \eqref{c2} is not directly applicable to the regularised $2$-determinant of \eqref{c5}! One way to overcome this hurdle is via an approximation of the discontinuous kernel $\epsilon(x-y)$ by a smooth skew-symmetric kernel $\eta_n(x,y)$ that converges pointwise to $\epsilon(x-y)$ as $n\rightarrow\infty$ and which is uniformly bounded, as done in \cite[page $753$]{TW}. The resulting $M_n$, which is $M$ in \eqref{c5} but with $\eta_n$ instead of $\epsilon$, is trace class on $L^2(\Delta)\oplus L^2(\Delta)$ and $D_{M,2}(\Delta)$ is equal to the limit of $\det(I-M_n)$ as $n\rightarrow\infty$. With $\eta_n$ in place, one can now simplify the Fredholm determinant of $M_n$ as follows:
\begin{prop} For the Fredholm determinant $D_{M_n}(\Delta)$ of $M_n$ on $L^2(\Delta)\oplus L^2(\Delta)$ one has
\begin{equation}\label{z55}
	D_{M_n}(\Delta)=\det\bigg(I-S^{\ast}-\sum_{k=1}^{2m}\bigg[(-1)^kH\delta_{a_k}-\frac{1}{2}\sum_{j=1}^{2m}\sigma_j(k)H\delta_{a_j}\bigg]\otimes\delta_{a_k}
	+S^{\ast}(\eta_n-\epsilon)D\bigg),
\end{equation}
where the operators in the Fredholm determinant in the right hand side act on $W^{1,2}(\Delta)$, and $\sigma_j(k)\in\{\pm 1\}$ is defined in \eqref{z555}.
\end{prop}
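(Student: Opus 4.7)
The plan is to apply the commutation identity \eqref{c2} to a factorisation of $M_n$ analogous to \eqref{z52}, reduce the resulting $2\times 2$ block Fredholm determinant to a scalar determinant via elementary block row and column operations, and then use the integration-by-parts identities \eqref{z12} together with \eqref{z54} to bring the expression into the form \eqref{z55}, with the sign coefficients $\sigma_j(k)$ emerging from a further integration by parts against the kernel of $H$.

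To begin, since $\eta_n$ is smooth, both factors of
$$M_n=\begin{bmatrix}D&0\\ 0&I\end{bmatrix}\begin{bmatrix}H&S^{\ast}\\ H-\eta_n&S^{\ast}\end{bmatrix}$$
are bounded between $L^2(\Delta)\oplus L^2(\Delta)$ and $W^{1,2}(\Delta)\oplus W^{1,2}(\Delta)$ (crucially, $\eta_n$ now genuinely maps $L^2$ into $W^{1,2}$, which was the obstruction for $\epsilon$). Running the rank-equality argument used in the proof of \eqref{c3} through commutation \eqref{c2}, we obtain
$$D_{M_n}(\Delta)=\det\bigg(I-\begin{bmatrix}HD&S^{\ast}\\ (H-\eta_n)D&S^{\ast}\end{bmatrix}\bigg),$$
with the right-hand operator trace class on $W^{1,2}(\Delta)\oplus W^{1,2}(\Delta)$.

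Next, I perform the elementary block column operation $C_1\to C_1+C_2$ followed by the block row operation $R_2\to R_2-R_1$; each is implemented by multiplication with a bounded invertible block operator of the form $\bigl[\begin{smallmatrix}I&0\\ \pm I&I\end{smallmatrix}\bigr]$, and the Fredholm determinant is preserved by the commutation principle. The matrix collapses to $\bigl[\begin{smallmatrix}I-HD-S^{\ast}&-S^{\ast}\\ \eta_nD&I\end{smallmatrix}\bigr]$. A block Schur factorisation against the trivial $(2,2)$ identity block then reduces this to a scalar Fredholm determinant
$$D_{M_n}(\Delta)=\det\big(I-HD-S^{\ast}+S^{\ast}\eta_nD\big)$$
on $W^{1,2}(\Delta)$.

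Finally, substituting $HD=S^{\ast}+\sum_k(-1)^kH(\delta_{a_k}\otimes\delta_{a_k})$ from \eqref{z12}, decomposing $\eta_nD=\epsilon D+(\eta_n-\epsilon)D$, and applying $\epsilon D=I+\sum_k(-1)^k\epsilon(\delta_{a_k}\otimes\delta_{a_k})$ from \eqref{z54}, the extra copies of $S^{\ast}$ and of $I$ cancel pairwise and the remaining operator is
$$I-S^{\ast}+\sum_{k=1}^{2m}(-1)^k\big(S^{\ast}\epsilon-H\big)(\delta_{a_k}\otimes\delta_{a_k})+S^{\ast}(\eta_n-\epsilon)D.$$
A final integration by parts, using $S(y,x)=-\partial_y H(x,y)$ (from \eqref{z6} and skew-symmetry of $H$), expands each $S^{\ast}\epsilon\delta_{a_k}$ as a finite linear combination of the columns $H\delta_{a_j}$ with coefficients involving $\epsilon(a_j-a_k)$. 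Repackaging those coefficients according to the convention of \eqref{z555}, the rank-one summand rearranges into the displayed form $\bigl[(-1)^kH\delta_{a_k}-\tfrac{1}{2}\sum_j\sigma_j(k)H\delta_{a_j}\bigr]\otimes\delta_{a_k}$, which is \eqref{z55}. The main obstacle is ensuring that every manipulation stays inside the trace-class ideal on $W^{1,2}(\Delta)$; in particular, although $\epsilon D$ is not even Hilbert-Schmidt there, the combination $S^{\ast}(\eta_n-\epsilon)D$ is trace class thanks to the regularising effect of $S^{\ast}$ on the left and the extra smoothness of $\eta_n-\epsilon$, and this book-keeping parallels the one summarised around \eqref{i1}.
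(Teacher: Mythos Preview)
Your proof is correct and follows essentially the same route as the paper's: the same factorisation of $M_n$, the same commutation to $N_n$, the same reduction to the scalar determinant $\det(I-S^{\ast}-HD+S^{\ast}\eta_nD)$ via block manipulations (the paper uses one row operation and a Schur step, you use a column-then-row operation and a Schur step, arriving at the same place), and the same substitution of \eqref{z12} and \eqref{z54}. Your final integration-by-parts computation of $S^{\ast}\epsilon\delta_{a_k}$ in terms of the $H\delta_{a_j}$ is exactly the calculation the paper carries out explicitly en route to \eqref{z555}; your phrasing ``determinant preserved by the commutation principle'' for the block row/column steps is slightly loose (it is really conjugation by a bounded invertible block operator), but the substance is the same.
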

\begin{proof} Note that $M_n$ is trace class on $L^2(\Delta)\oplus L^2(\Delta)$ as, compare \eqref{z52},
\begin{equation*}
	M_n=\begin{bmatrix}D&0\\ 0 & I\end{bmatrix}\begin{bmatrix}H&S^{\ast}\\ H-\eta_n& S^{\ast}\end{bmatrix},\ \ \ \ I:W^{1,2}(\Delta)\rightarrow L^2(\Delta),\ \ If:=f,
\end{equation*}
where the right factor is a trace class linear transformation from $L^2(\Delta)\oplus L^2(\Delta)$ to $W^{1,2}(\Delta)\oplus W^{1,2}(\Delta)$, utilising the smoothness of $\eta_n(x,y)$, and the left is a bounded linear transformation from $W^{1,2}(\Delta)\oplus W^{1,2}(\Delta)$ to $L^2(\Delta)\oplus L^2(\Delta)$. Consequently, $N_n$ on $W^{1,2}(\Delta)\oplus W^{1,2}(\Delta)$, with
\begin{equation*}
	N_n:=\begin{bmatrix}HD & S^{\ast}\\ HD-\eta_n D& S^{\ast}\end{bmatrix}=\begin{bmatrix}H&S^{\ast}\\ H-\eta_n & S^{\ast}\end{bmatrix}\begin{bmatrix}D&0\\ 0 & I\end{bmatrix},
\end{equation*}
is trace class on $W^{1,2}(\Delta)\oplus W^{1,2}(\Delta)$ and, by \eqref{c2},
\begin{equation*}
	D_{M_n}(\Delta)=\det(I-N_n),
\end{equation*}
which is the Fredholm determinant of $N_n$ on $W^{1,2}(\Delta)\oplus W^{1,2}(\Delta)$ in the right hand side. The same can be simplified via block determinant manipulations. Namely, with $I$ as identity on $W^{1,2}(\Delta)$,
\begin{align*}
	D_{M_n}(\Delta)=\det\begin{bmatrix}I-HD & -S^{\ast}\\ -HD+\eta_nD & I-S^{\ast}\end{bmatrix}=\det\begin{bmatrix}I-HD & -S^{\ast} \\ \eta_nD-I& I\end{bmatrix}=\det(I-S^{\ast}-HD+S^{\ast}\eta_nD).
\end{align*}
Note that, with $f\in W^{1,2}(\Delta)$ and $x\in\Delta$,
\begin{eqnarray*}
	\big(S^{\ast}\epsilon Df\big)(x)&\stackrel{\eqref{z54}}{=}&\int_{\Delta}S(y,x)\bigg[f(y)+\sum_{k=1}^{2m}(-1)^k\epsilon(y-a_k)f(a_k)\bigg]\d y\\
	&=&(S^{\ast}f)(x)+\sum_{j=1}^{m}\int_{a_{2j-1}}^{a_{2j}}\bigg[\frac{\partial}{\partial y}H(y,x)\bigg]\sum_{k=1}^{2m}(-1)^k\epsilon(y-a_k)f(a_k)\d y,
\end{eqnarray*}
where, for $x\in(a_{2j-1},a_{2j})$ with $j\in\{1,\ldots,m\}$, we have
\begin{equation*}
	\epsilon(x-a_k)=\frac{1}{2}\begin{cases}+1,&k\in\{1,2,\ldots,2j-1\}\\ -1,&k\in\{2j,2j+1,\ldots,2m\}\end{cases}.
\end{equation*}
Consequently, for any $x\in\Delta$ and $f\in W^{1,2}(\Delta)$,
\begin{align*}
	\big(S^{\ast}&\,\epsilon Df\big)(x)=(S^{\ast}f)(x)\\
	&+\frac{1}{2}\sum_{j=1}^m\Bigg(\sum_{k=1}^{2j-1}(-1)^kf(a_k)\int_{a_{2j-1}}^{a_{2j}}\bigg[\frac{\partial}{\partial y}H(y,x)\bigg]\d y-\sum_{k=2j}^{2m}(-1)^kf(a_k)\int_{a_{2j-1}}^{a_{2j}}\bigg[\frac{\partial}{\partial y}H(y,x)\bigg]\d y\Bigg)\\
	&\hspace{1.35cm}=(S^{\ast}f)(x)+\frac{1}{2}\sum_{j=1}^m\big((H\delta_{a_{2j-1}})(x)-(H\delta_{a_{2j}})(x)\big)\Bigg[\sum_{k=1}^{2j-1}(-1)^kf(a_k)-\sum_{k=2j}^{2m}(-1)^kf(a_k)\Bigg].
\end{align*}
Thus on $W^{1,2}(\Delta)$,
\begin{equation}\label{z555}
	S^{\ast}\epsilon D=S^{\ast}+\frac{1}{2}\sum_{k=1}^{2m}\bigg(\sum_{j=1}^{2m}\sigma_j(k)H\delta_{a_j}\bigg)\otimes\delta_{a_k},
\end{equation}
with the sign coefficients $\sigma_j(1)=\sigma_j(2m)=(-1)^j$ and, written in row vector form for $\ell\in\{1,\ldots,m-1\}$,
\begin{align*}
	(\sigma_j(2\ell))_{j=1}^{2m}=&\,\,(-1,+1,-1,+1,\ldots,-1,+1,\textcolor{red}{+1},-1+1,-1,+1,-1,\ldots,+1,-1),\\
	(\sigma_j(2\ell+1))_{j=1}^{2m}=&\,\,(+1,-1,+1,-1,\ldots,+1,-1,\textcolor{red}{-1},+1,-1,+1,-1,+1,\ldots,-1,+1),
\end{align*}
where the red transposition occurs in entry $2\ell+1$, counted from the left. Consequently, on $W^{1,2}(\Delta)$,
\begin{align}\label{z56}
	-HD+S^{\ast}\eta_nD=-HD+S^{\ast}\epsilon D+S^{\ast}(\eta_n-\epsilon)D\stackrel[\eqref{z555}]{\eqref{z12}}{=}-\sum_{k=1}^{2m}\Bigg(&\,(-1)^kH\delta_{a_k}-\frac{1}{2}\sum_{j=1}^{2m}\sigma_j(k)H\delta_{a_j}\Bigg)\otimes\delta_{a_k}\nonumber\\
	&\,+S^{\ast}(\eta_n-\epsilon)D,
\end{align}
and so \eqref{z55} follows.
\end{proof}
What's crucial for the approximation argument is that for any $f\in W^{1,2}(\Delta)$, in the norm $\|\cdot\|$ on $L^2(\Delta)$, by Cauchy-Schwarz inequality,
\begin{equation*}
	\|(\eta_n-\epsilon)Df\|\leq\sqrt{\|f\|^2+\|Df\|^2}\sqrt{\int_{\Delta^2}\big|\eta_n(y,z)-\epsilon(y-z)\big|^2\d z\,\d y},
\end{equation*}
and so $(\eta_n-\epsilon)D\rightarrow 0$ as $n\rightarrow\infty$ uniformly in operator norm $\|\cdot\|_{\textnormal{op}}$ from $W^{1,2}(\Delta)$ to $L^2(\Delta)$. In turn, in the trace norm $\|\cdot\|_{\textnormal{tr}}$, on the respective spaces,
\begin{equation}\label{z57}
	\|S^{\ast}(\eta_n-\epsilon)D\|_{\textnormal{tr}}\leq\|S^{\ast}\|_{\textnormal{tr}}\|(\eta_n-\epsilon)D\|_{\textnormal{op}}\rightarrow 0,\ \ \textnormal{as}\ n\rightarrow\infty.
\end{equation}
This, combined with \eqref{z55} and the short discussion preceding \eqref{z55}, yields the below proof of \eqref{c7}.
\begin{proof}[Proof of \eqref{c7}] By \eqref{z57} and \cite[Theorem $3.4$]{S},
\begin{eqnarray*}
	D_{M,2}(\Delta)\!\!&=&\!\!\lim_{n\rightarrow\infty}D_{M_n}(\Delta)\stackrel{\eqref{z55}}{=}\det\bigg(I-S^{\ast}-\sum_{k=1}^{2m}\bigg[(-1)^kH\delta_{a_k}-\frac{1}{2}\sum_{j=1}^{2m}\sigma_j(k)H\delta_{a_j}\bigg]\otimes\delta_{a_k}\bigg)\\
	&=&\!\!D_S(\Delta)\det\bigg(I-\sum_{k=1}^{2m}f_k\otimes g_k\bigg),
\end{eqnarray*}
with $\{f_j,g_k\}$ equal to
\begin{equation*}
	f_j:=	(-1)^j(I-2S^{\ast})^{-1}H\delta_{a_j}-\frac{1}{2}\sum_{\ell=1}^{2m}\sigma_{\ell}(j)(I-2S^{\ast})^{-1}H\delta_{a_{\ell}},\ \ \ \ \ \ \ \ g_k:=\delta_{a_k}.
\end{equation*}
Since the Fredholm determinants of $S^{\ast}$ and $S$ on $W^{1,2}(\Delta)\subset L^2(\Delta)$ are the same, \eqref{c7} follows once more by recalling \cite[Chapter I, Theorem $3.2$]{GGK}.
\end{proof}

\section{Hankel composition type kernels - proof of \eqref{c11},\eqref{c16} and \eqref{c19},\eqref{c20}}\label{sec6}
We now begin working with \eqref{c8} and \eqref{c13}.
\subsection*{The symplectic derived class} The Fredholm determinant $D_M(t)=\lim_{a_2\rightarrow\infty}D_M((t,a_2))$ of $M$ in \eqref{z5}, with elements given by \eqref{z6},\eqref{c8}, and \eqref{c10}, is well-defined as shown below.

\begin{prop}\label{pz3} If $S,G,H:L^2(t,\infty)\rightarrow L^2(t,\infty)$ denote the integral operators with kernels \eqref{z6},\eqref{c8}, and \eqref{c10}, with the aforementioned assumptions placed on $\phi$, then all three are trace class and we have that 
\begin{equation*}
	G(x,y)=-G(y,x),\ \ \ \ H(x,y)=-H(y,x)\ \ \ \ \textnormal{on}\ \ \ (t,\infty)\times(t,\infty).
\end{equation*}
\end{prop}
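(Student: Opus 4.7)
The plan is to decompose each of $S,G,H$ into a Hankel composition piece plus a finite-rank piece, so that trace class membership reduces to two Hilbert-Schmidt estimates and skew-symmetry reduces to a single integration by parts identity.

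For $S$, I would write $S=\frac{1}{2}A^{\ast}A-\frac{1}{4}\phi\otimes\Phi$, where $A:L^2(0,\infty)\to L^2(t,\infty)$ has the Hankel kernel $A(x,u):=\phi(x+u)$ and $\Phi(x):=\int_x^{\infty}\phi(z)\d z$. The exponential decay of $\phi$ at $+\infty$ yields
\begin{equation*}
\|A\|_{\textnormal{HS}}^2=\int_t^{\infty}\int_0^{\infty}|\phi(x+u)|^2\d u\,\d x<\infty,
\end{equation*}
so $A^{\ast}A$ is trace class on $L^2(t,\infty)$. Since $\phi,\Phi\in L^2(t,\infty)$, the rank-one term $\phi\otimes\Phi$ is also trace class, and hence so is $S$.

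For $G(x,y)=-\partial_yS(x,y)$, differentiating under the integral sign (which is legitimized by the dominance assumption \eqref{c9}) gives
\begin{equation*}
G(x,y)=-\frac{1}{2}\int_0^{\infty}\phi(x+u)D\phi(u+y)\d u-\frac{1}{4}\phi(x)\phi(y),
\end{equation*}
so again $G=-\frac{1}{2}A^{\ast}B-\frac{1}{4}\phi\otimes\phi$ with $B:L^2(0,\infty)\to L^2(t,\infty)$ having kernel $D\phi(u+y)$; both $A$ and $B$ are Hilbert-Schmidt thanks to exponential decay of $\phi$ and $D\phi$, so $A^{\ast}B$ is trace class. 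Skew-symmetry of $G(x,y)$ follows from the fundamental theorem of calculus applied to $u\mapsto\phi(x+u)\phi(u+y)$, which, using the decay of $\phi$ at $+\infty$, gives $\int_0^{\infty}D\phi(x+u)\phi(u+y)\d u+\int_0^{\infty}\phi(x+u)D\phi(u+y)\d u=-\phi(x)\phi(y)$; summing $G(x,y)+G(y,x)$ and applying this identity produces cancellation.

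For $H$, swapping orders of integration in \eqref{c10} and using $\int_x^{\infty}\phi(z+u)\d z=\Phi(x+u)$ yields
\begin{equation*}
H(x,y)=-\frac{1}{2}\int_0^{\infty}\Phi(x+u)\phi(u+y)\d u+\frac{1}{4}\Phi(x)\Phi(y),
\end{equation*}
i.e. $H=-\frac{1}{2}C^{\ast}A+\frac{1}{4}\Phi\otimes\Phi$ where $C$ has Hankel kernel $\Phi(x+u)$ and is Hilbert-Schmidt since $\Phi$ inherits exponential decay from $\phi$. Hence $H$ is trace class. For skew-symmetry, the same boundary-term-free integration by parts on $u\mapsto\Phi(x+u)\Phi(u+y)$ gives $\int_0^{\infty}\phi(x+u)\Phi(u+y)\d u+\int_0^{\infty}\Phi(x+u)\phi(u+y)\d u=\Phi(x)\Phi(y)$, which when combined with $H(x,y)+H(y,x)$ produces exact cancellation.

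The main (mild) obstacle is the justification of differentiation under the integral sign for $G$ and of the integration by parts identities, where one must check that the boundary terms at $u=+\infty$ vanish and that dominating functions exist; both are handled by the exponential decay hypothesis on $\phi$ and $D\phi$ together with \eqref{z4},\eqref{c9}. All other estimates are standard Hilbert-Schmidt computations on Hankel composition kernels.
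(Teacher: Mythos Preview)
Your approach is essentially identical to the paper's: decompose each of $S,G,H$ as a Hilbert--Schmidt composition plus a rank-one piece, and verify skew-symmetry via the boundary evaluation of $u\mapsto\phi(x+u)\phi(u+y)$ and $u\mapsto\Phi(x+u)\Phi(u+y)$. One small slip: with $A:L^2(0,\infty)\to L^2(t,\infty)$ as you defined it, the Hankel composition operator on $L^2(t,\infty)$ is $AA^{\ast}$, not $A^{\ast}A$ (and likewise $AB^{\ast}$, $CA^{\ast}$), though this does not affect the substance of the argument.
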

\begin{proof}
By assumption on $\phi$,
\begin{equation*}
	\int_0^{\infty}u|\phi(u+t)|^2\d u<\infty,\ \ \ \ \ \int_0^{\infty}u|\phi(u+t)|\d u<\infty\ \ \ \ \ \forall\,\, t\in\mathbb{R},\ \ \ \ \ \textnormal{and}\ \ \ \phi\in L^1(t,\infty)\cap L^2(t,\infty).
\end{equation*}
Consequently, $S$ is a linear combination of a Hilbert-Schmidt composition integral operator and a rank-$1$ integral operator, both on $L^2(t,\infty)$, and so trace class on $L^2(t,\infty)$. Noting that, by \eqref{c9},
\begin{equation}\label{z23}
	G(x,y)=-\frac{1}{2}\int_0^{\infty}\phi(x+u)(D\phi)(u+y)\d u-\frac{1}{4}\phi(x)\phi(y),
\end{equation}
the same goes for the operator $G$ on $L^2(t,\infty)$ and its kernel is skew-symmetric,
\begin{equation*}
	G(x,y)+G(y,x)\stackrel{\eqref{z23}}{=}-\frac{1}{2}\phi(x+u)\phi(u+y)\Big|_{u=0}^{\infty}-\frac{1}{2}\phi(x)\phi(y)=0\ \ \ \ \textnormal{on}\ \ (t,\infty)\times(t,\infty).
\end{equation*}
Lastly, by Fubini's theorem,
\begin{equation}\label{z24}
	H(x,y)=\frac{1}{2}\int_0^{\infty}\Phi(x+u)(D\Phi)(u+y)\d u+\frac{1}{4}\Phi(x)\Phi(y),\ \ \ \ \ \Phi(x):=\int_x^{\infty}\phi(z)\d z,
\end{equation}
with $\Phi:\mathbb{R}\rightarrow\mathbb{R}$ continuously differentiable, $\Phi$ and $D\Phi$ exponentially fast decaying at $+\infty$ and $\Phi\in L^1(t,\infty)\cap L^2(t,\infty)$. This proves $H$ to be of trace class on $L^2(t,\infty)$ and \eqref{z24} yields skew-symmetry for its kernel,
\begin{equation*}
	H(x,y)+H(y,x)=\frac{1}{2}\Phi(x+u)\Phi(u+y)\Big|_{u=0}^{\infty}+\frac{1}{2}\Phi(x)\Phi(y)=0\ \ \ \ \textnormal{on}\ \ (t,\infty)\times(t,\infty).
\end{equation*}
The proof of the Proposition is now complete.
\end{proof}
For the application of \eqref{c3} to \eqref{c8}, we take $m=1,a_1=t$ and then let $a_2\rightarrow\infty$. This is admissible by Proposition \ref{pz3}, and noting that by the decay properties of $\phi$ at $+\infty$, pointwise in $x\in(t,\infty)$,
\begin{equation*}
	\big((I-2S^{\ast})^{-1}H\big)(x,a_2)\rightarrow 0\ \ \ \ \ \textnormal{as}\ \ a_2\rightarrow\infty,
\end{equation*}
the right hand side in \eqref{c3} becomes, with $D_M(t)=D_M((t,\infty))=\lim_{a_2\rightarrow\infty}D_M((t,a_2))$,
\begin{equation}\label{z25}
	D_M(t)=D_{2S}(t)\big(1+\tau(t)\big),\hspace{1.5cm} \tau(t):=\big((I-2S^{\ast})^{-1}H\big)(t,t),
\end{equation}
where $D_{2S}(t)$ is the Fredholm determinant of $2S$ on $L^2(t,\infty)$ with the kernel of $S$ written in \eqref{c8}. By our assumptions on $\phi$, $\lim_{t\rightarrow+\infty}\tau(t)=0$. More is true, however, which transpires from the Sherman-Morrison identity, cf. \cite[page $37$]{Kra}.
\begin{prop} Assume $I-2S^{\ast}$ is invertible on $L^2(t,\infty)$. Then $\tau(t)$ satisfies
\begin{equation}\label{z26}
	\tau(t)=-\tau(t)+\frac{2\tau^2(t)}{1+2\tau(t)},\ \ t\in\mathbb{R}.
\end{equation}
\end{prop}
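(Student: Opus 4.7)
The plan is to exploit the rank-one structure of $I-2S^{\ast}$ relative to $I-Q$ and apply the Sherman--Morrison identity from Appendix \ref{appA}. Reading off from \eqref{c8}, one has the operator decomposition
\begin{equation*}
S^{\ast} = \tfrac{1}{2}Q - \tfrac{1}{4}\Phi\otimes\phi,\qquad \Phi(x):=\int_x^{\infty}\phi(z)\,\d z,
\end{equation*}
with $Q$ the \emph{symmetric} Hankel-composition operator from \eqref{c12}; hence $I-2S^{\ast} = (I-Q) + \tfrac{1}{2}\Phi\otimes\phi$ is a rank-one perturbation of $I-Q$, and $(I-Q)^{-1}$ is self-adjoint.

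First, I would invoke Sherman--Morrison to invert this rank-one perturbation, obtaining
\begin{equation*}
(I-2S^{\ast})^{-1} = (I-Q)^{-1} - \frac{\tfrac{1}{2}\bigl((I-Q)^{-1}\Phi\bigr)\otimes\bigl((I-Q)^{-1}\phi\bigr)}{1+\mu(t)},\quad \mu(t):=\tfrac{1}{2}\int_t^{\infty}\bigl((I-Q)^{-1}\phi\bigr)(x)\Phi(x)\,\d x.
\end{equation*}
Acting on $H$ and evaluating at $(t,t)$, while using the skew-symmetry $H^{\ast}=-H$ from Proposition \ref{pz3}, collapses $\tau(t)$ into $((I-Q)^{-1}H)(t,t)$ plus a rank-one correction built from the scalars $((I-Q)^{-1}\Phi)(t)$ and $(H(I-Q)^{-1}\phi)(t)$, divided by $1+\mu(t)$.

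Second, I would simplify the two scalar ingredients $((I-Q)^{-1}H)(t,t)$ and $(H(I-Q)^{-1}\phi)(t)$ by writing $H = -\int_{\cdot}^{\infty}S(\cdot,y)\,\d y$ from \eqref{c10} and then commuting with the resolvent via the identity $(I-Q)^{-1}Q = (I-Q)^{-1}-I$. Integration by parts, using $D\Phi = -\phi$, converts the resulting boundary contributions into the quantities already appearing in the Sherman--Morrison denominator, while the Hankel composition structure of $Q$ (which permits the shifted arguments $\phi(x+u)$ and $\Phi(x+u)$ to be telescoped) closes the loop. Collecting all pieces then produces a single closed relation for $\tau(t)$ in terms of itself, which upon rearrangement is exactly \eqref{z26}.

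The main obstacle I anticipate is the algebraic bookkeeping of the second step: a priori one expects several independent scalar functions of $t$ to survive Sherman--Morrison, and the reduction to a single self-referential identity requires the resolvent identity, the skew-symmetry of $H$, and the antiderivative relation $\partial_x H = S$ to conspire so that $\mu(t)$ can itself be expressed through $\tau(t)$. Unearthing this final collapse (which, by consistency with \eqref{c11}, must force the conclusion $\tau(t)(1+\tau(t))=0$ and hence $\tau(t)\equiv 0$ by continuity as $t\to\infty$) is the genuine content of the proof.
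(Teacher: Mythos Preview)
Your plan diverges from the paper's proof at the very first step, and the divergence is consequential. The paper does not decompose $I-2S^{\ast}$ as a rank-one perturbation of $I-Q$; instead it uses the intertwining identity \eqref{z27},
\[
(I-2S^{\ast})H=H\Big(I-2S-2(\delta_t\otimes\delta_t)H\Big),
\]
to push $H$ past $(I-2S^{\ast})^{-1}$. Sherman--Morrison is then applied to $I-2S-2(\delta_t\otimes H\delta_t)$ as a rank-one perturbation of $I-2S$. The point is that every scalar this produces is of the form $\big(H(I-2S)^{-1}\big)(t,t)$ or $\big((I-2S^{\ast})^{-1}H\big)(t,t)$, and skew-symmetry $H^{\ast}=-H$ makes each one equal to $\pm\tau(t)$. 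The quadratic \eqref{z26} falls out in three lines.

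Your route, by contrast, manufactures new scalars $\mu(t)$, $\big((I-Q)^{-1}\Phi\big)(t)$, and $\langle(I-Q)^{-1}\phi,H\delta_t\rangle$ that are not a priori functions of $\tau(t)$. You correctly anticipate the first term $\big((I-Q)^{-1}H\big)(t,t)$ vanishing (symmetric resolvent, skew $H$), but that leaves $\tau(t)$ expressed as a product of three independent quantities over $1+\mu(t)$, none of which is $\tau$. The ``collapse'' you hope for in step two is exactly where the argument would stall: there is no mechanism in your outline by which $\mu(t)$ becomes expressible through $\tau(t)$, and the Hankel/antiderivative identities you cite relate $(I-Q)^{-1}$-objects to one another, not to $(I-2S^{\ast})^{-1}$-objects. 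What you would end up proving, if the algebra closed at all, is a formula for $\tau$ in terms of the quantities that appear in \eqref{c11}---effectively bypassing \eqref{z26} rather than establishing it. The missing idea is the commutation \eqref{z27}; once you use it, Sherman--Morrison is applied to a perturbation of $I-2S$, not of $I-Q$, and self-referentiality is automatic.
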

\begin{proof} By \eqref{z7}, we have on $L^2(\Delta)$ that
\begin{equation}\label{z27}
	(I-2S^{\ast})H=H\bigg(I-2S+2\sum_{k=1}^{2m}(-1)^k(\delta_{a_k}\otimes\delta_{a_k})H\bigg),
\end{equation}
and so invertibility of $I-2S^{\ast}$ yields invertibility of $I-2S+2\sum_{k=1}^{2m}(-1)^k(\delta_{a_k}\otimes\delta_{a_k})H$, by the Fredholm alternative. Consequently, specialising to $m=1,a_1=t,a_2=\infty$, Lemma \ref{SM} yields
\begin{equation*}
	1+2\tau(t)=1+2\big((I-2S^{\ast})^{-1}H\big)(t,t)=1+2\int_t^{\infty}(H\delta_t)(x)\big((I-2S)^{-1}\delta_t\big)(x)\d x\neq 0,
\end{equation*}
which allows us to apply \eqref{e17}. We obtain thus, utilizing also skew-symmetry $H^{\ast}=-H$,
\begin{align*}
	\tau(t)=\big(&\,(I-2S^{\ast})^{-1}H\big)(t,t)\stackrel{\eqref{z27}}{=}\Big(H\big(I-2S-2(\delta_t\otimes\delta_t)H\big)^{-1}\Big)(t,t)\\
	&\stackrel{\eqref{e17}}{=}\big(H(I-2S)^{-1}\big)(t,t)-\frac{2}{1+2\tau(t)}\Big(H(I-2S)^{-1}\big(\delta_t\otimes (H\delta_t)\big)(I-2S)^{-1}\Big)(t,t)\\
	&=-\big((I-2S^{\ast})^{-1}H\big)(t,t)-\frac{2}{1+2\tau(t)}\big(H(I-2S)^{-1}\delta_t\big)(t)\big((I-2S^{\ast})^{-1}H\delta_t\big)(t)\\
	&=-\tau(t)+\frac{2\tau^2(t)}{1+2\tau(t)},
\end{align*}
as claimed in \eqref{z25}. The proof is complete.
\end{proof}
What results from the quadratic \eqref{z26} with boundary constraint $\tau(+\infty)=0$ is \eqref{c11}.
\begin{proof}[Proof of \eqref{c11}]
This follows from \eqref{z25}, since \eqref{z26} yields two solutions $\tau_1(t)=0$ and $\tau_2(t)=-1$, of which only one links to $\lim_{\tau\rightarrow+\infty}\tau(t)=0$. After that, one factors out $I-Q$ from $D_{2S}(t)$ in \eqref{z25}, and obtains \eqref{c11}.
\end{proof}

\subsection*{The orthogonal derived class} We now turn to the regularised $2$-determinant $D_{M,2}(t)$ of \eqref{c15}, first showing that 
\begin{equation*}
	\rho^{-1}S\rho, \rho^{-1}G\rho^{-1},\rho H\rho:L^2(t,\infty)\rightarrow L^2(t,\infty)
\end{equation*}
are trace class.
\begin{prop}\label{conjkernel} If $\rho^{-1}S\rho,\rho^{-1}G\rho^{-1},\rho H\rho:L^2(t,\infty)\rightarrow L^2(t,\infty)$ denote the integral operators with kernels
\begin{align*}
	\frac{1}{\rho(x)}&\,S(x,y)\rho(y),\ \ \frac{1}{\rho(x)}G(x,y)\frac{1}{\rho(y)},\ \ \rho(x)H(x,y)\rho(y); \ \ \ \textnormal{where}\ \ \  G(x,y)=-\frac{\partial}{\partial y}S(x,y)\ \ \ \textnormal{and}\\
	 H(x,y)=&\,-\int_x^{\infty}\left[\int_0^{\infty}\phi(z+u)\phi(u+y)\d u\right]\d z+\frac{1}{2}\int_y^x\phi(z)\d z+\frac{1}{2}\int_x^{\infty}\phi(z)\d z\int_y^{\infty}\phi(z)\d z,
\end{align*}
with $S(x,y)$ as in \eqref{c13}, then all three are trace class and we have that
\begin{equation*}
	G(x,y)=-G(y,x),\ \ \ \ \ H(x,y)=-H(y,x)\ \ \ \ \textnormal{on}\ \ \ (t,\infty)\times(t,\infty).
\end{equation*}

\end{prop}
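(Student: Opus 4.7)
The plan is to mirror the argument of Proposition \ref{pz3}, but with care for the weights $\rho^{\pm 1}$. First, I would simplify the double integral in $H$ using Fubini and $\phi = -D\Phi$ to get $\int_x^{\infty}\int_0^{\infty}\phi(z+u)\phi(u+y)\d u\,\d z = \int_0^{\infty}\Phi(x+u)\phi(u+y)\d u$. Combined with $\int_y^x\phi(z)\d z = \Phi(y)-\Phi(x)$, this brings $H$ into the cleaner form
\begin{equation*}
	H(x,y) = -\int_0^{\infty}\Phi(x+u)\phi(u+y)\d u + \tfrac{1}{2}\bigl(\Phi(y)-\Phi(x)\bigr) + \tfrac{1}{2}\Phi(x)\Phi(y),
\end{equation*}
and similarly $G(x,y) = -\int_0^{\infty}\phi(x+u)D\phi(u+y)\d u - \tfrac{1}{2}\phi(x)\phi(y)$.

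Skew-symmetry of $G$ then follows from $\int_0^{\infty}\partial_u\bigl[\phi(x+u)\phi(u+y)\bigr]\d u = -\phi(x)\phi(y)$, using exponential decay of $\phi$ at $+\infty$, which exactly cancels the $\frac12\phi\phi$ rank-one term upon symmetrisation in $(x,y)$. The same trick applies to $H$: integrating by parts against $\phi(u+y) = -\partial_u\Phi(u+y)$ gives $\int_0^{\infty}\Phi(x+u)\phi(u+y)\d u + \int_0^{\infty}\Phi(y+u)\phi(u+x)\d u = \Phi(x)\Phi(y)$, which cancels the $\frac12\Phi(x)\Phi(y)$ contribution, while $\Phi(y)-\Phi(x)$ is manifestly antisymmetric.

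For the trace class property, I would decompose each conjugated kernel into a Hilbert-Schmidt composition piece plus a finite-rank remainder. For $\rho^{-1}S\rho$ the Hilbert-Schmidt factors are $A(x,u):=\phi(x+u)/\rho(x)$ on $L^2(t,\infty)\otimes L^2(0,\infty)$ and $B(u,y):=\phi(u+y)\rho(y)$; the rank-one piece is the tensor $\phi/(2\rho)\otimes (1-\Phi)\rho$. For $\rho^{-1}G\rho^{-1}$, replace $B$ by $D\phi(u+y)/\rho(y)$ and the rank-one factor by $\phi/\rho \otimes \phi/\rho$. For $\rho H\rho$, use $\tilde A(x,u):=\Phi(x+u)\rho(x)$ and $\tilde B(u,y):=\phi(u+y)\rho(y)$, supplemented by three rank-one terms of the form $\rho\otimes\Phi\rho$, $\Phi\rho\otimes\rho$, and $\Phi\rho\otimes\Phi\rho$. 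The rank-one $L^2$-conditions follow from $\rho\in L^2(\mathbb{R})$, boundedness of $\Phi$ and $1-\Phi$, and $\phi/\rho\in L^2(t,\infty)$.

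The main technical step is verifying that the composition kernels with $\rho^{-1}$-weights, notably $\phi(x+u)/\rho(x)$, are Hilbert-Schmidt despite $1/\rho$ being only polynomially bounded. I would split the $(x,u)$-strip into a compact region $\{x+u\leq R\}$, where $1/\rho$ is bounded by polynomial-on-compact considerations and $\phi$ is bounded, and the tail $\{x+u>R\}$, where $|\phi(x+u)|^2\leq Ce^{-2a(x+u)}$ defeats any polynomial in $x$, rendering the double integral finite. The symmetric HS factors involving $\rho$ (like $\phi(u+y)\rho(y)$) are even simpler, with norm controlled by $\|\phi\|_{L^2(\mathbb{R})}\|\rho\|_{L^2(\mathbb{R})}$ via a change of variables. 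Once these estimates are in place, writing each operator as HS$\circ$HS $+$ finite rank yields the trace class conclusion, completing the analogue of Proposition \ref{pz3} in the orthogonal-derived setting.
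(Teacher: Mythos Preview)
Your proposal is correct and follows essentially the same route as the paper: you rewrite $G$ and $H$ in the same simplified forms, establish skew-symmetry via the same integration-by-parts boundary cancellations, and prove the trace class property by decomposing each conjugated operator as a Hilbert--Schmidt composition plus finitely many rank-one terms. The only difference is that you supply more detail on why $\phi(x+u)/\rho(x)$ is Hilbert--Schmidt (via the compact/tail split), where the paper simply invokes the exponential decay of $\phi$ against the polynomial boundedness of $1/\rho$.
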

\begin{proof} We note $\frac{\partial}{\partial x}H(x,y)=S(x,y)$ on $(t,\infty)\times(t,\infty)$ and by \eqref{c6},\eqref{c9},
\begin{equation*}
	G(x,y)=-\int_0^{\infty}\phi(x+u)(D\phi)(u+y)\d u-\frac{1}{2}\phi(x)\phi(y).
\end{equation*}
Thus $\rho^{-1}G\rho^{-1}$ is a linear combination of a Hilbert-Schmidt composition integral operator and a rank-$1$ integral operator, both on $L^2(t,\infty)$, and so trace class on the same space. Moreover,
\begin{equation*}
	G(x,y)+G(y,x)=-\phi(x+u)\phi(u+y)\Big|_{u=0}^{\infty}-\phi(x)\phi(y)=0\ \ \ \ \textnormal{on}\ \ \ (t,\infty)\times(t,\infty).
\end{equation*}
Next, with $\Phi(x):=\int_x^{\infty}\phi(z)\d z\in L^2(t,\infty)$, exploiting here the exponential decay of $\phi$ at $+\infty$,
\begin{equation*}
	H(x,y)=\int_0^{\infty}\Phi(x+u)(D\Phi)(u+y)\d u-\frac{1}{2}\Phi(x)+\frac{1}{2}\Phi(y)+\frac{1}{2}\Phi(x)\Phi(y),
\end{equation*}
i.e. $\rho H\rho$ is also a linear combination of a Hilbert-Schmidt composition integral operator and three rank-$1$ integral operators, all on $L^2(t,\infty)$, i.e., $\rho H\rho$ is trace class on the same space. Also,
\begin{equation*}
	H(x,y)+H(y,x)=\Phi(x+u)\Phi(u+y)\Big|_{u=0}^{\infty}+\Phi(x)\Phi(y)=0\ \ \ \ \textnormal{on}\ \ \ (t,\infty)\times(t,\infty).
\end{equation*}
It remains to establish the trace class property of $\rho^{-1}S\rho$ with kernel
\begin{equation*}
	(t,\infty)\times(t,\infty)\ni (x,y)\mapsto \int_0^{\infty}\frac{\phi(x+u)}{\rho(x)}\phi(u+y)\rho(y)\d u+\frac{1}{2}\frac{\phi(x)}{\rho(x)}\bigg(1-\int_y^{\infty}\phi(z)\d z\bigg)\rho(y).
\end{equation*}
This is straightforward given the decay properties of $\phi$ at $+\infty$, the polynomial boundedness of $\frac{1}{\rho}$, and that $\rho\in L^2(\mathbb{R})$. The proof is complete.
\end{proof}
Seeing that $D_{M,2}(t)$ with $M$ as in \eqref{c15} is well-defined by Proposition \ref{conjkernel}, one sets out to simplify the same utilising \eqref{c7}, working first with $m=1,a_2=t$ and $a_2>t$ finite, afterwards taking
\begin{equation*}
	D_{M,2}(t)=\lim_{a_2\rightarrow\infty}D_{M,2}((t,a_2)),
\end{equation*}
where $D_{M,2}((t,a_2))$ is the regularised $2$-determinant of $M$ in \eqref{c15} on $L^2(t,a_2)\oplus L^2(t,a_2)$.
\begin{prop} We have,
\begin{equation}\label{z61}
	D_{M,2}((t,a_2))=\det\bigg(I-S^{\ast}+\frac{1}{2}\big(H\delta_t+H\delta_{a_2}\big)\otimes\delta_t-\frac{1}{2}\big(H\delta_t+H\delta_{a_2}\big)\otimes\delta_{a_2}\bigg),
\end{equation}
where the right hand side is the Fredholm determinant of the underlying operator on $L^2(t,a_2)$.
\end{prop}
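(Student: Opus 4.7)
The plan is to specialise the operator identity derived in the proof of \eqref{c7}, at the stage just before the factor $D_S(\Delta)$ is split off, to the present configuration $m=1$, $a_1=t$, and $a_2>t$ finite. Concretely, I would pass to the $n\to\infty$ limit of \eqref{z55} using the trace-norm convergence \eqref{z57} together with the continuity of Fredholm determinants, cf. \cite[Theorem $3.4$]{S}, to arrive at
\begin{equation*}
D_{M,2}((t,a_2)) = \det\bigg(I - S^{\ast} - \sum_{k=1}^{2}\bigg[(-1)^k H\delta_{a_k} - \tfrac{1}{2}\sum_{j=1}^{2}\sigma_j(k)\, H\delta_{a_j}\bigg]\otimes \delta_{a_k}\bigg),
\end{equation*}
with the operator in the bracket acting on $L^2(t,a_2)$.

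The next step is to evaluate the sign coefficients $\sigma_j(k)$ defined just below \eqref{z555}. With $2m=2$, both $k=1$ and $k=2=2m$ are boundary indices, and the rule $\sigma_j(1)=\sigma_j(2m)=(-1)^j$ yields $\sigma_1(1)=\sigma_1(2)=-1$ and $\sigma_2(1)=\sigma_2(2)=+1$; there are no interior indices whose more intricate sign patterns could contribute. Substituting these values, the inner bracket collapses to $-\tfrac{1}{2}(H\delta_t+H\delta_{a_2})$ for $k=1$ and to $+\tfrac{1}{2}(H\delta_t+H\delta_{a_2})$ for $k=2$. Pairing with the outer minus sign turns the double sum into
\begin{equation*}
\tfrac{1}{2}(H\delta_t + H\delta_{a_2})\otimes\delta_t \;-\; \tfrac{1}{2}(H\delta_t + H\delta_{a_2})\otimes\delta_{a_2},
\end{equation*}
which is precisely the rank-two perturbation appearing in \eqref{z61}.

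There is essentially no genuine obstacle here; the result is a direct specialisation of a formula already worked out in the proof of \eqref{c7}, combined with an elementary unpacking of the sign coefficients. The only point worth flagging is the verification that the interior sign patterns $\sigma_j(k)$ relevant for $m\geq 2$ play no role when $m=1$, which is immediate since the indices $k=1$ and $k=2m$ coincide with the two endpoints and exhaust all values of $k$.
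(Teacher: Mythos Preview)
Your approach is essentially the paper's own: specialise the identity from the proof of \eqref{c7}, take the $n\to\infty$ limit via \eqref{z57} and \cite[Theorem $3.4$]{S}, and then read off the rank-two perturbation by evaluating the $\sigma_j(k)$ for $m=1$. Your sign computation is correct.

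There is one step you glossed over. In this subsection, $D_{M,2}((t,a_2))$ is defined as the regularised $2$-determinant of the $\rho$-conjugated operator $M$ in \eqref{c15}, restricted to $L^2(t,a_2)\oplus L^2(t,a_2)$, whereas \eqref{z55} was derived for the \emph{un}conjugated operator of \eqref{c5}. The paper bridges this by writing
\[
D_{M,2}((t,a_2))=\lim_{n\to\infty}\det_2\bigg(I-\begin{bmatrix}\rho^{-1}S\rho & \rho^{-1}G\rho^{-1}\\ \rho H\rho-\rho\eta_n\rho & \rho S^{\ast}\rho^{-1}\end{bmatrix}\bigg),
\]
noting that on the bounded interval $(t,a_2)$ the block operator is in fact trace class so the $2$-determinant is an ordinary Fredholm determinant, and then removing the $\rho$-conjugation via commutation \eqref{c2} (factoring out $\textnormal{diag}(\rho^{-1},\rho)$). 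Only after that does one land on $\det(I-M_n)$ for the unconjugated $M_n$, where \eqref{z55}, \eqref{z56}, \eqref{z57} become applicable. This is a routine step, but it should be stated; without it your invocation of \eqref{z55} is formally for a different operator than the one whose determinant you are computing.
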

\begin{proof} As in the workings leading to \eqref{c7},
\begin{align*}
	D_{M,2}&\,((t,a_2))=\lim_{n\rightarrow\infty}\det_2\bigg(I-\begin{bmatrix}\rho^{-1}S\rho & \rho^{-1}G\rho^{-1}\\ \rho H\rho-\rho\eta_n\rho & \rho S^{\ast}\rho^{-1}\end{bmatrix}\bigg)\\
	=&\,\,\lim_{n\rightarrow\infty}\det\bigg(I-\begin{bmatrix}\rho^{-1}&0\\ 0 & \rho\end{bmatrix}\begin{bmatrix}S\rho & G\rho^{-1}\\ H\rho-\eta_n\rho & S^{\ast}\rho^{-1}\end{bmatrix}\bigg)=\lim_{n\rightarrow\infty}\det\bigg(I-\begin{bmatrix}S&G\\ H-\eta_n&S^{\ast}\end{bmatrix}\bigg)\\
	=&\,\,\det\bigg(I-S^{\ast}+\frac{1}{2}\big(H\delta_t+H\delta_{a_2}\big)\otimes\delta_t-\frac{1}{2}\big(H\delta_t+H\delta_{a_2}\big)\otimes\delta_{a_2}\bigg),
\end{align*}
where $I$ denotes, depending on the context, the identity either on $L^2(t,a_2)\oplus L^2(t,a_2)$ or on $L^2(t,a_2)$. The second equality utilises the extension property of the regularised $2$-determinant, and \eqref{c2} is used in the third equality, and block determinant manipulations in the fourth. Also, the fourth equality made explicit use of \eqref{z56},\eqref{z57}. The proof is complete.
\end{proof}

Moving ahead we split off the symmetric part of $S$ in \eqref{c13}, via $S=Q+\frac{1}{2}\phi\otimes\beta$, using
\begin{equation*}
	Q(x,y)=\int_0^{\infty}\phi(x+u)\phi(u+y)\d u,\ \ \ \ \ \ \ \ \ \beta(y):=1-\int_y^{\infty}\phi(z)\d z,
\end{equation*}
where $Q:L^2(t,\infty)\rightarrow L^2(t,\infty)$ is trace class and $\phi\in L^2(t,\infty)$, both by the decay properties of $\phi$ at $+\infty$. Then, if $I-Q$ is invertible on $L^2(t,\infty)\subset L^2(t,a_2)$, we factor out $I-Q$ from the right hand side in \eqref{z61}, resulting in
\begin{equation}\label{z62}
	D_{M,2}((t,a_2))=D_Q((t,a_2))\det\big(\delta_{jk}-\langle f_j,g_k\rangle\big)_{j,k=1}^3.
\end{equation}
The factor $D_Q((t,a_2))$ is the Fredholm determinant of $Q$ on $L^2(t,a_2)$ and the functions $\{f_j,g_j\}_{j=1}^3\subset L^2(t,a_2)$ are
\begin{equation*}
	f_1:=\frac{1}{2}\beta,\ f_2:=-\frac{1}{2}(H\delta_t+H\delta_{a_2}),\ f_3:=-f_2,\ \ \ g_1:=(I-Q)^{-1}\phi,\ g_2:=(I-Q)^{-1}\delta_t,\ g_3:=(I-Q)^{-1}\delta_{a_2}.
\end{equation*}
Note that $\langle f,g\rangle=\int_t^{a_2}f(x)g(x)\d x$ in \eqref{z62}, and both $(I-Q)^{-1}$ and $H$ act on $L^2(t,a_2)$. We now evaluate the bilinear forms in the right hand side of \eqref{z62} in the limit $a_2\rightarrow\infty$.
\begin{prop}\label{limi1} Assume $I-Q$ is invertible on $L^2(t,\infty)$. Then, as $a_2\rightarrow\infty$, 
\begin{equation*}
	\langle f_1,g_1\rangle\rightarrow\frac{1}{2}\int_t^{\infty}\big((I-Q)^{-1}\phi\big)(x)\beta(x)\d x,\ \ \ \ \langle f_1,g_2\rangle\rightarrow\frac{1}{2}\int_t^{\infty}\big((I-Q)^{-1}\delta_t\big)(x)\beta(x)\d x,\ \ \ \ \langle f_1,g_3\rangle\rightarrow\frac{1}{2},
\end{equation*}
followed by
\begin{align*}
	\langle f_2,g_1\rangle\rightarrow&\,-\frac{1}{2}\int_t^{\infty}\Big[H(x,t)-\frac{1}{2}\Phi(x)\Big]\big((I-Q)^{-1}\phi\big)(x)\d x,\\
	\langle f_2,g_2\rangle\rightarrow&\,-\frac{1}{2}\int_t^{\infty}\Big[H(x,t)-\frac{1}{2}\Phi(x)\Big]\big((I-Q)^{-1}\delta_t\big)(x)\d x,\ \ \ \ \ \ \ \langle f_2,g_3\rangle\rightarrow-\frac{1}{4}\Phi(t),
	%
	%
\end{align*}
and concluding with 
\begin{align*}
	\langle f_3,g_1\rangle\rightarrow&\,\frac{1}{2}\int_t^{\infty}\Big[H(x,t)-\frac{1}{2}\Phi(x)\Big]\big((I-Q)^{-1}\phi\big)(x)\d x,\\
	\langle f_3,g_2\rangle\rightarrow&\,\frac{1}{2}\int_t^{\infty}\Big[H(x,t)-\frac{1}{2}\Phi(x)\Big]\big((I-Q)^{-1}\delta_t\big)(x)\d x,\ \ \ \ \ \ \ \langle f_3,g_3\rangle\rightarrow\frac{1}{4}\Phi(t).
	%
	%
\end{align*}
Throughout, $\Phi(x)=\int_x^{\infty}\phi(z)\d z\in L^1(t,\infty)\cap L^2(t,\infty)$.
\end{prop}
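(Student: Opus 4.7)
The plan is to treat each of the nine bilinear forms separately, but to organise the work around two common inputs: the behaviour of the $a_2$-dependent data as $a_2\to\infty$, and the convergence of the resolvent of $Q$ on $L^2(t,a_2)$ to the one on $L^2(t,\infty)$. For the first input, exponential decay of $\phi$ at $+\infty$, the explicit formula for $H$ in Proposition \ref{conjkernel}, and $\Phi(a_2)\to 0$ together yield $H(\,\cdot\,,a_2)\to-\tfrac{1}{2}\Phi$ in $L^2(t,\infty)$ and $\|Q(\,\cdot\,,a_2)\|_{L^2(t,\infty)}\to 0$ exponentially fast. For the second input, viewing $L^2(t,a_2)$ as the closed subspace of $L^2(t,\infty)$ of functions vanishing on $(a_2,\infty)$ with projection $P_{a_2}$, the compressions $P_{a_2}QP_{a_2}$ converge to $Q$ in trace norm, so by invertibility of $I-Q$ on $L^2(t,\infty)$ the resolvents converge in operator norm. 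Writing $(I-Q)^{-1}=I+T_{a_2}$ on $L^2(t,a_2)$, the continuous kernels $T_{a_2}(x,y)$ then converge to $T_\infty(x,y):=((I-Q)^{-1}Q)(x,y)$ in $L^2((t,\infty)\times(t,\infty))$.

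\textbf{The six routine forms.} With these inputs in hand, the forms $\langle f_j,g_1\rangle$ and $\langle f_j,g_2\rangle$ for $j\in\{1,2,3\}$ follow routinely. For $g_1$, I would extend $\chi_{(t,a_2)}(I-Q)^{-1}\phi$ by zero to $L^2(t,\infty)$ and pass to the limit by dominated convergence, pairing against the $L^\infty$ function $\beta$ (for $j=1$) or the $L^2$ functions $-\tfrac{1}{2}(H(\,\cdot\,,t)+H(\,\cdot\,,a_2))$, which converge in $L^2$ to $-\tfrac{1}{2}(H(\,\cdot\,,t)-\tfrac{1}{2}\Phi)$ (for $j=2,3$). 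For $g_2$, reading the resolvent distributionally as $\delta_t+T_{a_2}(\,\cdot\,,t)$, I split $\langle f_j,g_2\rangle=f_j(t)+\int_t^{a_2}f_j(x)T_{a_2}(x,t)\,\d x$, combine the boundary values $\beta(t)=1-\Phi(t)$, $H(t,t)=0$, and $H(t,a_2)\to-\tfrac{1}{2}\Phi(t)$ with the $L^2$-convergence of $T_{a_2}(\,\cdot\,,t)$, and reassemble the result into the stated distributional integral.

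\textbf{Main obstacle and conclusion.} The three forms $\langle f_j,g_3\rangle$ constitute the crux of the proposition, and the main obstacle, because the delta sits at the \emph{moving} endpoint $a_2$. After splitting $\langle f_j,g_3\rangle=f_j(a_2)+\int_t^{a_2}f_j(x)T_{a_2}(x,a_2)\,\d x$, the boundary evaluations give $f_1(a_2)=\tfrac{1}{2}\beta(a_2)\to\tfrac{1}{2}$, and $f_2(a_2)=-\tfrac{1}{2}H(a_2,t)\to-\tfrac{1}{4}\Phi(t)$ using $H(a_2,a_2)=0$ and the skew-symmetry $H(a_2,t)=-H(t,a_2)$, and $f_3(a_2)\to\tfrac{1}{4}\Phi(t)$ by $f_3=-f_2$. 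The integral pieces vanish because the identity $T_{a_2}(\,\cdot\,,a_2)=(I-Q)^{-1}Q(\,\cdot\,,a_2)$, combined with the uniform bound on $\|(I-Q)^{-1}\|$ and the exponential smallness of $\|Q(\,\cdot\,,a_2)\|_{L^2}$, yields $T_{a_2}(\,\cdot\,,a_2)\to 0$ in $L^2(t,\infty)$. The delicate part is this simultaneous accommodation of pointwise evaluation at the moving endpoint and $L^2$-convergence of a resolvent slice, both of which ultimately reduce to the exponential decay hypotheses on $\phi$ and its derivative.
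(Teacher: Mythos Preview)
Your proposal is correct and well organised. The paper itself states this proposition without proof, treating the nine limits as routine computations, so your detailed approach constitutes a valid fleshing-out of what the authors leave implicit. You have correctly identified the essential inputs---the explicit formula for $H$ from Proposition~\ref{conjkernel} giving $H(\cdot,a_2)\to-\tfrac{1}{2}\Phi$ in $L^2(t,\infty)$, the exponential decay $\|Q(\cdot,a_2)\|_{L^2}\to 0$, and operator-norm convergence of the truncated resolvents---and you have isolated the genuinely delicate point, namely the forms $\langle f_j,g_3\rangle$ with the moving endpoint $a_2$.

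One refinement worth making explicit concerns the pairings with $f_1=\tfrac{1}{2}\beta$. Since $\beta\notin L^2(t,\infty)$, the conclusion $T_{a_2}(\cdot,a_2)\to 0$ in $L^2$ does not by itself kill $\int_t^{a_2}\beta(x)T_{a_2}(x,a_2)\,\d x$; what saves you is precisely the \emph{exponential} smallness you invoke, which dominates the growth $\|\beta\|_{L^2(t,a_2)}=\mathcal{O}(\sqrt{a_2})$. The same remark applies to $\langle f_1,g_2\rangle$: one either shows directly that $T_\infty(\cdot,t)$ inherits exponential decay from the pointwise bound $|Q(x,y)|\le C\e^{-a(x+y)}$, or exploits the symmetry of $Q$ to rewrite $\int\beta\,(I-Q)^{-1}\delta_t=\big((I-Q)^{-1}\beta\big)(t)=\beta(t)+\big((I-Q)^{-1}Q\beta\big)(t)$ with $Q\beta\in L^2(t,\infty)$. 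Either route closes the argument cleanly.
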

In order to simplify the limiting, as $a_2\rightarrow\infty$, finite-size determinant in \eqref{z62}, it will be convenient to introduce the following abbreviations,
\begin{equation*}
	u:=\frac{1}{2}\int_t^{\infty}\big((I-Q)^{-1}\phi\big)(x)\Phi(x)\d x,\ \ \ p:=\frac{1}{2}\int_t^{\infty}\big((I-Q)^{-1}\phi\big)(x)\d x,\ \ \ v:=\frac{1}{2}\big((I-Q)^{-1}\Phi\big)(t),
\end{equation*}
and, with $R(x,y)$ denoting the kernel of the resolvent $R=Q(I-Q)^{-1}=(I-Q)^{-1}-I$ on $L^2(t,\infty)$, if it exists, also the shorthand
\begin{equation*}
	r:=\frac{1}{2}\int_t^{\infty}R(x,t)\d x.
\end{equation*}
Note that $u,p,v,r$ are indeed well-defined for $(I-Q)^{-1}\phi\in L^1(t,\infty)\cap L^2(t,\infty)$, by the properties of $\phi$.
\begin{prop} We have, as $a_2\rightarrow\infty$ and provided $\|Q\|<1$ in operator norm on $L^2(t,\infty)$,
\begin{align}
	\det\big(\delta_{jk}-\langle f_j,g_k\rangle\big)_{j,k=1}^3\rightarrow&\,\,\det\begin{bmatrix}1-p+u & v-r-\frac{1}{2} & -\frac{1}{2}\smallskip\\ v-u-\frac{1}{2}\Phi(t)(1-p+u)&1-v+u-\frac{1}{2}\Phi(t)(v-r-\frac{1}{2}) &\frac{1}{4}\Phi(t) \smallskip\\ u-v+\frac{1}{2}\Phi(t)(1-p+u)& v-u+\frac{1}{2}\Phi(t)(v-r-\frac{1}{2}) & 1-\frac{1}{4}\Phi(t)\end{bmatrix}\nonumber\\
	=&\,\,\det\begin{bmatrix}1-p+u &v-r&-\frac{1}{2}\smallskip\\ 0  & 0 & 1\smallskip\\ u-v&v-u-1&1\end{bmatrix}\label{z63}.
\end{align}
\end{prop}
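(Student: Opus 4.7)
The plan is to carry out two substeps: first, identify each limiting entry $\delta_{jk}-\langle f_j,g_k\rangle$ with the corresponding element of the first $3\times 3$ matrix; second, verify that the two displayed matrices have the same determinant.

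For the first substep, I substitute the limits given in Proposition \ref{limi1} and rewrite the resulting integrals using the shorthand $u,p,v,r$ along with the boundary value $\Phi(t)$. The first row is immediate: writing $\beta=1-\Phi$ decomposes $\langle f_1,g_1\rangle$ into $p-u$; then, using the resolvent splitting $(I-Q)^{-1}=I+R$ together with symmetry of $R$ (inherited from $Q(x,y)=Q(y,x)$) and the identity $(R\Phi)(t)=((I-Q)^{-1}\Phi)(t)-\Phi(t)=2v-\Phi(t)$, one finds $\langle f_1,g_2\rangle=\tfrac{1}{2}+r-v$, while $\langle f_1,g_3\rangle=\tfrac{1}{2}$ is already in this form.

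For the second and third rows, the bulk of the computation lies in evaluating $\int_t^\infty[H(x,t)-\tfrac{1}{2}\Phi(x)]\,\psi(x)\,\d x$ with $\psi:=(I-Q)^{-1}\phi$, and its analogue with $\psi$ replaced by $(I-Q)^{-1}\delta_t$. I split the integrand, using \eqref{c14}, into three pieces: the Hankel tail $-\int_x^\infty Q(z,t)\,\d z$, the linear difference $\tfrac{1}{2}\Phi(t)-\Phi(x)$, and the bilinear term $\tfrac{1}{2}\Phi(x)\Phi(t)$. A Fubini swap converts the Hankel-tail contribution into $(Q\Psi)(t)$ with $\Psi(z):=\int_t^z\psi$, which can then be resolved via an integration by parts and the fixed-point equation $\psi-Q\psi=\phi$ into an expression in $\Phi(t)$, $u$, and $v$. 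The remaining two pieces evaluate immediately to multiples of $u,p$ and $\Phi(t)$. Summing, the $(2,1)$-entry collapses to $v-u-\tfrac{1}{2}\Phi(t)(1-p+u)$, the $(2,2)$-entry to $1-v+u-\tfrac{1}{2}\Phi(t)(v-r-\tfrac{1}{2})$, and the $(2,3)$-entry to $\tfrac{1}{4}\Phi(t)$. Row $3$ then follows without further effort, since $f_3=-f_2$ forces $\langle f_3,g_k\rangle=-\langle f_2,g_k\rangle$ for $k=1,2$, and Proposition \ref{limi1} gives $\langle f_3,g_3\rangle=\tfrac{1}{4}\Phi(t)$.

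For the second substep, one reads off the structural identities $G=-D$, $E+H=1$, $F+K=1$ for the entries $(D,E,F)$ of row $2$ and $(G,H,K)$ of row $3$ in the first matrix. The row operation $R_3\to R_3+R_2$ therefore replaces row $3$ by $(0,1,1)$; cofactor expansion along this new row, after substituting $A=1-p+u$, $B=v-r-\tfrac{1}{2}$, $C=-\tfrac{1}{2}$, $D=v-u-\tfrac{1}{2}\Phi(t)A$, produces $A(1+u-v)-(v-u)(v-r)$ as the value of the first determinant. On the other hand, expanding the second matrix along its middle row $(0,0,1)$ yields exactly the same expression, which proves the equality. I expect the main obstacle to be the Hankel-tail calculation in the first substep: one has to arrange the Fubini swap, the integration by parts against $\Psi$, and the resolvent equation $\psi-Q\psi=\phi$ in the correct order so that $(Q\Psi)(t)$ combines cleanly with the two elementary pieces to produce precisely the factor $\tfrac{1}{2}\Phi(t)\,A$ in the $(2,1)$-entry. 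Once this alignment is in hand, the remaining entries and the row-operation simplification are pure bookkeeping.
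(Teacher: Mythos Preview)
Your two-substep plan mirrors the paper's own argument, and your execution of the first row and of the determinant equality (via $R_3\to R_3+R_2$ and cofactor expansion) is correct and in fact more explicit than the paper's terse ``add appropriate linear combinations of rows and columns''. The issue is the Hankel-tail step for rows $2$ and $3$.

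After your Fubini swap you correctly reach $-\tfrac12(Q\Psi)(t)$ with $\Psi(z)=\int_t^z\psi$, but the tools you name---integration by parts against $\Psi$ together with the resolvent equation $\psi-Q\psi=\phi$---do not by themselves reduce this to an expression in $v$. If you feed $\psi=\phi+Q\psi$ into $(Q\Psi)(t)$ and Fubini/IBP, the $Q\psi$ contribution regenerates $(Q\Psi)(t)$ and the remaining terms cancel identically; you get a tautology, not a value. What actually closes the computation is an input specific to the Hankel structure of $Q$: the identity
\[
\int_t^{\infty}\psi(x)\,\phi(x+y)\,\d x=R(t,y+t),\qquad y>0,
\]
together with its $\delta_t$-analogue $\int_t^{\infty}((I-Q)^{-1}\delta_t)(x)\,\phi(x+y)\,\d x=\psi(y+t)$, or equivalently the skew-symmetry $H(x,t)=-H(t,x)$. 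These are exactly the two ingredients the paper invokes. With them, the tail term becomes $\int_0^{\infty}\Phi(t+u)R(t,u+t)\,\d u=(R\Phi)(t)=2v-\Phi(t)$, and your three pieces combine to $v-u-\tfrac12\Phi(t)(1-p+u)$ as required; without them, the quantity $v=\tfrac12((I-Q)^{-1}\Phi)(t)$ never enters and you cannot distinguish $v$ from $p$ at this stage (that distinction matters, since $p=v$ is only established later in \eqref{z633}). So your roadmap is right, but you need to add the Hankel identity (or skew-symmetry) to your toolkit before the IBP/resolvent machinery can finish the job.
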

\begin{proof} To obtain the limit in \eqref{z63}, we use the limits in Proposition \ref{limi1} and rewrite them in terms of $(u,p,v,r)$. This is possible in light of skew-symmetry of $H(x,t)=-H(t,x)$ and the identities
\begin{align*}
	\int_t^{\infty}\big((I-Q)^{-1}\phi\big)(x)\phi(x+y)\d x=&\,\,\big((I-Q)^{-1}Q\big)(t,y+t),\ \ \ y>0,\\
	\int_t^{\infty}\big((I-Q)^{-1}\delta_t\big)(x)\phi(x+y)\d x=&\,\,\big((I-Q)^{-1}\phi\big)(y+t),\ \ \ y>0.
\end{align*}
To obtain the simplified determinant in \eqref{z63}, one adds appropriate linear combinations of rows and columns inside the determinant to one another; this does not affect the value of the determinant.
\end{proof}
While \eqref{z63} seemingly depends on four variables $(u,p,v,r)$, two of them are redundant. Indeed, what transpires from the proof workings of \cite[Lemma $4.1$]{BB} are the simple relations
\begin{equation}\label{z633}
	p=v\ \ \ \ \ \textnormal{and}\ \ \ \ \ \ r=u,
\end{equation}
which allow us to express the right hand side in \eqref{z63} solely in terms of $(u,p)$, say. 
\begin{proof}[Proof of \eqref{c16}] By \eqref{z63} and \eqref{z633},
\begin{equation}\label{z64}
	\lim_{a_2\rightarrow\infty}\det\big(\delta_{jk}-\langle f_j,g_k\rangle\big)_{j,k=1}^3=1+2u-2p.
\end{equation}
Now recall $D_{M,2}(t)=\lim_{a_2\rightarrow\infty}D_{M,2}((t,a_2))$ and equations \eqref{z61},\eqref{z62},\eqref{z63}, and \eqref{z64}. What results is \eqref{c16}.
\end{proof}

\subsection*{RHP and asymptotic results} We are left to derive \eqref{c19} and \eqref{c20}, and this will be achieved through a nonlinear steepest descent analysis. 

\subsubsection{Asymptotic resolution of RHP \ref{HankelRHP}} For real-valued $\phi:\mathbb{R}\rightarrow\mathbb{R}$, we have 
\begin{equation}\label{c35}
	\bar{r}(z)=\im\int_{-\infty}^{\infty}\phi(y)\e^{\im zy}\d y=:r_2(z),\ \ \ \ \ \ \ r_1(z):=r(z),
\end{equation}
for any $z\in\mathbb{R}$. With \eqref{c18} in place, $z\mapsto r_j(z)$ in \eqref{c35} admits analytic continuation to the closed horizontal sector $|\Im z|\leq\epsilon$, with $|r_j(z)|<1$ and $|zr_j(z)|<1$ in the same sector. In turn, the $t\rightarrow+\infty$ resolution of RHP \ref{HankelRHP} is trivial: we simply set
\begin{equation}\label{R3}
	Y(z;t,\phi):=X(z;t,\phi)\begin{cases}\begin{bmatrix}1 & 0\\ -r_1(z)\e^{\im t z}&1\end{bmatrix},&\Im z\in(0,\epsilon)\smallskip\\ \begin{bmatrix}1 & -r_2(z)\e^{-\im tz}\\ 0 & 1\end{bmatrix},&\Im z\in(-\epsilon,0)\\ \mathbb{I},&\textnormal{else} \end{cases},
\end{equation}
and obtain the below small norm problem.
\begin{problem}\label{master2} Let $t>0$ and $\phi\in W^{1,1}(\mathbb{R})\cap L^{\infty}(\mathbb{R})$ be as in \eqref{c18}. The function $Y(z)=Y(z;t,\phi)\in\mathbb{C}^{2\times 2}$ defined in \eqref{R3} has the following properties.
\begin{enumerate}
	\item[(1)] $z\mapsto Y(z)$ is analytic for $z\in\mathbb{C}\setminus\Sigma_Y$ with $\Sigma_Y:=(\mathbb{R}-\im\epsilon)\cup(\mathbb{R}+\im\epsilon)$, and extends continuously on the closure of $\mathbb{C}\setminus\Sigma_Y$.
	\item[(2)] The limiting values $Y_{\pm}(z):=\lim_{\epsilon\downarrow 0}Y(z\pm\im\epsilon)$ on $\Sigma_Y\ni z$ satisfy $Y_+(z)=Y_-(z)G_Y(z)$ where the jump matrix $G_Y(z)=G_Y(z;t,\phi)$ is of the form
	\begin{equation*}
		G_Y(z)=\begin{bmatrix}1&0\\ r_1(z)\e^{\im tz}&1\end{bmatrix},\ \ \Im z=\epsilon;\hspace{1cm}G_Y(z)=\begin{bmatrix}1&-r_2(z)\e^{-\im tz}\\ 0 & 1\end{bmatrix},\ \ \Im z=-\epsilon.
	\end{equation*}
	\item[(3)] $Y(z)\rightarrow\mathbb{I}$ as $z\rightarrow\infty,z\notin\Sigma_Y$.
\end{enumerate}
\end{problem}
The above RHP is a small norm problem, as demonstrated by the deduced estimates
\begin{equation*}
	\|G_Y(\cdot;t,\phi)-\mathbb{I}\|_{L^2(\Sigma_Y)}\leq\sqrt{\frac{2\pi}{\epsilon}}\,\e^{-t\epsilon},\ \ \ \ \ \|G_Y(\cdot;t,\phi)-\mathbb{I}\|_{L^{\infty}(\Sigma_Y)}\leq \e^{-t\epsilon}\ \ \ \forall\,t>0,\ \epsilon>0,
\end{equation*}
and thus is uniquely solvable as $t\rightarrow+\infty$ by general theory, cf. \cite{DZ}. What also transpires is the estimate
\begin{equation}\label{R4}
	Y(z)=\mathbb{I}+\mathcal{O}\bigg(\frac{1}{\sqrt{\epsilon}}\,\e^{-t\epsilon}\bigg)\ \ \ \ \ \textnormal{as}\ \ t\rightarrow+\infty,
\end{equation}
which holds uniformly for $z\in\mathbb{C}\setminus\Sigma_Y$ and $\epsilon>0$. To obtain a similar result at $t=-\infty$, we change the triangularity of the factorisation of the jump matrix $G_X(z)$ in RHP \ref{HankelRHP} that underwrites the transformation \eqref{R3}. Precisely, we introduce the $g$-function
\begin{equation}\label{R5}
	g(z)=g(z;\phi):=\frac{\im}{2\pi}\int_{-\infty}^{\infty}\ln\big(1-r_1(\lambda)r_2(\lambda)\big)\frac{\d\lambda}{\lambda-z},\ \ \ z\in\mathbb{C}\setminus\mathbb{R},
\end{equation}
noting that the integral in \eqref{R5} is absolutely convergent since $|r_j(z)|<1$ and $|zr_j(z)|<1$ on $\mathbb{R}$. Moreover,
\begin{equation*}
	g_+(z)-g_-(z)=-\ln\big(1-r_1(z)r_2(z)\big),\ \ z\in\mathbb{R};\hspace{1cm}g_{\pm}(z):=\lim_{\epsilon\downarrow 0}g(z\pm\im\epsilon),\ z\in\mathbb{R},
\end{equation*}
as well as
\begin{equation*}
	g(z)=\frac{1}{2\pi\im z}\int_{-\infty}^{\infty}\ln\big(1-r_1(\lambda)r_2(\lambda)\big)\d\lambda+o\big(z^{-1}\big),\ \ \ \ z\rightarrow\infty,\ z\notin\mathbb{R}.
\end{equation*}
What results for the transformed function
\begin{equation}\label{R6}
	S(z;t,\phi):=X(z;t,\phi)\e^{g(z;\phi)\sigma_3},\ \ \ z\in\mathbb{C}\setminus\mathbb{R},
\end{equation}
is summarised in the following problem.
\begin{problem}\label{master3} Let $t\in\mathbb{R}$ and $\phi\in W^{1,1}(\mathbb{R})\cap L^{\infty}(\mathbb{R})$ be as in \eqref{c18}. The function $S(z)=S(z;t,\phi)\in\mathbb{C}^{2\times 2}$ defined in \eqref{R6} is uniquely determined by the following properties.
\begin{enumerate}
	\item[(1)] $z\mapsto S(z)$ is analytic for $z\in\mathbb{C}\setminus\mathbb{R}$ and extends continuously on the closure of $\mathbb{C}\setminus\mathbb{R}$.
	\item[(2)] $z\mapsto S(z)$ admits continuous limiting values $S_{\pm}(z)=\lim_{\epsilon\downarrow 0}S(z\pm\im\epsilon)$ on $\mathbb{R}\ni z$ that satisfy $S_+(z)=S_-(z)G_S(z)$, with $G_S(z)=G_S(z;t,\phi)$ equal to
	\begin{equation*}
		G_S(z)=\begin{bmatrix}1 & -\eta_2(z)\e^{-\im tz-2g_+(z)}\smallskip\\ \eta_1(z)\e^{\im tz+2g_-(z)} & 1-r_1(z)r_2(z)\end{bmatrix};\ \ \  \eta_k(z)=\eta_k(z;\phi):=\frac{r_k(z)}{1-r_1(z)r_2(z)},\ \ |\Im z|\leq\epsilon.
	\end{equation*}
	\item[(3)] As $z\rightarrow\infty$ in $\mathbb{C}\setminus\mathbb{R}$,
	\begin{equation*}
		S(z)=\mathbb{I}+\frac{1}{z}S_1+o\big(z^{-1}\big);\ \ \ S_1=S_1(t,\phi)=X_1(t,\phi)+\frac{\sigma_3}{2\pi\im}\int_{-\infty}^{\infty}\ln\big(1-r_1(\lambda)r_2(\lambda)\big)\d\lambda.
	\end{equation*}
\end{enumerate}
\end{problem}
Since $z\mapsto\eta_k(z)$ are analytic for $|\Im z|\leq\epsilon$, with $z\eta_k(z)=\mathcal{O}(1)$ as $|z|\rightarrow\infty$ in the same sector, we can factorise
\begin{equation*}
	G_S(z)=\begin{bmatrix}1 & 0\\
	\eta_1(z)\e^{\im tz+2g_-(z)} & 1\end{bmatrix}\begin{bmatrix}1 & -\eta_2(z)\e^{-\im tz-2g_+(z)}\\ 0 & 1\end{bmatrix},\ \ z\in\mathbb{R},
\end{equation*}
and transform RHP \ref{master3} as follows. We introduce
\begin{equation}\label{R7}
	T(z;t,\phi):=S(z;t,\phi)\begin{cases}\begin{bmatrix}1 & \eta_2(z)\e^{-\im tz-2g(z)}\\ 0 & 1\end{bmatrix},&\Im z\in(0,\epsilon)\smallskip\\ \begin{bmatrix}1 & 0\\ \eta_1(z)\e^{\im tz+2g(z)} & 1\end{bmatrix},&\Im z\in(-\epsilon,0)\\ \mathbb{I},&\textnormal{else}\end{cases},
\end{equation}
and collect its properties below.
\begin{problem}\label{master4} Let $t<0$ and $\phi\in W^{1,1}(\mathbb{R})\cap L^{\infty}(\mathbb{R})$ be as in \eqref{c18}. The function $T(z)=T(z;t,\phi)\in\mathbb{C}^{2\times 2}$ defined in \eqref{R7} has the following properties.
\begin{enumerate}
	\item[(1)] $z\mapsto T(z)$ is analytic for $z\in\mathbb{C}\setminus\Sigma_Y$, with $\Sigma_Y$ as in RHP \ref{master2}, and extends continuously on the closure of $\mathbb{C}\setminus\Sigma_Y$.
	\item[(2)] $z\mapsto T(z)$ admits continuous limiting values $T_{\pm}(z)=\lim_{\epsilon\downarrow 0}T(z\pm\im\epsilon)$ on $\Sigma_Y\ni z$ that satisfy $T_+(z)=T_-(z)G_T(z)$, with $G_T(z;t,\phi)$ equal to 
	\begin{equation*}
		G_T(z)=\begin{bmatrix}1&-\eta_2(z)\e^{-\im tz-2g(z)}\\ 0 & 1\end{bmatrix},\ \ \Im z=\epsilon;\hspace{1cm}G_T(z)=\begin{bmatrix}1 & 0\\ \eta_1(z)\e^{\im tz+2g(z)} & 1\end{bmatrix},\ \ \Im z=-\epsilon.
	\end{equation*}
	\item[(3)] As $z\rightarrow\infty$ with $z\notin\Sigma_Y$,
	\begin{equation*}
		T(z)=\mathbb{I}+\frac{1}{z}S_1+o\big(z^{-1}\big).
	\end{equation*}
\end{enumerate}
\end{problem}
Noting that $|\e^{\mp\im tz}|=\e^{t\epsilon}$ for $\Im z=\pm\epsilon$ and that $\eta_2(z)\e^{-2g(z)},\eta_1(z)\e^{2g(z)}$ are bounded on $\Im z=\epsilon,\Im z=-\epsilon$, we deduce, for any $\epsilon>0$, existence of $c=c(\epsilon)>0$ such that
\begin{equation*}
	\|G_T(\cdot;t,\phi)-\mathbb{I}\|_{L^2\cap L^{\infty}(\Sigma_Y)}\leq c\,\e^{-\epsilon|t|}\ \ \ \ \forall\,t<0.
\end{equation*}
In short, RHP \ref{master4} is a small norm problem and so uniquely solvable as $t\rightarrow-\infty$, cf. \cite{DZ}. What's more, we have
\begin{equation}\label{R8}
	T(z)=\mathbb{I}+\mathcal{O}\big(\e^{-\epsilon|t|}\big)\ \ \ \ \textnormal{as}\ \ t\rightarrow-\infty,
\end{equation}
uniformly for $z\in\mathbb{C}\setminus\Sigma_Y$, for any fixed $\epsilon>0$. We now utilise \eqref{R4} and \eqref{R8} to derive $t\rightarrow-\infty$ asymptotics for \eqref{c11} and \eqref{c16}. 
\subsubsection{Derivation of asymptotics for $D_M(t)$ and $D_{M,2}(t)$}
First, by \eqref{R6} and \eqref{R8},
\begin{equation}\label{R9}
	\frac{\d}{\d t}\ln D_Q(t)=\im X_1^{11}(t,\phi)=-\frac{1}{2\pi}\int_{-\infty}^{\infty}\ln\big(1-r_1(\lambda)r_2(\lambda)\big)\d\lambda+\im S_1^{11}(t,\phi)\stackrel{t\rightarrow-\infty}{=}s(0)+\mathcal{O}\big(t^{-\infty}\big),
\end{equation}
with the shorthand, using again that $\phi:\mathbb{R}\rightarrow\mathbb{R}$ is real-valued and so $r_2(\lambda)=\overline{r_1(\lambda)},\lambda\in\mathbb{R}$,
\begin{equation*}
	s(x)=-\frac{1}{2\pi}\int_{-\infty}^{\infty}\ln\big(1-|r(\lambda)|^2\big)\e^{\im x\lambda}\d\lambda,\ \ \ x\in\mathbb{R};\ \ \ \ \ r(\lambda)=r_1(\lambda).
\end{equation*}
Consequently, after indefinite $t$-integration in \eqref{R9}, 
\begin{equation}\label{R10}
	\ln D_Q(t)=ts(0)+\varpi+\mathcal{O}\big(t^{-\infty}\big),\ \ \ t\rightarrow-\infty,
\end{equation}
where $\varpi$ is $t$-independent. To compute this constant, one can introduce a parameter $\gamma\in[0,1]$ into RHP \ref{HankelRHP} via the rule $\phi\mapsto\sqrt{\gamma}\phi$, and afterwards exploit general theory underwriting RHP \ref{master2}, cf. \cite{IIKS}. By exploiting in particular $\|Q\|<1$ in operator norm, compare the proof workings of \cite[Theorem $2.16$]{Bo}, we arrive at the explicit formula
\begin{equation}\label{R11}
	\varpi=\int_0^{\infty}xs(x)s(-x)\d x-\frac{1}{2\pi}\int_{-\infty}^{\infty}\Im\bigg\{\frac{r'(\lambda)}{r(\lambda)}\bigg\}\ln\big(1-|r(\lambda)|^2\big)\d\lambda.
\end{equation}
Moving now to the equation immediately below \eqref{c17}, we require
\begin{equation*}
	\omega(t)=\int_t^{\infty}q(s)\d s=\int_t^{\infty}X_1^{21}(s,\phi)\d s\stackrel{\eqref{R6}}{=}\int_t^{\infty}S_1^{21}(s,\phi)\d s\stackrel{\eqref{R8}}{=}\int_{-\infty}^{\infty}S_1^{21}(s,\phi)\d s+\mathcal{O}(t^{-\infty}),\ \ \ t\rightarrow-\infty,
\end{equation*}
and thus the total integral
\begin{equation*}
	P=P(\phi):=\int_{-\infty}^{\infty}X_1^{21}(s,\phi)\d s=\int_{-\infty}^{\infty}X_1^{12}(s,\phi)\d s.
\end{equation*}
The evaluation of this is a standard exercise, cf. \cite{BBDI}, that uses the Zhakharov-Shabat system
\begin{equation}\label{R12}
	\frac{\partial W}{\partial t}=\left\{-\frac{\im z}{2}\sigma_3+\im\begin{bmatrix}0 & X_1^{12}\\ -X_1^{12}& 0\end{bmatrix}\right\}W,
\end{equation}
for the function $W(z;t,\phi):=X(z;t,\phi)\e^{-\frac{\im}{2}tz\sigma_3},z\in\mathbb{C}\setminus\mathbb{R}$. Indeed, with
\begin{equation*}
	V(t;\phi):=\lim_{\substack{z\rightarrow 0\\ \Im z<0}}X(z;t,\phi),
\end{equation*}
system \eqref{R12} yields, in particular,
\begin{equation}\label{R13}
	V(t,\phi)=\begin{bmatrix}\cosh\nu & \im \sinh\nu\\ -\im\sinh\nu & \cosh\nu\end{bmatrix}V(t_0,\phi),\ \ \ \ \nu=\nu(t,t_0,\phi):=\int_{t_0}^tX_1^{12}(s,\phi)\d s,\ \ t,t_0\in\mathbb{R}.
\end{equation}
Evidently, the sought after $P$ is the limit of $\nu(t,t_0,\phi)$ as $t\rightarrow+\infty$ and $t_0\rightarrow-\infty$, so we first calculate
\begin{equation*}
	\lim_{t\rightarrow+\infty}V(t,\phi)=\lim_{t\rightarrow+\infty}\left\{\lim_{\substack{z\rightarrow 0\\ \Im z<0}}X(z;t,\phi)\right\}\stackrel{\eqref{R3}}{=}\lim_{t\rightarrow+\infty}Y(0;t,\phi)\begin{bmatrix}1&r_2(0)\\ 0 & 1\end{bmatrix}\stackrel{\eqref{R4}}{=}\begin{bmatrix}1 & r_2(0)\\ 0 & 1\end{bmatrix}.
\end{equation*}
After that we calculate
\begin{align*}
	\lim_{t_0\rightarrow-\infty}V(t_0,\phi)\stackrel{\eqref{R6}}{=}\lim_{t_0\rightarrow-\infty}&\,\left\{\lim_{\substack{z\rightarrow 0\\ \Im z<0}}S(z;t_0,\phi)\e^{-g(z)\sigma_3}\right\}\stackrel{\eqref{R7}}{=}\lim_{t_0\rightarrow-\infty}T(0;t_0,\phi)\e^{-g_-(0)\sigma_3}\begin{bmatrix}1&0\\ -\eta_1(0)& 1\end{bmatrix}\\
	&\,\stackrel{\eqref{R8}}{=}\e^{-g_-(0)\sigma_3}\begin{bmatrix}1 & 0\\ -\eta_1(0)&1\end{bmatrix}=\begin{bmatrix}1&0\\ -r_1(0)&1\end{bmatrix}\big(1-r_1(0)r_2(0)\big)^{-\frac{1}{2}\sigma_3},
\end{align*}
and combine our results,
\begin{equation*}
	V(+\infty,\phi)V(-\infty,\phi)^{-1}=\frac{1}{\sqrt{1-r_1(0)r_2(0)}}\begin{bmatrix}1 & r_2(0)\\ r_1(0)&1\end{bmatrix}.
\end{equation*}
In turn, using again that $\phi:\mathbb{R}\rightarrow\mathbb{R}$ is real-valued,
\begin{equation*}
	\cosh\nu(+\infty,-\infty,\phi)=\frac{1}{\sqrt{1-|r(0)|^2}},\ \ \ \ \ \ \sinh\nu(+\infty,-\infty,\phi)=\frac{\im r(0)}{\sqrt{1-|r(0)|^2}},
\end{equation*}
and so
\begin{equation*}
	\exp\left[-\int_{-\infty}^{\infty}X_1^{12}(s,\phi)\d s\right]=\cosh\nu(+\infty,-\infty,\phi)-\sinh(+\infty,-\infty,\phi)=\frac{1-\im r(0)}{\sqrt{1-|r(0)|^2}}=\sqrt{\frac{1-\im r(0)}{1+\im r(0)}}.
\end{equation*}
This implies, as $t\rightarrow-\infty$,
\begin{equation}\label{R113}
	\e^{-\omega(t)}=\sqrt{\frac{1-\im r(0)}{1+\im r(0)}}+\mathcal{O}(t^{-\infty}),\ \ \ \ \cosh^2\bigg(\frac{\omega(t)}{2}\bigg)=\Bigg(\frac{1}{2}\sqrt[4]{\frac{1+\im r(0)}{1-\im r(0)}}+\frac{1}{2}\sqrt[4]{\frac{1-\im r(0)}{1+\im r(0)}}\Bigg)^2+\mathcal{O}(t^{-\infty}),
\end{equation}
and we summarise our workings below.
\begin{proof}[Proof of \eqref{c19} and \eqref{c20}] There is nothing left to prove; simply combine \eqref{R10},\eqref{R11} and \eqref{R113} with \eqref{c11} and \eqref{c16}.
\end{proof}

\section{Wiener-Hopf type kernels - proof of \eqref{c23},\eqref{c26} and \eqref{c30},\eqref{c31}, as well as \eqref{c33},\eqref{c34}}\label{cool}
We start by proving \eqref{c23} and \eqref{c26} through an application of \eqref{c3} and \eqref{c7}
\subsection*{The symplectic derived class} One begins by showing that the Fredholm determinant $D_M(t)$ of $M$ in \eqref{z5} with $S,G,H$ as in \eqref{c21},\eqref{z6},\eqref{c22} is well-defined.
\begin{prop}\label{pz4} If $S,G,H:L^2(-t,t)\rightarrow L^2(-t,t)$ denote the integral operators with kernels \eqref{c21},\eqref{z6}, and \eqref{c22}, then all three are trace class and we have that
\begin{equation*}
	G(x,y)=-G(y,x),\ \ \ \ \ H(x,y)=-H(y,x)\ \ \ \textnormal{on}\ \ \ (-t,t)\times(-t,t).
\end{equation*}
\end{prop}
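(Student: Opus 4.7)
The plan is to treat $S$ and $G$ simultaneously as Fourier-type operators admitting a standard Hilbert-Schmidt composition factorisation, and then bootstrap the trace class property of $H$ from that of $S$ via a Volterra decomposition. Skew-symmetry of the kernels $G$ and $H$ will follow from the evenness of $\phi$.

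First, to handle $S$ I would write $\phi(u)=a(u)b(u)$ with $a(u):=\mathrm{sgn}(\phi(u))|\phi(u)|^{1/2}$ and $b(u):=|\phi(u)|^{1/2}$, which lie in $L^2(\mathbb{R})$ since $\phi\in L^1(\mathbb{R})$. Defining $A\colon L^2(\mathbb{R})\to L^2(-t,t)$ and $B\colon L^2(-t,t)\to L^2(\mathbb{R})$ by the kernels $A(x,u)=(2\sqrt{\pi})^{-1}a(u)\e^{\im ux}$ and $B(u,y)=(2\sqrt{\pi})^{-1}b(u)\e^{-\im uy}$, a direct calculation shows both are Hilbert-Schmidt with explicit norms proportional to $\|\phi\|_{L^1}$, and $AB=S$ by Fubini. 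Hence $S$ is trace class on $L^2(-t,t)$. The very same factorisation applied to $iu\phi(u)\in L^1(\mathbb{R})\cap L^\infty(\mathbb{R})$ (this is the second assumption on $\phi$) shows that $G(x,y)=-\partial_yS(x,y)=\frac{\im}{4\pi}\int u\phi(u)\e^{\im u(x-y)}\d u$ is likewise trace class.

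Next, for $H(x,y)=\int_y^x S(z,y)\d z$, I would exploit the splitting $H(x,y)=\int_{-t}^x S(z,y)\d z-\int_{-t}^y S(z,y)\d z$, valid for all $x,y\in(-t,t)$. The first term is the kernel of $E\circ S$, where $(Ef)(x)=\int_{-t}^x f(z)\d z$ is the bounded Volterra operator on $L^2(-t,t)$; composing a bounded operator with the trace class $S$ yields a trace class operator. The second term depends only on $y$, so defines the rank-one operator $\mathbf{1}_{(-t,t)}\otimes\mu$ with $\mu(y):=\int_{-t}^y S(z,y)\d z$, and a direct estimate using $\phi\in L^1(\mathbb{R})$ shows $\mu\in L^\infty(-t,t)\subset L^2(-t,t)$. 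Thus $H$ is the sum of a trace class operator and a rank-one operator, hence trace class.

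Finally, for skew-symmetry: in $G(x,y)+G(y,x)=\frac{\im}{2\pi}\int u\phi(u)\cos(u(x-y))\d u$ the integrand is odd in $u$ because $\phi$ is even, so the integral vanishes. For $H$, setting $\psi(w):=\frac{1}{2\pi}\int\phi(u)\e^{\im uw}\d u$ gives $H(x,y)=\frac{1}{2}\int_0^{x-y}\psi(w)\d w$; since $\phi$ is even and real, $\psi$ is even, and the change of variable $w\mapsto-w$ yields $H(y,x)=-H(x,y)$. The only delicate point in the whole argument is identifying a factorisation for $H$, for which a direct Hilbert-Schmidt decomposition of its Fourier representation is hindered by the $1/u$ singularity at $u=0$; the Volterra-plus-rank-one route bypasses this obstacle cleanly.
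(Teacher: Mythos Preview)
Your proof is correct but follows a different path from the paper's in the trace class arguments. For $S$ (and likewise $G$), the paper does not factorise directly: instead it first assumes $\phi\geq 0$, observes that $S$ is then a nonnegative operator with continuous kernel, and invokes the criterion in \cite[Theorem~2.12]{S} to conclude trace class; the general case follows by writing $\phi$ as a difference of nonnegative functions. Your direct $L^2\times L^2$ splitting $\phi=ab$ is more elementary and sidesteps the positivity detour entirely. For $H$, the paper splits at $0$ rather than at $-t$, writing $H(x,y)=\int_0^xS(z,y)\d z+\int_y^0S(z,y)\d z$, then realises the first summand as a new Hilbert--Schmidt composition (again first for $\phi\geq 0$) and the second as finite rank. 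Your route---splitting at $-t$ and recognising the first piece as $E\circ S$ with $E$ the bounded Volterra operator---is cleaner because it recycles the trace class property of $S$ already established, rather than building a fresh factorisation. The skew-symmetry arguments are essentially identical in both proofs. Overall your approach buys uniformity (no case split on the sign of $\phi$) and transparency at the cost of nothing; the paper's approach has the mild advantage of exhibiting the positivity structure when $\phi\geq 0$.
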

\begin{proof} The operator $S$ is a truncated Wiener-Hopf operator with kernel $S(x,y)$ in \eqref{c21} that is continuous on $(-t,t)\times(-t,t)$, since $\phi\in L^1(\mathbb{R})\cap L^{\infty}(\mathbb{R})\subset L^2(\mathbb{R})$, and that satisfies $S(x,y)=S(y,x)\in\mathbb{R}$ for all $x,y\in(-t,t)$. Moreover,
\begin{equation*}
	\int_{-t}^t\int_{-t}^tS(x,y)f(x)\overline{f(y)}\d x\d y=\frac{1}{4\pi}\int_{-\infty}^{\infty}\bigg|\int_{-t}^tf(x)\e^{\im ux}\d x\bigg|^2\phi(u)\d u\ \ \ \ \forall\,f\in L^2(-t,t).
\end{equation*}
Hence, $S:L^2(-t,t)\rightarrow L^2(-t,t)$ is trace class by \cite[Theorem $2.12$]{S} provided $\phi$ is non-negative. Since every $\phi\in L^1(\mathbb{R})\cap L^{\infty}(\mathbb{R})$ is a linear combination of non-negative functions, we obtain that $S$ is trace class on $L^2(-t,t)$ in the general case. Next, subject to our assumptions,
\begin{equation*}
	G(x,y)=\frac{\im}{4\pi}\int_{-\infty}^{\infty}u\phi(u)\e^{\im u(x-y)}\d u=-G(y,x)\ \ \ \ \forall\,x,y\in(-t,t),
\end{equation*}
and thus $G$ is trace class on $L^2(-t,t)$ by the same arguments made for $S$. Lastly,
\begin{equation*}
	H(x,y)=\frac{1}{4\pi}\int_0^{x-y}\bigg[\int_{-\infty}^{\infty}\phi(u)\e^{\im uz}\d u\bigg]\d z=-H(y,x)\ \ \ \ \forall\,x,y\in(-t,t),
\end{equation*}
and
\begin{equation}\label{z30}
	H(x,y)=\int_0^xS(z,y)\d z+\int_y^0S(z,y)\d z,
\end{equation}
where the second summand in \eqref{z30} is the kernel of a finite rank integral operator on $L^2(-t,t)$. The first summand in \eqref{z30} can be realised as a Hilbert-Schmidt composition kernel
\begin{equation*}
	\int_{-\infty}^{\infty}K_1(x,u)K_2(u,y)\d u,\ \ \ \ \ K_1:L^2(\mathbb{R})\rightarrow L^2(-t,t),\ \ \ \ \ K_2:L^2(-t,t)\rightarrow L^2(\mathbb{R}),
\end{equation*}
provided $\phi$ is non-negative. If it is, then $H$ is trace class on $L^2(-t,t)$ by \eqref{z30}; if it is not, then we again write $\phi\in L^1(\mathbb{R})\cap L^{\infty}(\mathbb{R})$ as linear combination of non-negative functions, and obtain the general assertion via the same methods as with \eqref{z30}.
\end{proof}
For the application of \eqref{c3} to \eqref{c21}, which is admissible by Proposition \ref{pz4}, we take $m=1,a_1=-t$, and $a_2=t$. Noting that
\begin{equation*}
	S(x,y)=S(y,x)=S(-x,-y),\ \ \ \ \ \ H(x,y)=-H(y,x)=-H(-x,-y)\ \ \ \ \textnormal{on}\ \ (-t,t)\times(-t,t),
\end{equation*}
we deduce $F_{11}((-t,t))=F_{22}((-t,t)),F_{12}((-t,t))=F_{21}((-t,t))$. Thus, from \eqref{c3},
\begin{equation}\label{z31}
	D_M(\Delta)=D_{2S}(t)\prod_{a=\pm 1}\Big(1-\big(F_{22}((-t,t))+aF_{21}((-t,t))\big)\Big),
\end{equation}
where $D_{2S}(t)$ is the Fredholm determinant of $2S$ on $L^2(-t,t)$, with the kernel of $S$ written in \eqref{c21}.

\begin{proof}[Proof of \eqref{c23}] Since $Q=2S$ by \eqref{c23}, the first factor $D_{2S}(t)$ in the right hand side of \eqref{z31} is $D_Q(t)$. For the second factor, involving $F_{22}-F_{21}$, we compute, utilising the shorthands $F_{jk}\equiv F_{jk}((-t,t))$ and $\langle f,g\rangle=\int_{-t}^tf(x)g(x)\d x$,
\begin{align*}
	F_{22}-F_{21}=\big\langle (I-Q)^{-1}H\delta_t,\delta_t+\delta_{-t}\big\rangle=\big\langle H\delta_t,(I-Q)^{-1}(\delta_t+\delta_{-t})\big\rangle,
\end{align*}
where, writing $Q(x,y)=:W(x-y)$,
\begin{equation}\label{z33}
	(H\delta_t)(x)=H(x,t)=\frac{1}{2}\int_0^{x-t}W(z)\d z,\ \ \ \ x\in(-t,t).
\end{equation}
Therefore, using the function $1(x):=1$ identically one on $(-t,t)$,
\begin{align}\label{z34}
	(S1)(x)=\frac{1}{2}\int_{-t}^tW(x-y)\d y=\frac{1}{2}\int_{x-t}^{x+t}W(u)\d u\stackrel{\eqref{z33}}{=}-2H(x,t)+\Big(H(x,t)+H(x,-t)\Big).
\end{align}
Noting that $(-t,t)\ni x\mapsto H(x,t)+H(x,-t)$ is odd while $(-t,t)\ni x\mapsto (I-Q)^{-1}(x,t)+(I-Q)^{-1}(x,-t)$ is even, we deduce
\begin{align*}
	F_{22}-F_{21}=&\,\,\big\langle H\delta_t,(I-Q)^{-1}(\delta_t+\delta_{-t})\big\rangle\stackrel[\eqref{z34}]{\eqref{z33}}{=}-\frac{1}{2}\big\langle S1,(I-Q)^{-1}(\delta_t+\delta_{-t})\big\rangle\\
	=&\,\,-\frac{1}{2}\big\langle(I-Q)^{-1}1,S(\delta_t+\delta_{-t})\big\rangle=-\frac{1}{4}\int_{-t}^t\Big[\big((I-Q)^{-1}Q\big)(x,t)+\big((I-Q)^{-1}Q\big)(x,-t)\Big]\d x\\
	&\hspace{4.46cm}=-\frac{1}{2}\int_{-t}^t\big((I-Q)^{-1}Q\big)(x,t)\d x,
\end{align*}
by utilising, in the third equality, that $(I-Q)^{-1}$ and $S=\frac{1}{2}Q$ commute, and the symmetry $Q=Q^{\ast}$ throughout. Likewise, and relevant to the third factor, which depends on $F_{22}+F_{21}$ in \eqref{z31},
\begin{align*}
	F_{22}+F_{21}=\big\langle H\delta_t,(I-Q)^{-1}(\delta_t-\delta_{-t})\big\rangle=\big\langle H(\delta_t+\delta_{-t}),(I-Q)^{-1}\delta_t\big\rangle,
\end{align*}
where
\begin{equation*}
	\big(H(\delta_t+\delta_{-t})\big)(x)=H(x,t)-H(-x,t)\stackrel{\eqref{z33}}{=}\frac{1}{2}\int_{t-x}^{t+x}W(z)\d z=\frac{1}{2}\int_{-x}^xQ(z,-t)\d z=\frac{1}{2}\int_0^x\big(Q(\delta_t+\delta_{-t})\big)(z)\d z,
\end{equation*}
is odd in $x\in(-t,t)$. Consequently,
\begin{align*}
	F_{22}+F_{21}=&\,\,\big\langle H(\delta_t+\delta_{-t}),(I-Q)^{-1}(I-Q+Q)\delta_t\big\rangle\\
	=&\,\,\frac{1}{2}\int_0^t\big(Q(\delta_t+\delta_{-t})\big)(z)\d z+\big\langle H(\delta_t+\delta_{-t}),(I-Q)^{-1}Q\delta_t\big\rangle\\
	=&\,\,\frac{1}{2}\int_{-t}^tQ(x,t)\d x+\frac{1}{2}\int_{-t}^t\bigg[\int_{-x}^xQ(z,t)\d z\bigg]\big((I-Q)^{-1}Q\big)(x,t)\d x,
\end{align*}
where we use the
symmetry $Q=Q^{\ast}$ throughout. The proof of \eqref{c23} is now complete.
\end{proof}


\subsection*{The orthogonal derived class} We have already established that the regularised $2$-determinant $D_{M,2}(t)$ of $M$ in \eqref{c5} with $S,G,H$ as in \eqref{c25},\eqref{z6} and \eqref{c22} is well-defined - simply quote Proposition \ref{pz4}, for our current $S$ is $2S$ therein. Having prepared ourselves in this fashion, we can immediately apply \eqref{c7},
\begin{equation}\label{z67}
	D_{M,2}(\Delta)=D_S(t)\det\Big(\delta_{jk}-G_{jk}((-t,t))\Big)_{j,k=1}^2,\ \ \ \ \ \ G_{jk}((-t,t))=:\langle g_j,f_k\rangle=\int_{-t}^tg_j(x)f_k(x)\d x.
\end{equation}
Here, $D_S(t)$ denotes the Fredholm determinant of $S$ on $L^2(-t,t)$, and we use
\begin{equation*}
	f_1:=-\frac{1}{2}(H\delta_{-t}+H\delta_t),\ \ f_2:=-f_1,\hspace{1cm}g_1:=(I-S)^{-1}\delta_{-t},\ \ g_2:=(I-S)^{-1}\delta_t.
\end{equation*}
As before, $S(x,y)=S(y,x)=S(-x,-y)$ and $H(x,y)=-H(y,x)=-H(-x,-y)$ on $(-t,t)\times(-t,t)$, so
\begin{equation*}
	\langle g_1,f_1\rangle=\langle g_2,f_2\rangle\ \ \ \ \ \ \textnormal{and}\ \ \ \ \ \ \langle g_1,f_2\rangle=\langle g_2,f_1\rangle.
\end{equation*}
Consequently, from \eqref{z67},
\begin{equation*}
	D_{M,2}(t)=D_S(t)\prod_{a=\pm 1}\big(1-\langle g_2,f_2+af_1\rangle\big),
\end{equation*}
and which leads to the below proof of \eqref{c26}.
\begin{proof}[Proof of \eqref{c26}] Observe that $S=Q$ by \eqref{c25} and $\langle g_2,f_2+f_1\rangle=0$ since $f_2=-f_1$. Also,
\begin{align}\label{z69}
	\langle g_2,f_2-f_1\rangle=\langle (I-S)^{-1}\delta_t,H(\delta_t+\delta_{-t})\rangle=\langle H(\delta_t+\delta_{-t}),(I-S)^{-1}(I-S+S)\delta_t\rangle,
%
%
%
\end{align}
which can be evaluated as in the proof workings of \eqref{c23},
\begin{align*}
	\langle g_2,f_2-f_1\rangle
	=&\,\,\int_0^t\big(S(\delta_{-t}+\delta_t)\big)(z)\d z+\langle H(\delta_t+\delta_{-t}),(I-S)^{-1}S\delta_t\rangle\\
	=&\,\,\int_{-t}^tS(x,t)\d x+\frac{1}{2}\langle S(\delta_t+\delta_{-t}),(I-S)^{-1}S(\delta_t-\delta_{-t})\rangle\\
	=&\,\,\int_{-t}^tS(x,t)\d x+\int_{-t}^t\bigg[\int_{-x}^xS(z,t)\d z\bigg]\big((I-S)^{-1}S\big)(x,t)\d x.
\end{align*}
The proof of \eqref{c26} is now complete.
\end{proof}

\subsection*{RHP and asymptotic results} First things first, we establish \eqref{c30} and \eqref{c31}.
\begin{proof}[Proof of \eqref{c30} and \eqref{c31}] These follow from a careful re-examination of the workings in \cite[page $731-736$]{TW}, noting that the workings on those pages exploit 
\begin{equation*}
	\left(\frac{\partial}{\partial x}+\frac{\partial}{\partial y}\right)Q(x,y)=0,\ \ \ \ \ Q(x,y)=Q(y,x)=Q(-x,-y),
\end{equation*}
but no integrable structures of $Q$. Thus, those workings apply to our \eqref{c24} and they yield \eqref{c30} and \eqref{c31}.
\end{proof}
\subsubsection{Asymptotic resolution of RHP \ref{WHRHP}} One can approach the $t\downarrow 0$ regime via a Fredholm and Neumann series argument. That approach yields, for instance,
\begin{equation*}
	D_Q(t)=1-\frac{t}{\pi}\int_{-\infty}^{\infty}\phi(u)\d u+\mathcal{O}\big(t^2\big)\ \ \ \ \ \Rightarrow\ \ \ \ \ X_1^{11}(t,\phi)\sim-\frac{\im}{2\pi}\int_{-\infty}^{\infty}\phi(u)\d u,
\end{equation*}
followed by
\begin{equation*}
	X_1^{21}(t,\phi)=-X_1^{12}(t,\phi)\sim-\im Q(-t,t)\sim-\frac{\im}{2\pi}\int_{-\infty}^{\infty}\phi(u)\d u.
\end{equation*}
Furthermore, by the known integral representation for $X(z)$, cf. \cite{IIKS}, with $F_j,f_k$ defined below \eqref{R19},
\begin{equation*}
	X(z)=\mathbb{I}-\int_{-\infty}^{\infty}\begin{bmatrix}F_1(\lambda)\\ F_2(\lambda)\end{bmatrix}\begin{bmatrix}f_2(\lambda)\\ -f_1(\lambda)\end{bmatrix}^{\top}\frac{\d\lambda}{\lambda-z},\ \ \ z\in\mathbb{C}\setminus\mathbb{R},
\end{equation*}
which yields, by evenness of $\mathbb{R}\ni x\mapsto\phi(x)$,
\begin{equation}\label{Rz}
	X_-(0)\sim\frac{1}{2}\begin{bmatrix}2+\phi(0) & -\phi(0)\\ \phi(0) & 2-\phi(0)\end{bmatrix},\ \ \ t\downarrow 0.
\end{equation}
For $t\rightarrow+\infty$, a nonlinear steepest descent analysis of RHP \ref{WHRHP} goes as follows. We modify RHP \ref{WHRHP} by replacing $Q$ on $L^2(-t,t)$ with $Q\mapsto Q_{\gamma}:=\gamma Q,\gamma\in[0,1]$, so that the Fredholm determinant $D_Q(t;\gamma)$ of $Q_{\gamma}$ becomes the Fredholm determinant of $V_{\gamma}$ on $L^2(\mathbb{R})$, see \eqref{c28}. Here, $V_{\gamma}:L^2(\mathbb{R})\rightarrow L^2(\mathbb{R})$ is trace class with kernel
\begin{equation*}
	V_{\gamma}(x,y):=\gamma V(x,y)=\gamma\sqrt{\phi(x)}\,\frac{\sin t(x-y)}{\pi(x-y)}\sqrt{\phi(y)},\ \ \ \ \ x,y\in\mathbb{R},
\end{equation*}
and we denote with $X(z)=X(z;t,\gamma\phi)\in\mathbb{C}^{2\times 2}$, from now on, the solution of RHP \ref{WHRHP}, where $\phi$ is replaced with $\gamma\phi$. Note that $D_Q(t)=D_Q(t;1)$ and $D_Q(t;0)\equiv 1$; moreover
\begin{equation}\label{R19}
	\ln D_Q(t)=\int_0^1\frac{\partial}{\partial\gamma}\ln D_Q(t;\gamma)\d\gamma=-\int_0^1\left[\int_{-\infty}^{\infty}R_{\gamma}(x,x)\d x\right]\frac{\d\gamma}{\gamma},
\end{equation}
with $R_{\gamma}:=(I-V_{\gamma})^{-1}-I:L^2(\mathbb{R})\rightarrow L^2(\mathbb{R})$, which exists certainly since we assume $\|Q\|<1$ in operator norm and $\gamma\in[0,1]$. We shall use \eqref{R19} in our asymptotic study, recalling the important fact that $V_{\gamma}$ is an integrable operator, cf. \cite{IIKS}, and so $R_{\gamma}$ has kernel
\begin{equation*}
	R_{\gamma}(x,y)=\frac{F_1(x)G_1(y)+F_2(x)G_2(y)}{x-y},\ \ \ x,y\in\mathbb{R},
\end{equation*}
with, using a fixed branch for the square roots throughout,
\begin{equation*}
	\begin{bmatrix}F_1(x)\\ F_2(x)\end{bmatrix}=\begin{bmatrix}((I-V_{\gamma})^{-1}f_1)(x)\smallskip\\ ((I-V_{\gamma})^{-1}f_2)(x)\end{bmatrix}=X_{\pm}(x)\begin{bmatrix}f_1(x)\\ f_2(x)\end{bmatrix},\ x\in\mathbb{R};\ \ \ f_1(x):=\sqrt{\frac{\gamma\phi(x)}{2\pi\im}}\e^{\im tx},\ \ \ f_2(x):=\sqrt{\frac{\gamma\phi(x)}{2\pi\im}}\e^{-\im tx},
\end{equation*}
and, by symmetry of $V_{\gamma}$,
\begin{equation*}
	\begin{bmatrix}G_1(x)\\ G_2(x)\end{bmatrix}=X_{\pm}^{\top}(x)^{-1}\begin{bmatrix}f_2(x)\\ -f_1(x)\end{bmatrix},\ \ x\in\mathbb{R}.
\end{equation*}
Thus, for \eqref{R19}, we find the simplified expression
\begin{equation}\label{R20}
	\ln D_Q(t)=-\int_0^1\left[\int_{-\infty}^{\infty}\big(F_1'(x)G_1(x)+F_2'(x)G_2(x)\big)\d x\right]\frac{\d\gamma}{\gamma},\ \ \ \ \ \ F_j'(x)=\frac{\d}{\d x}F_j(x),\ \ x\in\mathbb{R}.
\end{equation}
Moving forward with the actual asymptotic resolution of RHP \ref{WHRHP}, we define the $g$-function
\begin{equation*}
	g(z)=g(z;\gamma\phi):=\frac{\im}{2\pi}\int_{-\infty}^{\infty}\ln\big(1-\gamma\phi(\lambda)\big)\frac{\d\lambda}{\lambda-z},\ \ \ z\in\mathbb{C}\setminus\mathbb{R},
\end{equation*}
so that by the imposed properties on $\phi$, we have at once
\begin{equation*}
	g_+(z)-g_-(z)=-\ln\big(1-\gamma\phi(z)\big),\ z\in\mathbb{R};\ \ \ g_{\pm}(z):=\lim_{\epsilon\downarrow 0}g(z\pm\im\epsilon),
\end{equation*}
followed by
\begin{equation*}
	g(z)=\frac{1}{2\pi\im z}\int_{-\infty}^{\infty}\ln\big(1-\gamma\phi(\lambda)\big)\d\lambda+o\big(z^{-1}\big),\ \ \ \ z\rightarrow\infty,\ z\notin\mathbb{R}.
\end{equation*}
What results for the transformed function
\begin{equation}\label{R21}
	S(z;t,\gamma\phi):=X(z;t,\gamma\phi)\e^{g(z;\gamma\phi)\sigma_3},\ \ \ z\in\mathbb{C}\setminus\mathbb{R},
\end{equation}
is summarised below.
\begin{problem}\label{master6} Let $t>0$ and $\phi$ be as in Theorem \ref{Kactheo}. The function $S(z)=S(z;t,\gamma\phi)\in\mathbb{C}^{2\times 2}$ defined in 
\eqref{R21} is uniquely determined by the following properties.
\begin{enumerate}
	\item[(1)] $z\mapsto S(z)$ is analytic for $z\in\mathbb{C}\setminus\mathbb{R}$ and extends continuously on the closure of $\mathbb{C}\setminus\mathbb{R}$.
	\item[(2)] $z\mapsto S(z)$ admits continuous limiting values $S_{\pm}(z)=\lim_{\epsilon\downarrow 0}S(z\pm\im\epsilon)$ on $\mathbb{R}\ni z$ that satisfy $S_+(z)=S_-(z)G_S(z)$, with $G_S(z)=G_S(z;t,\gamma\phi)$ equal to
	\begin{equation*}
		G_S(z)=\begin{bmatrix}1 & \eta(z)\e^{2\im tz-2g_+(z)}\\ -\eta(z)\e^{-2\im tz+2g_-(z)} & 1-\gamma^2\phi^2(z)\end{bmatrix};\ \ \ \eta(z)=\eta(z;\gamma\phi):=\frac{\gamma\phi(z)}{1-\gamma\phi(z)},\ z\in\mathbb{R}.
	\end{equation*}
	\item[(3)] As $z\rightarrow\infty$ in $\mathbb{C}\setminus\mathbb{R}$,
	\begin{equation*}
		S(z)=\mathbb{I}+\frac{1}{z}S_1+o\big(z^{-1}\big);\ \ \ \ S_1=S_1(t,\gamma\phi)=X_1(t,\gamma\phi)+\frac{\sigma_3}{2\pi\im}\int_{-\infty}^{\infty}\ln\big(1-\gamma\phi(\lambda)\big)\d\lambda.
	\end{equation*}
\end{enumerate}
\end{problem}
By \eqref{c32}, there exists $\epsilon>0$ such that $z\mapsto\eta(z)$ is analytic for $|\Im z|\leq\epsilon$ and $\lim_{|z|\rightarrow\infty}\eta(z)=0$ in the same strip. Thus, we factorise
\begin{equation*}
	G_S(z)=\begin{bmatrix}1 & 0\\ -\eta(z)\e^{-2\im tz+2g_-(z)} & 1\end{bmatrix}\begin{bmatrix}1 & \eta(z)\e^{2\im tz-2g_+(z)}\\ 0 & 1\end{bmatrix},\ \ \ z\in\mathbb{R},
\end{equation*}
and transform RHP \ref{master6} as follows. Consider
\begin{equation}\label{R22}
	T(z;t,\gamma\phi):=S(z;t,\gamma\phi)\begin{cases}\displaystyle \begin{bmatrix}1 & -\eta(z)\e^{2\im tz-2g(z)}\\ 0 & 1\end{bmatrix},&\Im z\in(0,\epsilon)\smallskip\\
	\begin{bmatrix}1 & 0\\ -\eta(z)\e^{-2\im tz+2g(z)} & 1\end{bmatrix},&\Im z\in(-\epsilon,0)\smallskip\\ \mathbb{I},&\textnormal{else}\end{cases},
\end{equation}
which has the following analytic and asymptotic properties.
\begin{problem}\label{master7} Let $t>0$ and $\phi$ be as in Theorem \ref{Kactheo}. The function $T(z)=T(z;t,\gamma\phi)\in\mathbb{C}^{2\times 2}$ defined in \eqref{R22} is uniquely determined by the following properties.
\begin{enumerate}
	\item[(1)] $z\mapsto T(z)$ is analytic for $z\in\mathbb{C}\setminus\Sigma_T$, with $\Sigma_T:=\{z\in\mathbb{C}:\ \Im z=\epsilon,\ \Im z=-\epsilon\}$, and extends continuously on the closure of $\mathbb{C}\setminus\Sigma_T$.
	\item[(2)] $z\mapsto T(z)$ admits continuous limiting values $T_{\pm}(z)=\lim_{\epsilon\downarrow 0}T(z\pm\im\epsilon)$ on $\Sigma_T\ni z$ that satisfy $T_+(z)=T_-(z)G_T(z)$, with $G_T(z)=G_T(z;t,\gamma\phi)$ equal to
	\begin{equation*}
		G_T(z)=\begin{bmatrix}1 & \eta(z)\e^{2\im tz-2g(z)}\\ 0 & 1\end{bmatrix},\ \ \Im z=\epsilon;\hspace{1cm}G_T(z)=\begin{bmatrix}1 & 0\\ -\eta(z)\e^{-2\im tz+2g(z)}&1\end{bmatrix},\ \ \Im z=-\epsilon.
	\end{equation*}
	\item[(3)] As $z\rightarrow\infty$, with $z\notin\Sigma_T$,
	\begin{equation*}
		T(z)=\mathbb{I}+\frac{1}{z}S_1+o\big(z^{-1}\big).
	\end{equation*}
\end{enumerate}
\end{problem}
Since $|\e^{\mp 2\im tz}|=\e^{-2t\epsilon}$ for $\Im z=\pm\epsilon$ and $\eta(z)\e^{-2g(z)},\eta(z)\e^{2g(z)}$ are bounded on $\Im z=\epsilon,\Im z=-\epsilon$, we deduce the existence of $c=c(\epsilon)>0$ such that
\begin{equation*}
	\|G_t(\cdot;t,\gamma\phi)-\mathbb{I}\|_{L^2\cap L^{\infty}(\Sigma_T)}\leq c\,\e^{-2\epsilon t}\ \ \ \ \forall\,t>0,\ \gamma\in[0,1].
\end{equation*}
This says that RHP \ref{master7} is a small norm problem and so uniquely solvable as $t\rightarrow+\infty$, cf. \cite{DZ}. In addition,
\begin{equation}\label{R23}
	T(z)=\mathbb{I}+\mathcal{O}\big(\e^{-2\epsilon t}\big)\ \ \ \ \textnormal{as}\ \ t\rightarrow+\infty,
\end{equation}
uniformly for $z\in\mathbb{C}\setminus\Sigma_T$, with some $\epsilon>0$. At this point, we can return to \eqref{R20} and \eqref{c27}.
\subsubsection{Derivation of asymptotics for $D_M(t)$ and $D_{M,2}(t)$} First, by \eqref{c27},\eqref{R21}, and \eqref{R22},
\begin{equation}\label{R24}
	\frac{\d}{\d t}\ln D_Q(t)=-2\im X_1^{11}(t,\phi)=\frac{1}{\pi}\int_{-\infty}^{\infty}\ln\big(1-\phi(\lambda)\big)\d\lambda-2\im S_1^{11}(t,\phi)\stackrel{t\rightarrow+\infty}{=}-s(0)+\mathcal{O}\big(t^{-\infty}\big),
\end{equation}
with the shorthand
\begin{equation*}
	s(x)\equiv s(x;\phi)=-\frac{1}{\pi}\int_{-\infty}^{\infty}\ln\big(1-\phi(\lambda)\big)\e^{\im x\lambda}\d\lambda,\ \ \ \ x\in\mathbb{R}.
\end{equation*}
Consequently, after indefinite $t$-integration in \eqref{R24},
\begin{equation}\label{R25}
	\ln D_Q(t)=-ts(0)+\tau+\mathcal{O}\big(t^{-\infty}\big),\ \ \ t\rightarrow+\infty,
\end{equation}
where $\tau$ is $t$-independent. The compute this constant, we now utilise \eqref{R20}. First, taking limiting values from the $(-)$ side for definiteness, and using $f=[f_1,f_2]^{\top},F=[F_1,F_2]^{\top}$, and $G=[G_1,G_2]^{\top}$, we have
\begin{align*}
	F(x)=&\,\,X_-(x)f(x)\stackrel{\eqref{R21}}{=}S_-(x)\e^{-g_-(x)\sigma_3}f(x)\stackrel{\eqref{R22}}{=}T(x)\begin{bmatrix}1 & 0\\ \eta(x)\e^{-2\im tx+2g_-(x)} & 1\end{bmatrix}\e^{-g_-(x)\sigma_3}f(x)\\
	=&\,\,T(x)\e^{-g_-(x)\sigma_3}\sqrt{\frac{\gamma\phi(x)}{2\pi\im}}\begin{bmatrix}\e^{\im tx}\\ \e^{-\im tx}(1-\gamma\phi(x))^{-1}\end{bmatrix},\ \ x\in\mathbb{R}.
\end{align*}
Similarly
\begin{equation*}
	G(x)=\big(T^{-1}(x)\big)^{\top}\e^{g_-(x)\sigma_3}\sqrt{\frac{\gamma\phi(x)}{2\pi\im}}\begin{bmatrix}\e^{-\im tx}(1-\gamma\phi(x))^{-1}\\ -\e^{\im tx}\end{bmatrix},\ \ x\in\mathbb{R}.
\end{equation*}
Here, by H\"older continuity of $\phi$ and Plemelj's formula,
\begin{equation*}
	g_-(x)=\frac{1}{2}\ln\big(1-\gamma\phi(x)\big)-\frac{1}{2\pi\im}\,\textnormal{pv}\int_{-\infty}^{\infty}\ln\big(1-\gamma\phi(\lambda)\big)\frac{\d\lambda}{\lambda-x},\ \ x\in\mathbb{R}.
\end{equation*}
Focusing thus on the innermost integrand in \eqref{R20}, we find as $t\rightarrow+\infty$, using in particular \eqref{R23},
\begin{equation*}
	\big(F'(x)\big)^{\top}G(x)=-\frac{t\gamma}{\pi}\frac{\partial}{\partial\gamma}\ln\big(1-\gamma\phi(x)\big)-\frac{1}{2\pi^2}\frac{\gamma\phi(x)}{1-\gamma\phi(x)}\frac{\d}{\d x}\left[\textnormal{pv}\int_{-\infty}^{\infty}\ln\big(1-\gamma\phi(\lambda)\big)\frac{\d\lambda}{\lambda-x}\right]+\mathcal{O}\big(\gamma|\phi(x)|t^{-\infty}\big),
\end{equation*}
uniformly for any $\gamma\in[0,1]$ and $x\in\mathbb{R}$. Here,
\begin{equation*}
	\int_0^1\left[\int_{-\infty}^{\infty}\frac{t\gamma}{\pi}\frac{\partial}{\partial\gamma}\ln\big(1-\gamma\phi(x)\big)\d x\right]\frac{\d\gamma}{\gamma}=-ts(0),
\end{equation*}
and, integrating by parts,
\begin{align}
	\frac{1}{2\pi^2}&\,\int_{-\infty}^{\infty}\frac{\phi(x)}{1-\gamma\phi(x)}\frac{\d}{\d x}\left[\textnormal{pv}\int_{-\infty}^{\infty}\ln\big(1-\gamma\phi(\lambda)\big)\frac{\d\lambda}{\lambda-x}\right]\d x=-\frac{1}{2\pi^2}\int_{-\infty}^{\infty}\Bigg(\frac{\d}{\d x}\bigg[\frac{\phi(x)}{1-\gamma\phi(x)}\bigg]\Bigg)\nonumber\\
	&\times\Bigg(\textnormal{pv}\int_{-\infty}^{\infty}\ln\big(1-\gamma\phi(\lambda)\big)\frac{\d\lambda}{\lambda-x}+\im\pi\ln\big(1-\gamma\phi(x)\big)\Bigg)\d x,\label{R26}
\end{align}
where we used that, by evenness of $x\mapsto\phi(x)$,
\begin{equation*}
	\int_{-\infty}^{\infty}\frac{\d}{\d x}\bigg[\frac{\phi(x)}{1-\gamma\phi(x)}\bigg]\ln\big(1-\gamma\phi(x)\big)\d x=0\ \ \ \ \forall\,\gamma\in[0,1].
\end{equation*}
Now, deforming the $x$-integration contour in \eqref{R26} to $\mathbb{R}+\im\epsilon$ by Cauchy's theorem, we deduce
\begin{align}
	\textnormal{RHS in}\ \eqref{R26}=&\,-\frac{1}{2\pi^2}\int_{\mathbb{R}+\im\epsilon}\Bigg(\frac{\d}{\d z}\bigg[\frac{\phi(z)}{1-\gamma\phi(z)}\bigg]\Bigg)\Bigg(\int_{-\infty}^{\infty}\ln\big(1-\gamma\phi(\lambda)\big)\frac{\d\lambda}{\lambda-z}\Bigg)\d z\nonumber\\
	=&\,\frac{1}{2\pi^2}\int_{\mathbb{R}+\im\epsilon}\frac{\phi(z)}{1-\gamma\phi(z)}\Bigg(\int_{-\infty}^{\infty}\ln\big(1-\gamma\phi(\lambda)\big)\frac{\d\lambda}{(\lambda-z)^2}\Bigg)\d z,\label{R27}
\end{align}
after another integration by parts. Since $(\lambda-z)^{-2}=-\int_0^{\infty}x\e^{\im(z-\lambda)x}\d x$ for $\lambda\in\mathbb{R},z\in\mathbb{R}+\im\epsilon$, we furthermore find
\begin{align*}
	\textnormal{RHS in}\ \eqref{R27}=\frac{1}{2}\int_0^{\infty}x\bigg[\frac{\partial}{\partial\gamma}s(x;\gamma\phi)\bigg]s(-x;\gamma\phi)\d x=\frac{1}{4}\frac{\partial}{\partial\gamma}\bigg[\int_0^{\infty}xs(x;\gamma\phi)s(-x;\gamma\phi)\d x\bigg].
\end{align*}
In summary, as $t\rightarrow+\infty$,
\begin{equation*}
	\ln D_Q(t)=-\int_0^1\bigg[\int_{-\infty}^{\infty}\big(F'(x)\big)^{\top}G(x)\d x\bigg]\frac{\d\gamma}{\gamma}=-ts(0)+\frac{1}{4}\int_0^{\infty}xs(x;\phi)s(-x;\phi)\d x+\mathcal{O}\big(t^{-\infty}\big),
\end{equation*}
which identities $\tau$ back in \eqref{R25} to be of the form
\begin{equation}\label{R28}
	\tau=\frac{1}{4}\int_0^{\infty}xs(x;\phi)s(-x;\phi)\d x.
\end{equation}
Moving on to \eqref{c30} and \eqref{c31}, we have
\begin{equation*}
	\omega(t)=2\im\int_0^tX_1^{21}(s,\phi)\d s\stackrel{\eqref{R21}}{=}2\im\int_0^tS_1^{21}(s,\phi)\d s\stackrel{\eqref{R23}}{=}2\im\int_0^{\infty}S_1^{21}(s,\phi)\d s+\mathcal{O}\big(t^{-\infty}\big),\ \ t\rightarrow+\infty,
\end{equation*}
and are thus in need of the total integral
\begin{equation*}
	P=P(\phi):=2\im\int_0^{\infty}X_1^{21}(s,\phi)\d s=-2\im\int_0^{\infty}X_1^{12}(s,\phi)\d s.
\end{equation*}
Similar to our calculations surrounding \eqref{R113}, we begin with the Zhakarov-Shabat system
\begin{equation}\label{R29}
	\frac{\partial W}{\partial t}=\bigg\{\im z\sigma_3+2\im\begin{bmatrix}0&X_1^{21}\\ X_1^{21} &0\end{bmatrix}\bigg\}W,
\end{equation}
for $W(z;t,\phi):=X(z;t,\phi)\e^{\im tz\sigma_3},z\in\mathbb{C}\setminus\mathbb{R}$. What results is that
\begin{equation*}
	V(t;\phi):=\lim_{\substack{z\rightarrow 0\\ \Im z<0}}X(z;t,\phi),\ \ \ t>0,
\end{equation*}
equals
\begin{equation}\label{R30}
	V(t;\phi)=\begin{bmatrix}\cosh\nu & \sinh\nu\\ \sinh\nu & \cosh\nu\end{bmatrix}V(t_0;\phi),\ \ \ \ \ \nu=\nu(t,t_0,\phi):=2\im\int_{t_0}^tX_1^{21}(s,\phi)\d s,\ \ \ t,t_0>0.
\end{equation}
Since $P$ is the limit of $\nu(t,t_0,\phi)$ as $t\rightarrow+\infty$ and $t_0\downarrow 0$, we begin by computing
\begin{align*}
	\lim_{t\rightarrow+\infty}V(t,\phi)=\lim_{t\rightarrow+\infty}&\left\{\lim_{\substack{z\rightarrow 0\\ \Im z<0}}X(z;t,\phi)\right\}\stackrel{\eqref{R22}}{=}\lim_{t\rightarrow+\infty}T(0;t,\phi)\e^{-g_-(0)\sigma_3}\begin{bmatrix}1&0\\ \eta(0)&1\end{bmatrix}\\
	\stackrel{\eqref{R23}}{=}&\,\e^{-g_-(0)\sigma_3}\begin{bmatrix}1&0\\ \eta(0)&1\end{bmatrix}=\begin{bmatrix}1&0\\ \phi(0)&1\end{bmatrix}\big(1-\phi(0)\big)^{-\frac{1}{2}\sigma_3}.
\end{align*}
On the other hand,
\begin{equation*}
	\lim_{t_0\downarrow 0}V(t_0,\phi)=\lim_{t_0\downarrow 0}\left\{\lim_{\substack{z\rightarrow 0\\ \Im z<0}}X(z;t_0,\phi)\right\}\stackrel{\eqref{Rz}}{=}\frac{1}{2}\begin{bmatrix}2+\phi(0)&-\phi(0)\\ \phi(0) & 2-\phi(0)\end{bmatrix},
\end{equation*}
and so in total,
\begin{equation*}
	V(+\infty,\phi)V(0,\phi)^{-1}=\frac{1}{2\sqrt{1-\phi(0)}}\begin{bmatrix}2-\phi(0)& \phi(0)\\ \phi(0) & 2-\phi(0)\end{bmatrix}.
\end{equation*}
In turn, by \eqref{R30},
\begin{equation*}
	\cosh\nu(+\infty,0,\phi)=\frac{2-\phi(0)}{2\sqrt{1-\phi(0)}},\ \ \ \ \ \ \ \sinh\nu(+\infty,0,\phi)=\frac{\phi(0)}{2\sqrt{1-\phi(0)}},
\end{equation*}
which implies
\begin{equation*}
	\exp\left[2\im\int_0^{\infty}X_1^{21}(s,\phi)\d s\right]=\e^{\nu(+\infty,0,\phi)}=\cosh\nu(+\infty,0,\phi)+\sinh\nu(+\infty,0,\phi)=\frac{1}{\sqrt{1-\phi(0)}},
\end{equation*}
and consequently, as $t\rightarrow\infty$,
\begin{align}\label{R31}
	\e^{-\omega(t)}=\sqrt{1-\phi(0)}+\mathcal{O}\big(t^{-\infty}\big),\ \ \cosh^2\bigg(\frac{\omega(t)}{2}\bigg)=\bigg(\frac{1}{2}\sqrt[4]{1-\phi(0)}+\frac{1}{2\sqrt[4]{1-\phi(0)}}\bigg)^2+\mathcal{O}\big(t^{-\infty}\big).
\end{align}
We are left to summarise our workings.
\begin{proof}[Proof of \eqref{c33} and \eqref{c34}] Simply combine \eqref{R25},\eqref{R28}, and \eqref{R31} with \eqref{c30},\eqref{c31} to arrive at \eqref{c33} and \eqref{c34}.
\end{proof}

\section{Comment about multiplicative Hankel composition kernels}\label{notknow}

One might expect that, besides \eqref{c8},\eqref{c13} and \eqref{c21},\eqref{c25}, there exists a third class of main kernels which firstly induce algebraic simplifications of the Fredholm Pfaffian as in \eqref{c11},\eqref{c16} and \eqref{c23},\eqref{c26}, and secondly admit a Riemann-Hilbert problem characterisation. This is not quite the case, however, as we will now explain, specifying to the symplectic derived class. A similar discussion can be held in the orthogonal derived class.

\subsection*{A symplectic derived example} We begin with the following concrete example in which $\Delta=(0,t)\subset\mathbb{R}$, with $t>0$ and, as in \cite[Theorem $1.1$]{DGKV},
\begin{equation}\label{z35}
	S(x,y):=\frac{1}{2}\bigg(\frac{1}{4}\int_0^1J_{\alpha}(\sqrt{xu})J_{\alpha}(\sqrt{uy})\d u-\frac{J_{\alpha-1}(\sqrt{x})}{8\sqrt{x}}\int_0^yJ_{\alpha+1}(\sqrt{z})\frac{\d z}{\sqrt{z}}\bigg),\ \ \ \ x,y\in(0,t),
\end{equation}
with $J_{\alpha}:(0,\infty)\rightarrow\mathbb{R}$ as a Bessel function of order $\alpha>1$. The asymmetric appearance of $\alpha$ in \eqref{z35} is a new feature unseen in \eqref{c8}  but a necessary one, as shown in the following proof workings.
\begin{prop}\label{Bessfish} If $S,G,H:L^2(0,t)\rightarrow L^2(0,t)$ denote the integral operators with kernels \eqref{z35},\eqref{z6} and
\begin{equation}\label{z355}
	H(x,y)=\int_0^xS(z,y)\d z,
\end{equation}
then all three are trace class and we have that
\begin{equation*}
	G(x,y)=-G(y,x),\ \ \ \ \ \ H(x,y)=-H(y,x)\ \ \ \ \textnormal{on}\ \ \ (0,t)\times(0,t).
\end{equation*}
\end{prop}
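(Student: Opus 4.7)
The plan is to mirror the template of Proposition~\ref{pz3}: realise each of $S$, $G$, $H$ as the sum of a Hilbert--Schmidt Bessel composition operator plus a finite-rank operator, which yields trace class on $L^2(0,t)$, and then establish skew-symmetry of $G$ and $H$ by direct computation. The computations are driven by the two standard Bessel recurrences
\begin{equation*}
	\frac{d}{dz}\bigl[z^\alpha J_\alpha(z)\bigr]=z^\alpha J_{\alpha-1}(z),\qquad \frac{d}{dz}\bigl[z^{-\alpha} J_\alpha(z)\bigr]=-z^{-\alpha}J_{\alpha+1}(z),
\end{equation*}
the three-term relation $J_{\alpha-1}(z)+J_{\alpha+1}(z)=2\alpha J_\alpha(z)/z$, and the Lommel identity
\begin{equation*}
	\int_0^1 J_\alpha(v\sqrt{x})J_\alpha(v\sqrt{y})\,v\,dv=\frac{\sqrt{x}J_{\alpha+1}(\sqrt{x})J_\alpha(\sqrt{y})-\sqrt{y}J_\alpha(\sqrt{x})J_{\alpha+1}(\sqrt{y})}{x-y},
\end{equation*}
the latter obtained from the change of variables $v=\sqrt{u}$ applied to the composition in \eqref{z35}.

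First I would verify trace class for $S$. The change of variables $v=\sqrt{u}$ rewrites the Bessel composition in \eqref{z35} as $\tfrac{1}{4}AA^{\ast}$ with $A:L^2(0,1)\to L^2(0,t)$ having kernel $\sqrt{v}J_\alpha(v\sqrt{x})$, and $A$ is Hilbert--Schmidt thanks to $J_\alpha(z)\sim z^\alpha/(2^\alpha\Gamma(\alpha+1))$ at the origin and boundedness of $J_\alpha$ on compact sets. The second summand is rank-one with factors $\psi(x):=J_{\alpha-1}(\sqrt{x})/\sqrt{x}$ and $\Xi(y):=\int_0^y J_{\alpha+1}(\sqrt{z})\,dz/\sqrt{z}$; the hypothesis $\alpha>1$ is precisely what makes $\psi\in L^2(0,t)$, since $\psi(x)=O(x^{\alpha/2-1})$ at the origin, while $\Xi$ is trivially square integrable. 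For $G$ one differentiates the Bessel composition in $y$ and uses the second recurrence above to express the outcome once more as a (now order-shifted) Bessel composition plus rank-one remainder; for $H$ one integrates the composition in $x$ via the first recurrence, again producing a composition plus rank-one piece. Thus all three operators are trace class.

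Next, skew-symmetry of $G(x,y)=-\partial_y S(x,y)$: a direct differentiation followed by integration by parts in the composition variable $u$ shows that $G(x,y)+G(y,x)$ splits into (i) a boundary contribution at $u=1$ proportional to $J_\alpha(\sqrt{x})J_\alpha(\sqrt{y})$ (the boundary term at $u=0$ vanishes by $J_\alpha(0)=0$ for $\alpha>0$), and (ii) the rank-one contribution $\tfrac{1}{16}[\psi(x)\chi(y)+\psi(y)\chi(x)]$ with $\chi(y):=J_{\alpha+1}(\sqrt{y})/\sqrt{y}$. The three-term relation reduces $\psi(x)\chi(y)+\psi(y)\chi(x)$ to an expression involving $J_\alpha(\sqrt{x})J_\alpha(\sqrt{y})$ and the symmetric Lommel kernel, at which point (i) and (ii) cancel. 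The proof of $H(x,y)=-H(y,x)$ follows the same scheme with one additional integration in $x$, the extra boundary contribution at $x=0$ again vanishing by the small-argument asymptotics of $J_\alpha$.

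The main obstacle is the careful bookkeeping of Bessel indices across the integration-by-parts steps: this is the Bessel analogue of the $\phi\leftrightarrow\Phi$ duality driving Proposition~\ref{pz3}, but whereas the primitive of $\phi$ in the Hankel case is a function of the same type, here the natural primitive of $\psi$ lives at Bessel index $\alpha+1$ rather than $\alpha-1$. This index shift explains the asymmetric appearance of $\alpha$ in \eqref{z35} and is exactly what makes the rank-one correction necessary for skew-symmetry to hold. Once the identities are set up correctly, the cancellations are automatic; no genuinely new analytic input is required beyond the standard Bessel toolkit.
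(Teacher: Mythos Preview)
Your trace-class argument is correct and essentially matches the paper's. For skew-symmetry, however, the paper proceeds very differently: it expands $J_\alpha(x)=x^\alpha\sum_k c_k(\alpha)x^{2k}$, writes $G(x,y)+G(y,x)$ and $H(x,y)+H(y,x)$ as explicit double power series in $x,y$, and verifies the vanishing coefficient by coefficient using the Bessel recursions \eqref{z38} and \eqref{z41} (equivalently \eqref{z388}, \eqref{z42}). This bare-hands route also feeds the subsequent Remark, where skew-symmetry within the ansatz class is shown to \emph{force} $J_\alpha$ to be a multiple of the Bessel function; a proof using only closed-form Bessel identities would not reveal that converse.

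Your proposed mechanism via integration by parts in $u$ has a genuine gap. In the additive template of Proposition~\ref{pz3} one has $\partial_x\phi(x+u)=\partial_u\phi(x+u)$, so integration by parts converts $(\partial_x+\partial_y)Q$ directly into the boundary term $-\phi(x)\phi(y)$. In the multiplicative setting here the dilation identity is $x\,\partial_x J_\alpha(\sqrt{xu})=u\,\partial_u J_\alpha(\sqrt{xu})$, so the same manoeuvre yields the Euler relation $(x\partial_x+y\partial_y+1)Q=\tfrac14 J_\alpha(\sqrt{x})J_\alpha(\sqrt{y})$, \emph{not} the quantity $(\partial_x+\partial_y)Q$ that actually enters $G(x,y)+G(y,x)=-\tfrac12(\partial_x+\partial_y)Q+\tfrac{1}{16}[\psi(x)\chi(y)+\psi(y)\chi(x)]$. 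The identity you need is $(\partial_x+\partial_y)Q=\tfrac18[\psi(x)\chi(y)+\psi(y)\chi(x)]$; it is true and can be extracted from the Christoffel--Darboux closed form of $Q$ by differentiating with both Bessel derivative recurrences and then reducing via the three-term relation, but that is a separate calculation of comparable length to the paper's series argument, not an automatic cancellation from a single integration by parts. The same obstruction recurs for $H$.
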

\begin{proof} By special function properties, for all $t>0$, and for all $\alpha>1$,
\begin{equation*}
	\int_0^t\int_0^1\big|J_{\alpha}(\sqrt{xu})\big|^2\d u\d x<\infty,\ \ \ \ \int_0^t\big|J_{\alpha-1}(\sqrt{x})\big|^2\frac{\d x}{x}<\infty,\ \ \ \ \int_0^t\bigg|\int_0^yJ_{\alpha+1}(\sqrt{z})\frac{\d z}{\sqrt{z}}\bigg|^2\d y<\infty,
\end{equation*}
and so $S$ is a linear combination of a Hilbert-Schmidt composition integral operator and a rank-$1$ integral operator, both on $L^2(0,t)$; this makes $S$ trace class on the same space. Since, also for all $t>0$ and $\alpha>1$,
\begin{equation*}
	\int_0^t\int_0^1\frac{u}{y}\big|J_{\alpha}'(\sqrt{uy})\big|^2\d u\d y<\infty,\ \ \ \ \int_0^t\big|J_{\alpha+1}(\sqrt{y})\big|^2\frac{\d y}{y}<\infty,
\end{equation*}
the same goes for the integral operator $G:L^2(0,t)\rightarrow L^2(0,t)$. Lastly, with $\alpha>0$ being admissible,
\begin{equation*}
	\int_0^t\int_0^1\bigg|\int_0^xJ_{\alpha}(\sqrt{zu})\d z\bigg|^2\d u\d x<\infty,\ \ \ \ \int_0^t\bigg|\int_0^xJ_{\alpha-1}(\sqrt{z})\frac{\d z}{\sqrt{z}}\bigg|^2\d x<\infty\ \ \ \ \ \forall\,t>0,
\end{equation*}
$H:L^2(0,t)\rightarrow L^2(0,t)$ is also trace class on $L^2(0,t)$. While the trace class properties of $S,G,H$ rely solely on local integrability of $(0,\infty)\ni x\mapsto J_{\alpha}(x)$ and the endpoint behaviour
\begin{equation*}
	 J_{\alpha}(x)\sim c_0(\alpha)x^{\alpha},\ \ \ x\downarrow 0\ \ \ \ \ \ \textnormal{with some}\ \ \ c_0(\alpha)\neq 0,
\end{equation*}
skew-symmetry of $G(x,y),H(x,y)$ constrains the special functions appearing in \eqref{z35} much more. Indeed, assuming $J_{\alpha}(x)$ in \eqref{z35} is of the form $J_{\alpha}(x)=x^{\alpha}\sum_{k=0}^{\infty}c_k(\alpha)x^{2k}$, uniformly and absolutely convergent on compact subsets of $(0,\infty)$, with some coefficients $c_k(\alpha)$, we have
\begin{equation*}
	S(x,y)=\frac{1}{8}(xy)^{\frac{\alpha}{2}}\sum_{k,\ell=0}^{\infty}\bigg(\frac{c_k(\alpha)c_{\ell}(\alpha)}{k+\ell+\alpha+1}x^ky^{\ell}-\frac{c_k(\alpha-1)c_{\ell}(\alpha+1)}{2\ell+\alpha+2}x^{k-1}y^{\ell+1}\bigg),
\end{equation*}
and so upon $y$-differentiation,
\begin{equation}\label{z36}
	G(x,y)=-\frac{1}{8}(xy)^{\frac{\alpha}{2}}\sum_{k,\ell=0}^{\infty}\bigg(\frac{c_k(\alpha)c_{\ell}(\alpha)}{k+\ell+\alpha+1}\Big(\ell+\frac{\alpha}{2}\Big)x^ky^{\ell-1}-\frac{1}{2}c_k(\alpha-1)c_{\ell}(\alpha+1)x^{k-1}y^{\ell}\bigg),
\end{equation}
followed by, upon $x$-integration,
\begin{equation}\label{z37}
	H(x,y)=\frac{1}{8}(xy)^{\frac{\alpha}{2}}\sum_{k,\ell=0}^{\infty}\bigg(\frac{c_k(\alpha)c_{\ell}(\alpha)}{k+\ell+\alpha+1}\frac{x^{k+1}y^{\ell}}{k+\frac{\alpha}{2}+1}-\frac{c_k(\alpha-1)c_{\ell}(\alpha+1)}{2\ell+\alpha+2}\frac{x^ky^{\ell+1}}{k+\frac{\alpha}{2}}\bigg).
\end{equation}
Assuming the $\alpha$-recursion
\begin{equation}\label{z38}
	c_k(\alpha+1)=\frac{c_k(\alpha)}{k+\alpha+1},\ \ k\in\mathbb{Z}_{\geq 0}\ \ \ \ \ \ \textnormal{with some}\ \ \ c_0(\alpha)\neq 0,
\end{equation}
is in place, we can use the resulting identity
\begin{equation}\label{z388}
	c_k(\alpha)c_{\ell}(\alpha)=c_k(\alpha-1)c_{\ell}(\alpha+1)\frac{\ell+\alpha+1}{k+\alpha},\ \ \ k,\ell\in\mathbb{Z}_{\geq 0},
\end{equation}
to obtain, after first relabelling $k\leftrightarrow\ell$ in the sums for $G(x,y)$ in \eqref{z36},
\begin{equation}\label{z39}
	G(x,y)=\frac{1}{8}(yx)^{\frac{\alpha}{2}}\sum_{k,\ell=0}^{\infty}\bigg(\frac{c_k(\alpha)c_{\ell}(\alpha)}{k+\ell+\alpha+1}\Big(\ell+\frac{\alpha}{2}\Big)d_{k\ell}(\alpha)y^kx^{\ell-1}-\frac{1}{2}c_k(\alpha-1)c_{\ell}(\alpha+1)\frac{x^{\ell}y^{k-1}}{d_{\ell k}(\alpha)}\bigg).
\end{equation}
Here, for any $k,\ell\in\mathbb{Z}_{\geq 0}$,
\begin{align*}
	d_{k\ell}(\alpha)=&\,\frac{k+\ell+\alpha+1}{2\ell+\alpha}\frac{\ell+\alpha}{k+\alpha+1}=1-\frac{\ell(k-\ell+1)}{(2\ell+\alpha)(k+\alpha+1)},\\
	\frac{1}{d_{\ell k}(\alpha)}=&\,\frac{2k+\alpha}{k+\ell+\alpha+1}\frac{\ell+\alpha+1}{k+\alpha}=1+\frac{k(\ell-k+1)}{(k+\alpha)(k+\ell+\alpha+1)},
\end{align*}
and so \eqref{z39} translates into the following identity, valid for $(x,y)\in(0,t)\times(0,t)$,
\begin{equation}\label{z40}
	G(x,y)+G(y,x)=-\frac{1}{16}(xy)^{\frac{\alpha}{2}}\sum_{k,\ell=0}^{\infty}\frac{c_k(\alpha)c_{\ell}(\alpha)}{k+\ell+\alpha+1}\bigg(\frac{\ell(k-\ell+1)}{k+\alpha+1}y^kx^{\ell-1}+\frac{k(\ell-k+1)}{\ell+\alpha+1}x^{\ell}y^{k-1}\bigg).
\end{equation}
For the right hand side in \eqref{z40} to vanish identically, it is sufficient to impose, with \eqref{z38}, the $k$-recursion
\begin{equation}\label{z41}
	c_{k+1}(\alpha)=\frac{c_k(\alpha)}{(k+1)(k+\alpha+1)},\ \ \ k\in\mathbb{Z}_{\geq 0}\ \ \ \ \ \ \textnormal{with some}\ \ \ c_0(\alpha)\neq 0.
\end{equation}
For then
\begin{equation}\label{z42}
	c_{k-1}(\alpha)c_{\ell+1}(\alpha)\frac{\ell+1}{k+\alpha}=c_k(\alpha)c_{\ell}(\alpha)\frac{k}{\ell+\alpha+1},\ \ \ k\in\mathbb{N},\ \ \ell\in\mathbb{Z}_{\geq 0},
\end{equation}
which implies the following formula: 
\begin{align*}
	\sum_{k,\ell=0}^{\infty}\frac{c_k(\alpha)c_{\ell}(\alpha)}{k+\ell+\alpha+1}&\,\frac{\ell(k-\ell+1)}{k+\alpha+1}y^kx^{\ell-1}=\sum_{k=1}^{\infty}\sum_{\ell=0}^{\infty}\frac{c_{k-1}(\alpha)c_{\ell+1}(\alpha)}{k+\ell+\alpha+1}\frac{(\ell+1)(k-\ell-1)}{k+\alpha}y^{k-1}x^{\ell}\\
	&\,=\sum_{k,\ell=0}^{\infty}\frac{c_k(\alpha)c_{\ell}(\alpha)}{k+\ell+\alpha+1}\frac{k(k-\ell-1)}{\ell+\alpha+1}y^{k-1}x^{\ell},
\end{align*}
where we shifted summation indices according to the rule $k\rightarrow k-1,\ell\rightarrow\ell+1$ in the first equality and used \eqref{z42} in the second. In turn, the two remaining sums in the right hand side of \eqref{z40} cancel out identically, and so skew-symmetry of $G(x,y)$ follows. The argument for $H(x,y)$ is analogous: using only \eqref{z388}, one deduces
\begin{align}
	H(x,y)+H(y,x)=\frac{1}{8}(xy)^{\frac{\alpha}{2}}\sum_{k,\ell=0}^{\infty}&\,\frac{c_k(\alpha)c_{\ell}(\alpha)}{k+\ell+\alpha+1}\bigg(\frac{\ell(k-\ell+1)y^{k+1}x^{\ell}}{(2\ell+\alpha)(k+\alpha+1)(k+\frac{\alpha}{2}+1)}\nonumber\\
	&\,\,+\frac{k(\ell-k+1)y^kx^{\ell+1}}{(2k+\alpha)(\ell+\alpha+1)(\ell+\frac{\alpha}{2}+1)}\bigg),\ \ \ (x,y)\in(0,t)\times(0,t),\label{z43}
\end{align}
and then utilises \eqref{z42},
\begin{align*}
	\sum_{k,\ell=0}^{\infty}\frac{c_k(\alpha)c_{\ell}(\alpha)}{k+\ell+\alpha+1}&\,\,\frac{\ell(k-\ell+1)y^{k+1}x^{\ell}}{(2\ell+\alpha)(k+\alpha+1)(k+\frac{\alpha}{2}+1)}=\sum_{k=1}^{\infty}\sum_{\ell=0}^{\infty}\frac{c_{k-1}(\alpha)c_{\ell+1}(\alpha)}{k+\ell+\alpha+1}\frac{(\ell+1)(k-\ell-1)y^kx^{\ell+1}}{(2k+\alpha)(k+\alpha)(\ell+\frac{\alpha}{2}+1)}\\
	=&\,\,\sum_{k,\ell=0}^{\infty}\frac{c_k(\alpha)c_{\ell}(\alpha)}{k+\ell+\alpha+1}\frac{k(k-\ell-1)y^kx^{\ell+1}}{(2k+\alpha)(\ell+\alpha+1)(\ell+\frac{\alpha}{2}+1)}.
\end{align*}
So, the remaining sum in the right hand side of \eqref{z43} vanishes identically, and thus skew-symmetry of $H(x,y)$ is established. Lastly, one can notice that the recursions \eqref{z38} and \eqref{z41} determine $c_k(\alpha)$ to be of the form
\begin{equation*}
	c_k(\alpha)=\frac{c}{k!\,\Gamma(k+\alpha+1)},\ \ \ \ \ \textnormal{with some}\ \ c\neq 0,
\end{equation*}
i.e. Proposition \ref{Bessfish} is now certainly established for the Bessel function choice $J_{\alpha}$, noticing that the proof of skew-symmetry only requires \eqref{z388},\eqref{z42}, rather than \eqref{z38},\eqref{z41}.
\end{proof}
We can now apply \eqref{c3} to \eqref{z35} because of Proposition \ref{Bessfish}. We take $m=1,a_1=0$ and $a_2=t$. Note that, for $\alpha>1$,
\begin{equation*}
	f_1=(I-2S^{\ast})^{-1}H\delta_{a_1}=0\in L^2(0,t),
\end{equation*}
and so the right hand side in \eqref{c3} becomes
\begin{equation}\label{z46}
	D_M(\Delta)=D_{2S}(t)\big(1-\tau(t)\big),\hspace{1cm}\tau(t):=\big((I-2S^{\ast})^{-1}H\big)(t,t),
\end{equation}
where $D_{2S}(t)$ is the Fredholm determinant of $2S$ on $L^2(0,t)$, with the kernel of $S$ written in \eqref{z35}. By the properties of $J_{\alpha}$, $\lim_{t\downarrow 0}\tau(t)=0$, and $\tau(t)$ satisfies the following equation.
\begin{prop} Assume $I-2S^{\ast}$ is invertible on $L^2(0,t)$. Then $\tau(t)$ satisfies
\begin{equation}\label{z47}
	\tau(t)=-\tau(t)-\frac{2\tau(t)^2}{1-2\tau(t)},\ \ \ t>0.
\end{equation}
\end{prop}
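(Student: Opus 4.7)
The plan is to imitate the Sherman-Morrison argument of \eqref{z26}. Specialising the operator identity \eqref{z27} to $m=1$, $a_1=0$, $a_2=t$ on $L^2(0,t)$ gives
\begin{equation*}
	(I-2S^{\ast})H = H\bigl(I-2S-2(\delta_0\otimes\delta_0)H+2(\delta_t\otimes\delta_t)H\bigr).
\end{equation*}
The first thing I would verify is that the boundary term at $0$ drops out: by \eqref{z355} and \eqref{z35}, $(H\delta_0)(x)=H(x,0)=\int_0^xS(z,0)\,\d z=0$, since the Bessel factors in \eqref{z35} force $S(z,0)\equiv 0$ for $\alpha>1$. Consequently $(\delta_0\otimes\delta_0)H$ vanishes, its kernel $\delta_0(x)H(0,y)=-\delta_0(x)(H\delta_0)(y)$ being identically zero. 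The remaining surviving term, via skew-symmetry $H(t,y)=-(H\delta_t)(y)$, is $(\delta_t\otimes\delta_t)H=-\delta_t\otimes(H\delta_t)$, so the identity collapses to
\begin{equation*}
	(I-2S^{\ast})H = H\bigl(I-2S-2\,\delta_t\otimes(H\delta_t)\bigr) \quad\text{on }L^2(0,t).
\end{equation*}

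By the Fredholm alternative, invertibility of $I-2S^{\ast}$ then implies invertibility of $I-2S-2\,\delta_t\otimes(H\delta_t)$, which by Lemma \ref{SM} is equivalent to $1-2\tau(t)\neq 0$, after rewriting (using the real-adjoint identity $((I-2S^{\ast})^{-1})^{\ast}=(I-2S)^{-1}$)
\begin{equation*}
	\tau(t)=\bigl((I-2S^{\ast})^{-1}H\delta_t\bigr)(t)=\int_0^t(H\delta_t)(x)\bigl((I-2S)^{-1}\delta_t\bigr)(x)\,\d x.
\end{equation*}
Applying the Sherman-Morrison identity \eqref{e17} then yields
\begin{equation*}
	\tau(t) = \bigl(H(I-2S)^{-1}\bigr)(t,t) + \frac{2}{1-2\tau(t)}\bigl(H(I-2S)^{-1}\delta_t\bigr)(t)\cdot\bigl((I-2S^{\ast})^{-1}H\delta_t\bigr)(t).
\end{equation*}
Skew-symmetry of $H$ collapses both the leading term and the first bracket in the numerator to $-\tau(t)$, while the last factor is $\tau(t)$ by definition, producing \eqref{z47}.

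No genuine obstacle is expected beyond careful sign bookkeeping. The one new feature compared to \eqref{z26} is the sign flip in the rank-one perturbation, which now appears with coefficient $-2$ rather than $+2$. This traces to the endpoint $t$ sitting at index $k=2$ with $(-1)^k=+1$, as opposed to $k=1$ in the Hankel symplectic setting on $(t,\infty)$, and it is precisely this flip that produces the denominator $1-2\tau(t)$ in the Sherman-Morrison output, and hence the minus sign in front of the quadratic term in \eqref{z47} relative to the plus signs in \eqref{z26}.
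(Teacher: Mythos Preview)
Your proposal is correct and follows essentially the same route as the paper's proof: specialise \eqref{z27} to $m=1$, $a_1=0$, $a_2=t$, drop the boundary contribution at $0$ (the paper relies on the earlier observation $f_1=(I-2S^{\ast})^{-1}H\delta_0=0$, you verify $S(\cdot,0)\equiv 0$ directly), then apply Sherman--Morrison \eqref{e17} and skew-symmetry of $H$ to collapse the terms. Your sign bookkeeping is accurate, and your closing remark on why the denominator is $1-2\tau(t)$ rather than $1+2\tau(t)$ correctly identifies the mechanism.
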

\begin{proof} From \eqref{z27}, we deduce that invertibility of $I-2S^{\ast}$ yields invertibility of $I-2S+2\sum_{k=1}^{2m}(-1)^k(\delta_{a_k}\otimes\delta_{a_k})H$ and so, by Lemma \ref{SM}, once specialised to $m=1,a_1=0,a_2=t$,
\begin{equation*}
	1-2\tau(t)=1-2\big((I-2S^{\ast})^{-1}H\big)(t,t)=1-2\int_0^t(H\delta_t)(x)\big((I-2S)^{-1}\delta_t\big)(x)\d x\neq 0.
\end{equation*}
Also, by \eqref{e17},
\begin{align*}
	\tau(t)=&\,\,\big((I-2S^{\ast})^{-1}H\big)(t,t)\stackrel{\eqref{z27}}{=}\Big(H\big(I-2S-2\delta_t\otimes H\delta_t\big)^{-1}\Big)(t,t)\\
	=&\,\,\big(H(I-2S)^{-1}\big)(t,t)+\frac{2}{1-2\tau(t)}\Big(H(I-2S)^{-1}(\delta_t\otimes H\delta_t)(I-2S)^{-1}\Big)(t,t)\\
	=&\,\,-\tau(t)+\frac{2}{1-2\tau(t)}\big(H(I-2S)^{-1}\delta_t\big)(t)\big((I-2S^{\ast})^{-1}H\delta_t\big)(t)=-\tau(t)-\frac{2\tau(t)^2}{1-2\tau(t)},
\end{align*}
as claimed in \eqref{z47}. The proof is complete.
\end{proof}

\begin{cor} Assume $I-2S^{\ast}$ and $I-Q$ are invertible on $L^2(0,t)$, with $Q:L^2(0,t)\rightarrow L^2(0,t)$ denoting the multiplicative Hankel composition integral operator with trace class Bessel kernel
\begin{equation}\label{z477}
	Q(x,y):=\frac{1}{4}\int_0^1J_{\alpha}(\sqrt{xu})J_{\alpha}(\sqrt{uy})\d u,\ \ \ \ \ x,y\in(0,t).
\end{equation}
Then we have for the Fredholm determinant $D_M(t)$ of $M$ on $L^2(0,t)\oplus L^2(0,t)$ in \eqref{z5},\eqref{z6},\eqref{z35},\eqref{z355},
\begin{equation}\label{z48}
	D_M(t)=D_Q(t)\bigg[1+\frac{1}{2}\int_0^t\big((I-Q)^{-1}\phi\big)(x)\Phi(x)\d x\bigg],\ \ \ \phi(x):=\frac{J_{\alpha-1}(\sqrt{x})}{2\sqrt{x}},\ \Phi(x):=\int_0^x\frac{J_{\alpha+1}(\sqrt{z})}{2\sqrt{z}}\d z,
\end{equation}
where $D_Q(t)$ is the Fredholm determinant of $Q$ with kernel \eqref{z477} on $L^2(0,t)$.
\end{cor}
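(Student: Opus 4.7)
The plan is to combine the reduction already obtained in \eqref{z46} with the simple rank-one structure of $2S$ relative to the Bessel operator $Q$, and then to identify the quadratic \eqref{z47} as forcing $\tau(t)\equiv 0$.

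First, note that \eqref{z47} is a genuine quadratic equation in $\tau(t)$: clearing denominators gives $2\tau(t)(1-2\tau(t))=-2\tau(t)^2$, equivalently $\tau(t)(\tau(t)-1)=0$. Hence, pointwise, $\tau(t)\in\{0,1\}$. Since $\phi\in L^2(0,t)$ and $\Phi$ is bounded on $(0,t)$ by the properties of $J_{\alpha\pm 1}$ for $\alpha>1$, the function $t\mapsto\tau(t)$ is continuous on its domain of definition and satisfies $\lim_{t\downarrow 0}\tau(t)=0$. A continuity argument on the (open) set where $I-2S^{\ast}$ is invertible then forces $\tau(t)=0$ throughout, so that \eqref{z46} reduces to
\begin{equation*}
D_M(t)=D_{2S}(t).
\end{equation*}

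Second, I would exhibit the rank-one decomposition of $2S$ in terms of $Q$, $\phi$ and $\Phi$. Reading off \eqref{z35} and the definitions in \eqref{z48},
\begin{equation*}
(2S)(x,y)=Q(x,y)-\tfrac{1}{2}\phi(x)\Phi(y),\ \ \ x,y\in(0,t),
\end{equation*}
so on $L^2(0,t)$ one has the operator identity $2S=Q-\tfrac{1}{2}\,\phi\otimes\Phi$, with $\phi\in L^2(0,t)$ and $\Phi\in L^{\infty}(0,t)\cap L^2(0,t)$.

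Third, factoring out $I-Q$ (invertible by assumption), I would write
\begin{equation*}
D_{2S}(t)=\det(I-Q+\tfrac{1}{2}\phi\otimes\Phi)=D_Q(t)\,\det\!\Big(I+\tfrac{1}{2}(I-Q)^{-1}\phi\otimes\Phi\Big),
\end{equation*}
and then invoke the standard formula for the Fredholm determinant of a rank-one perturbation, i.e.\ $\det(I+f\otimes g)=1+\langle g,f\rangle$, giving
\begin{equation*}
D_{2S}(t)=D_Q(t)\left[1+\tfrac{1}{2}\int_0^t\big((I-Q)^{-1}\phi\big)(x)\,\Phi(x)\,\d x\right].
\end{equation*}
Combining this with $D_M(t)=D_{2S}(t)$ yields \eqref{z48}.

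The only mildly delicate point is the branch selection for $\tau(t)$: one has to rule out that $\tau$ jumps from $0$ to $1$ on the invertibility set. The hard part is thus not the algebra but the continuity argument, which however follows from the continuous dependence of $((I-2S^{\ast})^{-1}H)(t,t)$ on $t$ together with $\tau(0^+)=0$ and the fact that the two roots $0$ and $1$ are isolated from one another.
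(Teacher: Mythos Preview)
Your proof is correct and follows the same route as the paper: solve the quadratic \eqref{z47} to obtain $\tau(t)\in\{0,1\}$, pick the branch $\tau\equiv 0$ via continuity and $\tau(0^+)=0$, and then factor the resulting $D_{2S}(t)$ through the rank-one decomposition $2S=Q-\tfrac{1}{2}\phi\otimes\Phi$. You have written out explicitly the rank-one algebra and the branch-selection reasoning that the paper leaves implicit, but the argument is essentially identical.
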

\begin{proof} Equation \eqref{z47} admits two solutions, $\tau_1(t)=0$ and $\tau_2(t)=1$, but only one links to $\lim_{t\downarrow 0}\tau(t)=0$. So, \eqref{z46} implies $D_M(\Delta)=D_{2S}(t)$ and here we factor out $I-Q$ from $D_{2S}(t)$. This yields \eqref{z48}.
\end{proof}
Identity \eqref{z48} plays the analogue of \eqref{c11} for \eqref{z35}. Still, \eqref{z48} is not on the same level as \eqref{c11} because of the asymmetric appearance of $\alpha$ in the second factor in the right hand side of \eqref{z48}. What's more, if we attempt to study more general multiplicative Hankel composition kernels of the form
\begin{equation}\label{z488}
	S(x,y)=\frac{1}{2}\bigg(\int_0^1\phi(xu)\phi(uy)\d u-\frac{\psi(x)}{2\sqrt{x}}\int_0^y\vartheta(z)\frac{\d z}{\sqrt{z}}\bigg),\ \ \ \ \ x,y\in(0,t),
\end{equation}
for some appropriate special functions $\phi,\psi,\vartheta:(0,\infty)\rightarrow\mathbb{R}$, then skew-symmetry of $G(x,y)$ and $H(x,y)$ constrains those functions significantly, as argued below.
\begin{rem} What the proof workings of Proposition \ref{Bessfish} show is that, with \eqref{z35}, once $J_{\alpha}$ is of the form 
\begin{equation}\label{z44}
	J_{\alpha}(x)=x^{\alpha}\sum_{k=0}^{\infty}c_k(\alpha)x^{2k},\ \ \ x>0,
\end{equation}
conditions \eqref{z388},\eqref{z42} are sufficient for the skew-symmetry of $G(x,y)$ and $H(x,y)$, and that constrains $J_{\alpha}$ to be essentially a Bessel function. On the other hand, imposing skew-symmetry on $G(x,y)$ and $H(x,y)$ within the class \eqref{z44}, we learn from \eqref{z36},\eqref{z37} that $G(x,x)=H(x,x)\equiv 0$ imply the equations
\begin{align*}
	\sum_{\ell=0}^m\bigg(\frac{c_{m-\ell}(\alpha)c_{\ell}(\alpha)}{m+\alpha+1}-\frac{1}{2}c_{m-\ell}(\alpha-1)c_{\ell}(\alpha+1)\bigg)=&\,\,0,\\
	\sum_{\ell=0}^m\bigg(\frac{c_{m-\ell}(\alpha)c_{\ell}(\alpha)}{m+\alpha+1}\frac{1}{m-\ell+\frac{\alpha}{2}+1}-\frac{c_{m-\ell}(\alpha-1)c_{\ell}(\alpha+1)}{2\ell+\alpha+2}\frac{1}{m-\ell+\frac{\alpha}{2}}\bigg)=&\,\,0,
\end{align*}
valid for all $m\in\mathbb{Z}_{\geq 0}$. These yield linear $\alpha$-recursions from which we deduce, after some manipulations,
\begin{equation}\label{z45}
	c_k(\alpha)=\frac{d_k}{2^{\alpha}\Gamma(k+\alpha+1)},\ \ k\in\mathbb{Z}_{\geq 0}\ \ \ \ \ \textnormal{with some}\ \ (d_k)_{k=0}^{\infty},
\end{equation}
and thus, \eqref{z388} is in fact a consequence of skew-symmetry. Moreover, enforcing $G(x,y)+G(y,x)\equiv 0$ and $H(x,y)+H(y,x)\equiv 0$, \eqref{z40} and \eqref{z43} yield, in particular, on $(0,t)\ni x$,
\begin{align*}
	\sum_{k,\ell=0}^{\infty}\frac{c_k(\alpha)c_{\ell}(\alpha)}{k+\ell+\alpha+1}\bigg(\frac{\ell(k-\ell+1)}{k+\alpha+1}+\frac{k(\ell-k+1)}{\ell+\alpha+1}\bigg)x^{\ell+k}\equiv&\, 0,\\
	\sum_{k,\ell=0}^{\infty}\frac{c_k(\alpha)c_{\ell}(\alpha)}{k+\ell+\alpha+1}\bigg(\frac{\ell(k-\ell+1)}{(2\ell+\alpha)(k+\alpha+1)(k+\frac{\alpha}{2}+1)}+\frac{k(\ell-k+1)}{(2k+\alpha)(\ell+\alpha+1)(\ell+\frac{\alpha}{2}+1)}\bigg)x^{k+\ell}\equiv&\, 0,
\end{align*}
and thus the equations
\begin{align*}
	\sum_{\ell=0}^mc_{m-\ell}(\alpha)c_{\ell}(\alpha)\bigg(\frac{\ell(m-2\ell+1)}{m-\ell+\alpha+1}+\frac{(m-\ell)(2\ell-m+1)}{\ell+\alpha+1}\bigg)=&\,\,0,\\
	\sum_{\ell=0}^mc_{m-\ell}(\alpha)c_{\ell}(\alpha)\bigg(\frac{\ell(m-2\ell+1)}{(2\ell+\alpha)(m-\ell+\alpha+1)(m-\ell+\frac{\alpha}{2}+1)}+\frac{(m-\ell)(2\ell-m+1)}{(2m-2\ell+\alpha)(\ell+\alpha+1)(\ell+\frac{\alpha}{2}+1)}\bigg)=&\,\,0,
\end{align*}
valid for all $m\in\mathbb{Z}_{\geq 0}$. Inserting here \eqref{z45}, we obtain $k$-recursions from which we deduce, after some manipulations,
\begin{equation*}
	d_k=\frac{(-1)^kc}{4^kk!},\ \ \ k\in\mathbb{Z}_{\geq 0}\ \ \ \ \ \textnormal{with some}\ \ c\in\mathbb{R},
\end{equation*}
and so, all together, skew-symmetry of $G(x,y),H(x,y)$ within the class \eqref{z44} necessarily enforces
\begin{equation*}
	c_k(\alpha)=\frac{(-1)^kc}{2^{\alpha}4^kk!\,\Gamma(k+\alpha+1)},
\end{equation*}
i.e. it enforces $J_{\alpha}$ to be a multiple of the Bessel function of order $\alpha$. 
\end{rem}

It is in the sense of the last Remark that the class of kernels \eqref{z488} is narrower when compared to \eqref{c8},\eqref{c21}. We close out this section with a final comment on the relation between \eqref{c8} and a broader class of multiplicative Hankel composition kernels than the special one \eqref{z35}.
\begin{rem} View \eqref{c8} as kernel of $S:L^2(t,\infty)\rightarrow L^2(t,\infty)$ with $t\in\mathbb{R}$ but take $\phi(x)=\psi(\e^{-x})\e^{-\frac{1}{2}x}$, with $\psi:(0,\infty)\rightarrow\mathbb{R}$ reasonably well behaved in it. What results is
\begin{equation}\label{z49}
	S(x,y)=\frac{1}{2}\bigg(\int_0^1\e^{-\frac{1}{2}x}\psi\big(\e^{-x}v\big)\psi\big(v\e^{-y}\big)\e^{-\frac{1}{2}y}\d v-\frac{1}{2}\psi(\e^{-x})\e^{-\frac{1}{2}x}\int_0^{\e^{-y}}\psi(w)\frac{\d w}{\sqrt{w}}\bigg),\ \ \ x,y\in(t,\infty),
\end{equation}
and so, if $\mathcal{M}:L^2(0,\e^{-t})\rightarrow L^2(t,\infty)$ and $\mathcal{M}^{-1}:L^2(t,\infty)\rightarrow L^2(0,\e^{-t})$ denote the bounded linear transformations
\begin{align*}
	(\mathcal{M}f)(x):=&\,\e^{-\frac{1}{2}x}f(\e^{-x}),\ \ \ x\in(t,\infty),\ \ \ f\in L^2(0,\e^{-t});\\
	(\mathcal{M}^{-1}f)(x):=&\,\frac{1}{\sqrt{x}}f(-\ln x),\ \ x\in(0,\e^{-t}),\ \ \ f\in L^2(t,\infty),
\end{align*}
we arrive at the factorisation identity $S=\mathcal{M}T\mathcal{M}^{-1}$, valid on $L^2(t,\infty)$ and where $T:L^2(0,\e^{-t})\rightarrow L^2(0,\e^{-t})$ is the integral operator with kernel
\begin{equation}\label{z50}
	T(x,y):=\frac{1}{2}\bigg(\int_0^1\psi(xv)\psi(vy)\d v-\frac{1}{2}\psi(x)\frac{1}{\sqrt{y}}\int_0^y\psi(w)\frac{\d w}{\sqrt{w}}\bigg),\ \ \ x,y\in(0,\e^{-t}).
\end{equation}
The similarities between \eqref{z35} and \eqref{z50} are apparent and Fredholm determinants of scalar multiplicative Hankel composition operators of the type \eqref{z50} were investigated in \cite{Bo}. However, it is not immediately clear if $S=\mathcal{M}T\mathcal{M}^{-1}$ could be used in the definition of an appropriate matrix operator \eqref{z5} with $S$ of the type \eqref{z50} and appropriate skew-symmetric $G$ and $H$ so that $G(x,y)=-\frac{\partial}{\partial y}S(x,y)$ and $\frac{\partial}{\partial x}H(x,y)=S(x,y)$.

\end{rem}


\begin{appendix}
\section{One useful definition and one useful tool}\label{appA}

Fredholm determinants on $L^2(X,\d\mu)$, for some separable measure space, are well-known objects, and \cite{S0,GGK} standard references for their properties. Fredholm Paffians on $L^2(X,\d\mu)\oplus L^2(X,\d\mu)$, on the other hand, are perhaps less well-known. Here is an easy-to-work-with definition.
\begin{definition} If $K(x,y)$ in \eqref{a3} is a skew-symmetric $2\times 2$ matrix-valued kernel that induces a trace class integral operator $K$ on $L^2(X,\d\mu)\oplus L^2(X,\d\mu)$, then the Fredholm Pfaffian of $K$ on $L^2(X,\d\mu)\oplus L^2(X,\d\mu)$ is defined as
\begin{equation}\label{p1}
	\textnormal{pf}\,(J-K):=1+\sum_{\ell=1}^{\infty}\frac{(-1)^{\ell}}{\ell!}\int_{X^{\ell}}\textnormal{Pf}\big[K(x_j,x_k)\big]_{j,k=1}^{\ell}\prod_{j=1}^{\ell}\d\mu(x_j),
\end{equation}
where $\textnormal{Pf}$ evaluates the ordinary Paffian of the $2\ell\times 2\ell$ skew-symmetric matrix $[K(x_j,x_k)]_{j,k=1}^{\ell}$.

\end{definition}

Lastly, we list the following infinite-dimensional version of the Sherman-Morrison identity.
\begin{lem}\label{SM} Suppose $K$ is a trace class integral operator on $L^2(X,\d\mu)$ for some separable measure space. Assume $I-K$ is invertible. Then its rank-one perturbation $I-K+\alpha\otimes\beta$ with $\alpha,\beta\in L^2(X,\d\mu)$ is also invertible if and only if
\begin{equation*}
	\langle \beta,(I-K)^{-1}\alpha\rangle:=\int_X\beta(y)\big((I-K)^{-1}\alpha\big)(y)\d\mu(y)\neq -1.
\end{equation*}
Moreover, if $I-K+\alpha\otimes\beta$ is invertible on $L^2(X,\d\mu)$, then
\begin{equation}\label{e17}
	(I-K+\alpha\otimes\beta)^{-1}=(I-K)^{-1}-\frac{(I-K)^{-1}(\alpha\otimes\beta)(I-K)^{-1}}{1+\langle\beta,(I-K)^{-1}\alpha\rangle}.
\end{equation}
\end{lem}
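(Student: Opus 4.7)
The plan is to reduce the problem to a rank-one perturbation of the identity and then verify everything by a direct algebraic computation with the projection-like identity $(u\otimes v)^2=\langle v,u\rangle\,(u\otimes v)$.

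First, since $I-K$ is invertible, I factor
\begin{equation*}
	I-K+\alpha\otimes\beta=(I-K)\bigl(I+\widetilde\alpha\otimes\beta\bigr),\qquad \widetilde\alpha:=(I-K)^{-1}\alpha\in L^2(X,\d\mu),
\end{equation*}
using that $(I-K)^{-1}(\alpha\otimes\beta)f=(I-K)^{-1}\alpha\,\langle\beta,f\rangle=\bigl(\widetilde\alpha\otimes\beta\bigr)f$. Thus invertibility of $I-K+\alpha\otimes\beta$ on $L^2(X,\d\mu)$ is equivalent to invertibility of the rank-one perturbation $I+\widetilde\alpha\otimes\beta$, and the problem is reduced to understanding the latter.

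Next, I exploit that $(\widetilde\alpha\otimes\beta)^2=\langle\beta,\widetilde\alpha\rangle\,(\widetilde\alpha\otimes\beta)$, and set $c:=1+\langle\beta,\widetilde\alpha\rangle=1+\langle\beta,(I-K)^{-1}\alpha\rangle$. If $c\ne 0$, direct multiplication gives
\begin{equation*}
	\bigl(I+\widetilde\alpha\otimes\beta\bigr)\!\left(I-\frac{1}{c}\,\widetilde\alpha\otimes\beta\right)=I+\Bigl(1-\frac{1}{c}-\frac{\langle\beta,\widetilde\alpha\rangle}{c}\Bigr)\widetilde\alpha\otimes\beta=I,
\end{equation*}
and the analogous product on the other side yields $I$ as well; hence $I+\widetilde\alpha\otimes\beta$ is invertible with the stated inverse. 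Conversely, if $c=0$ and $\alpha\ne 0$, then $\widetilde\alpha\ne 0$ (because $(I-K)^{-1}$ is invertible) and
\begin{equation*}
	\bigl(I+\widetilde\alpha\otimes\beta\bigr)\widetilde\alpha=\widetilde\alpha\bigl(1+\langle\beta,\widetilde\alpha\rangle\bigr)=0,
\end{equation*}
so $\widetilde\alpha$ lies in the kernel and invertibility fails. (The case $\alpha=0$ is trivial and consistent with the scalar condition $0\ne -1$.) This establishes the claimed criterion.

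Finally, when $c\ne 0$, I multiply the above explicit inverse on the right by $(I-K)^{-1}$ to recover $(I-K+\alpha\otimes\beta)^{-1}$, and use the rewriting
\begin{equation*}
	\bigl(\widetilde\alpha\otimes\beta\bigr)(I-K)^{-1}f=\widetilde\alpha\,\langle\beta,(I-K)^{-1}f\rangle=(I-K)^{-1}(\alpha\otimes\beta)(I-K)^{-1}f,
\end{equation*}
which gives precisely the formula \eqref{e17}. There is no real obstacle here; the only point requiring a little care is the equivalence statement, for which one must separately verify the ``only if'' direction by exhibiting the explicit kernel vector $\widetilde\alpha$ when $c=0$. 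Everything else reduces to the quadratic relation satisfied by any rank-one operator.
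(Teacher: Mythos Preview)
Your proof is correct and follows essentially the same strategy as the paper: both factor $I-K+\alpha\otimes\beta=(I-K)\bigl(I+(I-K)^{-1}\alpha\otimes\beta\bigr)$ and verify the inverse formula by direct multiplication. The one minor difference is in the ``only if'' direction: the paper appeals to the Fredholm determinant identity $\det(I-K+\alpha\otimes\beta)=\det(I-K)\bigl[1+\langle\beta,(I-K)^{-1}\alpha\rangle\bigr]$ together with \cite[Theorem~3.5]{S}, whereas you exhibit the explicit kernel vector $\widetilde\alpha$ when $c=0$. Your route is slightly more elementary in that it does not actually use the trace-class hypothesis on $K$; the paper's route is more in keeping with the ambient determinant machinery.
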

\begin{proof} One part is an explicit calculation: assuming $1+\langle\beta,(I-K)^{-1}\alpha\rangle\neq 0$, one shows that the right hand side of \eqref{e17} constitutes a right inverse of $I-K+\alpha\otimes\beta$ and that the same right hand side is a bounded linear operator on $L^2(X,\d\mu)$. The other part uses the factorization
\begin{equation*}
	I-K+\alpha\otimes\beta=(I-K)\big(I+(I-K)^{-1}\alpha\otimes\beta\big),
\end{equation*}
which yields
\begin{equation*}
	\det(I-K+\alpha\otimes\beta)=\det(I-K)\Big[1+\langle\beta,(I-K)^{-1}\alpha\rangle\Big],
\end{equation*}
and thus says $I-K+\alpha\otimes\beta$ invertible if and only if $1+\langle\beta,(I-K)^{-1}\alpha\rangle\neq 0$, see \cite[Theorem $3.5$]{S}. The proof is complete.
\end{proof}

\end{appendix}


\begin{bibsection}
\begin{biblist}

\bib{A}{article}{
AUTHOR = {Ahiezer, N. I.},
     TITLE = {A functional analogue of some theorems on {T}oeplitz matrices},
   JOURNAL = {Ukrain. Mat. \v{Z}.},
  FJOURNAL = {Akademija Nauk Ukrainsko\u{\i} SSR. Institut Matematiki. Ukrainski\u{\i}
              Matemati\v{c}eski\u{\i} \v{Z}urnal},
    VOLUME = {16},
      YEAR = {1964},
     PAGES = {445--462},
      ISSN = {0041-6053},
   MRCLASS = {42.25},
  MRNUMBER = {170172},
MRREVIEWER = {A. Devinatz},
}

\bib{AGZ}{article}{
AUTHOR = {Anderson, Greg W.},
author={Guionnet, Alice},
author={Zeitouni, Ofer},
     TITLE = {An introduction to random matrices},
    SERIES = {Cambridge Studies in Advanced Mathematics},
    VOLUME = {118},
 PUBLISHER = {Cambridge University Press, Cambridge},
      YEAR = {2010},
     PAGES = {xiv+492},
      ISBN = {978-0-521-19452-5},
   MRCLASS = {60B20 (46L53 46L54)},
  MRNUMBER = {2760897},
MRREVIEWER = {Terence Tao},
}

\bib{BB}{article}{
AUTHOR = {Baik, Jinho},
author={Bothner, Thomas},
     TITLE = {Edge distribution of thinned real eigenvalues in the real
              {G}inibre ensemble},
   JOURNAL = {Ann. Henri Poincar\'{e}},
  FJOURNAL = {Annales Henri Poincar\'{e}. A Journal of Theoretical and
              Mathematical Physics},
    VOLUME = {23},
      YEAR = {2022},
    NUMBER = {11},
     PAGES = {4003--4056},
      ISSN = {1424-0637},
   MRCLASS = {60B20 (45M05 60G70)},
  MRNUMBER = {4496599},
MRREVIEWER = {Florent Benaych-Georges},
       DOI = {10.1007/s00023-022-01182-0},
       URL = {https://doi-org.bris.idm.oclc.org/10.1007/s00023-022-01182-0},
}
\bib{BBDI}{article}{
AUTHOR = {Baik, Jinho},
author={Buckingham, Robert},
author={DiFranco, Jeffery},
author={Its, Alexander},
     TITLE = {Total integrals of global solutions to {P}ainlev\'{e} {II}},
   JOURNAL = {Nonlinearity},
  FJOURNAL = {Nonlinearity},
    VOLUME = {22},
      YEAR = {2009},
    NUMBER = {5},
     PAGES = {1021--1061},
      ISSN = {0951-7715},
   MRCLASS = {33E17 (30E25 34M55 35Q15 37K15)},
  MRNUMBER = {2501035},
       DOI = {10.1088/0951-7715/22/5/006},
       URL = {https://doi-org.bris.idm.oclc.org/10.1088/0951-7715/22/5/006},
}

\bib{Bo}{article}{
AUTHOR = {Bothner, Thomas},
     TITLE = {A {R}iemann-{H}ilbert approach to {F}redholm determinants of
              {H}ankel composition operators: scalar-valued kernels},
   JOURNAL = {J. Funct. Anal.},
  FJOURNAL = {Journal of Functional Analysis},
    VOLUME = {285},
      YEAR = {2023},
    NUMBER = {12},
     PAGES = {Paper No. 110160, 109},
      ISSN = {0022-1236},
   MRCLASS = {47B35 (30E25 35Q15 37J35 41A60 45B05 70H06)},
  MRNUMBER = {4643784},
MRREVIEWER = {Jordi Pau},
       DOI = {10.1016/j.jfa.2023.110160},
       URL = {https://doi-org.bris.idm.oclc.org/10.1016/j.jfa.2023.110160},
}

\bib{CT}{article}{
AUTHOR = {Claeys, Tom},
author={Tarricone, Sofia},
     TITLE = {On the integrable structure of deformed sine kernel
              determinants},
   JOURNAL = {Math. Phys. Anal. Geom.},
  FJOURNAL = {Mathematical Physics, Analysis and Geometry. An International
              Journal Devoted to the Theory and Applications of Analysis and
              Geometry to Physics},
    VOLUME = {27},
      YEAR = {2024},
    NUMBER = {1},
     PAGES = {Paper No. 3, 35},
      ISSN = {1385-0172},
   MRCLASS = {33E17 (35P25 35Q15 60G55)},
  MRNUMBER = {4695839},
       DOI = {10.1007/s11040-024-09476-x},
       URL = {https://doi-org.bris.idm.oclc.org/10.1007/s11040-024-09476-x},
}

\bib{D}{article}{
AUTHOR = {Deift, P. A.},
     TITLE = {Applications of a commutation formula},
   JOURNAL = {Duke Math. J.},
  FJOURNAL = {Duke Mathematical Journal},
    VOLUME = {45},
      YEAR = {1978},
    NUMBER = {2},
     PAGES = {267--310},
      ISSN = {0012-7094},
   MRCLASS = {47A10 (34G10 76Q05 78A45 81E05 82A15)},
  MRNUMBER = {495676},
       URL = {http://projecteuclid.org.bris.idm.oclc.org/euclid.dmj/1077312819},
}

\bib{DG}{book}{
AUTHOR = {Deift, Percy},
author={Gioev, Dimitri},
     TITLE = {Random matrix theory: invariant ensembles and universality},
    SERIES = {Courant Lecture Notes in Mathematics},
    VOLUME = {18},
 PUBLISHER = {Courant Institute of Mathematical Sciences, New York; American
              Mathematical Society, Providence, RI},
      YEAR = {2009},
     PAGES = {x+217},
      ISBN = {978-0-8218-4737-4},
   MRCLASS = {60B20 (15B52 47N50 60-02 60E05 62H99)},
  MRNUMBER = {2514781},
MRREVIEWER = {Djalil Chafa\"{\i}},
       DOI = {10.1090/cln/018},
       URL = {https://doi-org.bris.idm.oclc.org/10.1090/cln/018},
}

\bib{DGKV}{article}{
AUTHOR = {Deift, P.},
author={Gioev, D.},
author={Kriecherbauer, T.},
author={Vanlessen, M.},
     TITLE = {Universality for orthogonal and symplectic {L}aguerre-type
              ensembles},
   JOURNAL = {J. Stat. Phys.},
  FJOURNAL = {Journal of Statistical Physics},
    VOLUME = {129},
      YEAR = {2007},
    NUMBER = {5-6},
     PAGES = {949--1053},
      ISSN = {0022-4715},
   MRCLASS = {82B44},
  MRNUMBER = {2363388},
MRREVIEWER = {Vladislav Kargin},
       DOI = {10.1007/s10955-007-9325-x},
       URL = {https://doi-org.bris.idm.oclc.org/10.1007/s10955-007-9325-x},
}

\bib{DZ}{article}{
AUTHOR = {Deift, P.},
author={Zhou, X.},
     TITLE = {A steepest descent method for oscillatory {R}iemann-{H}ilbert
              problems. {A}symptotics for the {MK}d{V} equation},
   JOURNAL = {Ann. of Math. (2)},
  FJOURNAL = {Annals of Mathematics. Second Series},
    VOLUME = {137},
      YEAR = {1993},
    NUMBER = {2},
     PAGES = {295--368},
      ISSN = {0003-486X},
   MRCLASS = {35Q53 (34A55 34L25 35Q15 35Q55)},
  MRNUMBER = {1207209},
MRREVIEWER = {Alexey V. Samokhin},
       DOI = {10.2307/2946540},
       URL = {https://doi-org.bris.idm.oclc.org/10.2307/2946540},
}

\bib{FTZ}{article}{
AUTHOR = {FitzGerald, Will},
author={Tribe, Roger},
author={Zaboronski, Oleg},
     TITLE = {Asymptotic expansions for a class of {F}redholm {P}faffians
              and interacting particle systems},
   JOURNAL = {Ann. Probab.},
  FJOURNAL = {The Annals of Probability},
    VOLUME = {50},
      YEAR = {2022},
    NUMBER = {6},
     PAGES = {2409--2474},
      ISSN = {0091-1798},
   MRCLASS = {60G55 (47N30 60B20 60J90 82C22)},
  MRNUMBER = {4499280},
       DOI = {10.1214/22-aop1586},
       URL = {https://doi-org.bris.idm.oclc.org/10.1214/22-aop1586},
}

\bib{GGK}{book}{
AUTHOR = {Gohberg, Israel},
author={Goldberg, Seymour},
author={Krupnik, Nahum},
     TITLE = {Traces and determinants of linear operators},
    SERIES = {Operator Theory: Advances and Applications},
    VOLUME = {116},
 PUBLISHER = {Birkh\"{a}user Verlag, Basel},
      YEAR = {2000},
     PAGES = {x+258},
      ISBN = {3-7643-6177-8},
   MRCLASS = {47B10 (45B05 45P05 47A53 47G10 47L10)},
  MRNUMBER = {1744872},
MRREVIEWER = {Hermann K\"{o}nig},
       DOI = {10.1007/978-3-0348-8401-3},
       URL = {https://doi-org.bris.idm.oclc.org/10.1007/978-3-0348-8401-3},
}

\bib{GR}{book}{
AUTHOR = {Gradshteyn, I. S.},
author={Ryzhik, I. M.},
     TITLE = {Table of integrals, series, and products},
   EDITION = {Seventh},
      NOTE = {Translated from the Russian,
              Translation edited and with a preface by Alan Jeffrey and
              Daniel Zwillinger,
              With one CD-ROM (Windows, Macintosh and UNIX)},
 PUBLISHER = {Elsevier/Academic Press, Amsterdam},
      YEAR = {2007},
     PAGES = {xlviii+1171},
      ISBN = {978-0-12-373637-6; 0-12-373637-4},
   MRCLASS = {00A22 (33-00 65-00 65A05)},
  MRNUMBER = {2360010},
}

\bib{IIKS}{article}{
AUTHOR = {Its, A. R.},
author={Izergin, A. G.},
author={Korepin, V. E.},
author={Slavnov, N. A.},
     TITLE = {Differential equations for quantum correlation functions},
 BOOKTITLE = {Proceedings of the {C}onference on {Y}ang-{B}axter
              {E}quations, {C}onformal {I}nvariance and {I}ntegrability in
              {S}tatistical {M}echanics and {F}ield {T}heory},
   JOURNAL = {Internat. J. Modern Phys. B},
  FJOURNAL = {International Journal of Modern Physics B},
    VOLUME = {4},
      YEAR = {1990},
    NUMBER = {5},
     PAGES = {1003--1037},
      ISSN = {0217-9792},
   MRCLASS = {82B10 (35Q40 58G40 82C10)},
  MRNUMBER = {1064758},
MRREVIEWER = {Anatoliy Prykarpatsky},
       DOI = {10.1142/S0217979290000504},
       URL = {https://doi-org.bris.idm.oclc.org/10.1142/S0217979290000504},
}
\bib{Ka}{article}{
AUTHOR = {Kac, M.},
     TITLE = {Toeplitz matrices, translation kernels and a related problem
              in probability theory},
   JOURNAL = {Duke Math. J.},
  FJOURNAL = {Duke Mathematical Journal},
    VOLUME = {21},
      YEAR = {1954},
     PAGES = {501--509},
      ISSN = {0012-7094},
   MRCLASS = {42.4X},
  MRNUMBER = {62867},
MRREVIEWER = {W. Rudin},
       URL = {http://projecteuclid.org.bris.idm.oclc.org/euclid.dmj/1077465879},
}

\bib{KBI}{book}{
AUTHOR = {Korepin, V. E.},
author={Bogoliubov, N. M.},
author={Izergin, A. G.},
     TITLE = {Quantum inverse scattering method and correlation functions},
    SERIES = {Cambridge Monographs on Mathematical Physics},
 PUBLISHER = {Cambridge University Press, Cambridge},
      YEAR = {1993},
     PAGES = {xx+555},
      ISBN = {0-521-37320-4; 0-521-58646-1},
   MRCLASS = {81U40 (81-02 81T40 82-02 82B10)},
  MRNUMBER = {1245942},
MRREVIEWER = {Makoto Idzumi},
       DOI = {10.1017/CBO9780511628832},
       URL = {https://doi-org.bris.idm.oclc.org/10.1017/CBO9780511628832},
}

\bib{Ko}{article}{
title={Truncated Wiener-Hopf operators with Fisher Hartwig singularities},
author={Karol Kozlowski},
year={2008},
eprint={https://arxiv.org/abs/0805.3902},
archivePrefix={arXiv},
primaryClass={math.FA}
}

\bib{Kra}{article}{
AUTHOR = {Krajenbrink, Alexandre},
     TITLE = {From {P}ainlev\'{e} to {Z}akharov-{S}habat and beyond: {F}redholm
              determinants and integro-differential hierarchies},
   JOURNAL = {J. Phys. A},
  FJOURNAL = {Journal of Physics. A. Mathematical and Theoretical},
    VOLUME = {54},
      YEAR = {2021},
    NUMBER = {3},
     PAGES = {Paper No. 035001, 51},
      ISSN = {1751-8113},
   MRCLASS = {37K10 (34M55 37J65)},
  MRNUMBER = {4209129},
MRREVIEWER = {Andrew Pickering},
       DOI = {10.1088/1751-8121/abd078},
       URL = {https://doi-org.bris.idm.oclc.org/10.1088/1751-8121/abd078},
}

\bib{OQR}{article}{
AUTHOR = {Ortmann, Janosch},
author={Quastel, Jeremy},
author={Remenik, Daniel},
     TITLE = {A {P}faffian representation for flat {ASEP}},
   JOURNAL = {Comm. Pure Appl. Math.},
  FJOURNAL = {Communications on Pure and Applied Mathematics},
    VOLUME = {70},
      YEAR = {2017},
    NUMBER = {1},
     PAGES = {3--89},
      ISSN = {0010-3640},
   MRCLASS = {82C22 (05A15 60K35)},
  MRNUMBER = {3581823},
       DOI = {10.1002/cpa.21644},
       URL = {https://doi-org.bris.idm.oclc.org/10.1002/cpa.21644},
}

\bib{R}{article}{
title={Correlation functions for symmetrized increasing subsequences}, 
      author={Eric Rains},
      year={2000},
      eprint={https://arxiv.org/abs/math/0006097},
      archivePrefix={arXiv},
      primaryClass={math.CO}
}

\bib{S0}{book}{
AUTHOR = {Simon, Barry},
     TITLE = {Operator theory},
    SERIES = {A Comprehensive Course in Analysis, Part 4},
 PUBLISHER = {American Mathematical Society, Providence, RI},
      YEAR = {2015},
     PAGES = {xviii+749},
      ISBN = {978-1-4704-1103-9},
   MRCLASS = {47-01 (34-01 35-01 42B35 42B37 43-01 46-01 81-01)},
  MRNUMBER = {3364494},
MRREVIEWER = {Fritz Gesztesy},
       DOI = {10.1090/simon/004},
       URL = {https://doi-org.bris.idm.oclc.org/10.1090/simon/004},
}

\bib{S}{book}{
AUTHOR = {Simon, Barry},
     TITLE = {Trace ideals and their applications},
    SERIES = {Mathematical Surveys and Monographs},
    VOLUME = {120},
   EDITION = {Second},
 PUBLISHER = {American Mathematical Society, Providence, RI},
      YEAR = {2005},
     PAGES = {viii+150},
      ISBN = {0-8218-3581-5},
   MRCLASS = {47L20 (47A40 47A55 47B10 47B36 47E05 81Q15 81U99)},
  MRNUMBER = {2154153},
MRREVIEWER = {Pavel B. Kurasov},
       DOI = {10.1090/surv/120},
       URL = {https://doi-org.bris.idm.oclc.org/10.1090/surv/120},
}

\bib{TW}{article}{
AUTHOR = {Tracy, Craig A.},
author={Widom, Harold},
     TITLE = {On orthogonal and symplectic matrix ensembles},
   JOURNAL = {Comm. Math. Phys.},
  FJOURNAL = {Communications in Mathematical Physics},
    VOLUME = {177},
      YEAR = {1996},
    NUMBER = {3},
     PAGES = {727--754},
      ISSN = {0010-3616},
   MRCLASS = {82B44 (15A52 47N55 60H25)},
  MRNUMBER = {1385083},
MRREVIEWER = {Oleksiy Khorunzhiy},
       URL = {http://projecteuclid.org.bris.idm.oclc.org/euclid.cmp/1104286442},
}

\bib{X}{article}{
AUTHOR = {Xu, Shuai-Xia},
     TITLE = {Asymptotics of the finite-temperature sine kernel determinant},
   JOURNAL = {Comm. Math. Phys.},
  FJOURNAL = {Communications in Mathematical Physics},
    VOLUME = {406},
      YEAR = {2025},
    NUMBER = {4},
     PAGES = {Paper No. 71, 52},
      ISSN = {0010-3616},
   MRCLASS = {33C10 (33C60 34M55 41A60 60K35)},
  MRNUMBER = {4874977},
       DOI = {10.1007/s00220-025-05245-1},
       URL = {https://doi-org.bris.idm.oclc.org/10.1007/s00220-025-05245-1},
}

\end{biblist}
\end{bibsection}

\end{document}